\documentclass[11pt]{article}

\usepackage{amsthm,amsmath,amssymb,bbm}
\usepackage{natbib}
\usepackage{multirow}
\usepackage[pdftex]{graphicx}
\usepackage{subfigure}
\usepackage{makecell}
\usepackage{booktabs}
\usepackage{array}
\usepackage{tabularx}
\usepackage{caption}
\usepackage{booktabs}
\usepackage{url}
\usepackage{algorithm}
\usepackage{algorithmic}
\usepackage{bm}
\usepackage{lipsum}
\usepackage{mathrsfs}
\usepackage{dsfont}
\usepackage{txfonts}


\usepackage[usenames,dvipsnames,svgnames,table]{xcolor}
\usepackage[colorlinks,
linkcolor=red,
anchorcolor=blue,
citecolor=blue
]{hyperref}

\usepackage{smile}

\newcommand*{\var}{\textnormal{var}}

\newcommand{\nn}{\nonumber}



\def\##1\#{\begin{align}#1\end{align}}
\def\$#1\${\begin{align*}#1\end{align*}}


\usepackage{relsize}
\newcommand{\T}{{\mathsmaller {\rm T}}}

\def\sn{\sum_{i=1}^n}

\newcommand{\pr}{\mathbb{P}}

\newcommand{\loc}{{{\rm loc}}}
\newcommand{\lsc}{{{\rm lsc}}}
\newcommand{\imd}{{\mathsmaller {\rm imd}}}

\def\sn{\sum_{i=1}^n}


\newcommand{\wt}{\widetilde}
\newcommand{\wh}{\widehat}
\newcommand{\bfsym}[1]{\ensuremath{\boldsymbol{#1}}}
       \def \bbeta    {\bfsym{\beta}}


\newcommand{\Rom}[1]{\text{\uppercase\expandafter{\romannumeral #1\relax}}}

\usepackage{geometry}
 \geometry{
 a4paper,
 left=28mm,
 top=30mm,
 }
\textwidth=6in

\usepackage[shortlabels]{enumitem}

\numberwithin{equation}{section}

\begin{document}

\title{Distributed Adaptive Huber Regression}

\author{Jiyu Luo\thanks{Department of Family Medicine and Public Health, University of California, San
Diego, La Jolla, CA 92093, USA. E-mail: \href{mailto:jil130@ucsd.edu}{\textsf{jil130@ucsd.edu}}.},~~~Qiang Sun\thanks{Department of Statistical Sciences, University of Toronto, Toronto, ON M5S 3G3, Canada. E-mail: \href{mailto:qsun@utstat.toronto.edu}{\textsf{qsun@utstat.toronto.edu}}.}~~~and~~Wen-Xin Zhou\thanks{Department of Mathematics, University of California, San Diego, La Jolla, CA 92093, USA. E-mail:  \href{mailto:wez243@ucsd.edu}{\textsf{wez243@ucsd.edu}}.} }
\date{}

\maketitle

\begin{abstract}
Distributed data naturally arise in scenarios involving multiple sources of observations,  each  stored at a different location. Directly pooling  all the data together is often prohibited due to limited bandwidth and storage, or due to  privacy protocols. This paper introduces a new robust distributed algorithm for fitting linear regressions when data are subject to heavy-tailed and/or asymmetric errors with finite second moments. The algorithm only communicates gradient information at each iteration, and therefore is communication-efficient. Statistically, the resulting estimator achieves the centralized nonasymptotic error bound as if all the data were pooled together and came from a distribution with sub-Gaussian tails. Under a finite $(2+\delta)$-th moment condition, we derive a Berry-Esseen bound for the distributed estimator, based on which we construct  robust confidence intervals.  Numerical studies further confirm that compared with extant distributed methods, the proposed methods achieve near-optimal accuracy with low variability and better coverage with tighter confidence width.
\end{abstract}

\noindent
{\bf Keywords}:  
Adaptive Huber regression; Communication efficiency; Distributed inference; Heavy-tailed distribution; Nonasymptotic analysis.

\section{Introduction}
\label{sec:1} 
 
In many applications, there are a massive number of individual agents/organizations collecting data independently. Multiple-site research has brought the possibility  of  studying rare outcome that require  larger sample sizes, accelerating more generalizable findings, and bringing together investigators with different expertise from various backgrounds \citep{sidransky2009multicenter}. 
Due to limited resources, such as bandwidth and storage, or privacy concerns, researchers across different sites are only allowed to share summary statistics without allowing collaborating parties to access raw data \citep{W2012}.
Moreover, the collected data may often be contaminated by high level of noise,  and thus of low quality. For example, in the  context of gene expression data analysis, it has been observed  that some gene expression levels have kurtosis values much larger than $3$, despite of the normalization methods used \citep{wang2015high}. It is therefore important to develop robust and distributed learning algorithms with controlled communication cost and desirable statistical performance, measured by both efficiency and robustness.

Distributed learning algorithms have received considerable attention for multi-source studies in the past decade. Due to privacy concerns, data collected at each source, such as node, sensor or organization, must remain local. The goal is to develop efficient statistical learning methods that allow shared analyses or summary statistics without sharing individual level data.
The classical divide-and-conquer principle is based on aggregating local estimators, that is, estimators computed separately on local machines, to form a final estimator; see, for example, \cite{chen2012split},  \cite{LLL2013}, \cite{ZDW2015}, \cite{ZCL2016},  \cite{RN2016},  \cite{LLST2017},  \cite{BFLLZ2018} and \cite{VCC2019},  among many others. We refer to \cite{HC2018} for a more complete literature review.
One-step averaging takes one communication round, and therefore is convenient and has minimal communication cost.    However,  in order for the averaging estimator to achieve to same convergence rate as the centralized estimator, each local machine must have access to at least $\sqrt{N}$ samples, where $N$ is the total sample size.  This limits the number of machines allowed in the communication network.

To overcome this barrier of one-step averaging, multi-round procedures have been proposed for distributed data analysis with a large number of local agents \citep{SSZ2014, WKSZ2017,JLY2018,wang2019distributed}.  For linear and generalized linear models,  \cite{WKSZ2017} and \cite{JLY2018}  proposed multi-round distributed (penalized) $M$-estimators that achieve optimal rates of convergence under very mild constraints on the number of machines.  \cite{CLZ2019} studied an iterative algorithm with proper smoothing for quantile regression under memory constraint, which may also apply under distributed computing platform.
 Alternatively, \cite{DS2018} proposed an iterative weighted parameter averaging scheme for distributed linear regression when the dimension is comparable to the sample size.
 
For linear models under data parallelism, most of the existing distributed algorithms work with the least squares method, either by (weighted) averaging local least squares estimators or iteratively minimizing shifted (penalized) least squares loss functions. From a robustness viewpoint,  distributed least squares based method inherits the sensitivity (non-robustness) of its centralized counterpart to  the tails of the error distributions, hence increasing the variability of the estimator.
 In this paper, we propose a robust distributed algorithm for linear regression with heavy-tailed errors.
Our setup includes the heteroscedastic linear model with asymmetric errors,  to which the least absolute deviation (LAD) regression does not naturally apply.   Following the terminology in \cite{C2012}, the type of ``robustness" considered in this paper is quantified by nonasymptotic exponential deviation of the estimator versus polynomial tail of the error distribution.  The ensuing procedure does sacrifice a fair amount of robustness to adversarial contamination of the data.  The motivation of this work is different from and should not be confused with the classical notion of robust statistics \citep{HR2009}.

The distributed method is built upon the iterative,  multi-round algorithm proposed by  \cite{WKSZ2017} and \cite{JLY2018}, which only communicates gradient information at each round and therefore is communication-efficient.
By a delicate choice of local and global robustifications parameters,  the proposed estimator  satisfies exponential-type deviation bounds when the errors only have finite variance. Specifically,  we show that the distributed estimator, obtained by a few rounds of communications,   achieves the optimal centralized deviation bound  as if the data were pooled together and subject to sub-Gaussian errors. 
The robustification parameters are also self-tuned, making the algorithm computationally convenient. 
We further derive a Berry-Esseen bound for the distributed estimator, based on which we construct  robust confidence intervals.  Finally,  we propose a distributed penalized adaptive Huber regression estimator for high-dimensional sparse models, and establish its (near-)optimal theoretical guarantees.

\medskip
\noindent
{\sc Notation}: For each integer $k\geq 1$, we use $\RR^k$ to denote the the $k$-dimensional Euclidean space. The inner product of two vectors $u=(u_1, \ldots, u_k)^\T, v=(v_1, \ldots ,v_k)^\T \in \RR^k$ is defined by $u^\T v = \langle u, v \rangle= \sum_{i=1}^k u_i v_i$.
We use $\| \cdot \|_p$ $(1\leq p \leq \infty)$ to denote the $\ell_p$-norm in $\RR^k$: $\| u \|_p = ( \sum_{i=1}^k | u_i |^p )^{1/p}$ and $\| u \|_\infty = \max_{1\leq i\leq k} |u_i|$. 
For any $k\times k$ symmetric matrix $A \in \RR^{k\times k}$, $\| A \|_2$ is the operator norm of $A$. For a positive semidefinite matrix $A \in \RR^{k\times k}$,  $\| \cdot \|_{A}$ denotes the norm induced by $A$, that is, $\| u \|_{A} = \| A^{1/2} u \|_2$, $u \in \RR^k$.  
Moreover,  we use $\mathbb{S}^{k-1}= \{ u \in \RR^k: \| u \|_2 = 1\}$ to denote the unit sphere in $\RR^k$.
For two sequences of non-negative numbers $\{ a_n \}_{n\geq 1}$ and $\{ b_n \}_{n\geq 1}$, $a_n \lesssim b_n$ indicates that there exists a constant $C>0$ independent of $n$ such that $a_n \leq Cb_n$; $a_n \gtrsim b_n$ is equivalent to $b_n \lesssim a_n$; $a_n \asymp b_n$ is equivalent to $a_n \lesssim b_n$ and $b_n \lesssim a_n$.

\section{Distributed Adaptive Huber Regression}
\label{sec2}

\subsection{Distributed Huber regression with adaptive robustification parameters}
\label{sec2.1}

Consider a linear regression model
\#
	y_i =   x_i^\T \beta^* + \varepsilon_i  , \quad  \EE( \varepsilon_i | x_i ) = 0 , \ \ i=1,\ldots, N, \label{linear.model}
\#
where $x_i = (x_{i1}, \ldots, x_{ip})^\T$ with $x_{i1} \equiv 1$ is the covariate for the $i$th individual, and $\beta^* \in \RR^p$  is the underlying coefficient vector. 
This setting allows conditional heteroscedastic models,  where $\varepsilon_i$ can depend on $x_i$. For example,  in a local-scale model we have $\varepsilon_i = \sigma(x_i) e_i$, where $\sigma(x_i)$ is a function of $x_i$, and $e_i$ is independent of $x_i$. 
In the absence of normality assumption on the (conditional) error distribution, Huber's $M$-estimator \citep{H1973} is one of the most widely used robust alternative to the least squares estimator.
Given some $\tau>0$,  referred to as the {robustification parameter},  Huber's regression $M$-estimator  for estimating $\beta^*$ is defined as
\$
\wh \beta  = \wh \beta_\tau \in \argmin_{\beta \in \RR^p}     ~\wh \cL_\tau(\beta)  := \frac{1}{N} \sum_{i=1}^N \ell_\tau(y_i -   x_i^\T  \beta  ),
\$
where $	\ell_\tau(u) = 0.5 u^2 I(|u| \leq \tau) +  (\tau |u| - 0.5 \tau^2) I(|u|>\tau) $ is the Huber loss.  Traditionally, $\tau$ is often chosen to be $1.345 \sigma$ with $\sigma$ either determined by a robust scale estimate or simultaneously estimated by solving a system of equations,
in order to achieve 95\% asymptotic relative efficiency  while gaining robustness when there are contaminated or heavy-tailed symmetric errors \citep{B1975,W1995}.  In the presence of asymmetric heavy-tailed errors, \cite{FLW2017} and \cite{SZF2020} proposed (regularized) adaptive Huber regression estimators with $\tau$ scaling with the sample size and parametric dimension, and established exponential-type deviation bounds when $\varepsilon_i$'s only have finite $(1+\delta)$-th moments for some $0<\delta\leq 1$.

In the linear model \eqref{linear.model},  we allow heteroscedastic errors that are of the form $\varepsilon_i = \sigma(x_i) e_i$, where $\sigma(\cdot)$ is an unknown function on $\RR^p$ and $e_i$ is independent of $x_i$.   When the error variables $\varepsilon_i$ are heavy-tailed,  asymmetric and have finite variance $\sigma^2$,  \cite{SZF2020} showed that Huber's estimator $\wh \beta_\tau$ with $\tau \asymp  \sigma \sqrt{N/ (p+\log N)}$, referred to as the adaptive Huber regression (AHR) estimator,  exhibits sharp finite-sample deviation properties \citep{C2012}, while the least squares estimator is far less concentrated around $\beta^*$. We say $\varepsilon_i$ is heavy-tailed if it has infinite $k$-th absolute moment for some $k>2$.   

In  the distributed setting,  assume that the overall dataset $\{ (y_i, x_i) \}_{i=1}^N$ is stored on $m$ node machines,  one central machine and $m-1$ local machines that connected to the central. For $j =1,\ldots, m$, the $j$th machine stores a subsample of $n_j$ observations, denoted by $\{ (y_i, x_i) \}_{i\in \cI_j}$, and $\cI_j$'s are disjoint index sets that satisfy $\cup_{j=1}^m \cI_j = \{ 1,\ldots, N\}$ and $N= \sum_{j=1}^m | \cI_j |  =\sum_{j=1}^m n_j$.  Without loss of generality, we assume $n_1 = \cdots = n_j = n $ and $N= n \cdot m$ is divisible by $m$. We thus refer to $n$ as the local sample size.
When the entire dataset is available,  the optimal $\tau$  scales with the total sample size $N$ and dimension $p$  for optimal bias and robustness tradeoff. With decentralized data,  each local machine only has access to a subsample, so that the  ``locally optimal" $\tau$ depends on the local sample size. This, however, will lead to sub-optimal bounds for the aggregated estimator because $\tau$ is not large enough to offset the bias. To parallelize AHR in a distributed setting without compromising statistical optimality,  we introduce two robustification parameters $\tau$ and $\kappa$, referred to as the global and local robustifiation  parameters, and define the global and local Huber loss functions as $	 \wh \cL_\tau (\beta)   = (1/N) \sum_{i=1}^N \ell_\tau(y_i - x_i^\T \beta )$ and $\wh \cL_{j, \kappa } (\beta) = (1/n) \sum_{i \in \cI_j}  \ell_\kappa( y_i - x_i^\T \beta )$ for $j = 1,\ldots , m .$
Using this adaptive robustification procedure, we then extend the approximate Newton-type method \citep{SSZ2014, JLY2018} to robust regression with heavy-tailed skewed errors.  

Starting with an initial estimator $\wt \beta^{(0)}$ of $\beta^*$, we define the shifted  adaptive Huber loss
\#
	\wt \cL(\beta)  & =  \wh  \cL_{1,\kappa} (\beta)  - \big\langle \nabla  \wh  \cL_{1,\kappa } ( \wt \beta^{(0)} ) - \nabla  \wh \cL_\tau( \wt \beta^{(0)}),  \beta    \big\rangle   \nn \\
	& =  \wh  \cL_{1,\kappa} (\beta)  - \Big\langle \nabla  \wh  \cL_{1,\kappa } ( \wt \beta^{(0)} ) -  \frac{1}{m} \sum_{j=1}^m \wh \cL_{j, \tau } ( \wt \beta^{(0)}),  \beta   \Big\rangle  ,   \ \ \beta \in \RR^p.  \label{surrogate.loss}
\#
Implicitly the shifted loss $\wt \cL(\cdot)$ depends on both local and global robustification parameters $\kappa$ and $\tau$.
It uses data available only on the first machine, used as the central machine, along with $p$-dimensional gradient vectors $\wh \cL_{j,\kappa}(\wt \beta^{(0)})$ ($j=2,\ldots, m)$ that were sent from the remaining local machines.  
The ensuing one-step estimator is given by
\#
	\wt \beta^{(1)} = \wt \beta^{(1)} _{\kappa ,\tau} \in \argmin_{\beta \in \RR^p}  	\wt \cL(\beta) .  \label{one.step.huber}
\#
This procedure requires one communication round of $O(pm)$ bits, and thus is communication-efficient. To investigate the statistical properties of $\wt \beta^{(1)}$, we impose the following moment  condition on the data generating process.
 
\noindent
(C1).
The predictor $x \in \RR^p$ is {sub-Gaussian}: there exists $ \upsilon_1\geq (2\log 2)^{-1/2}$ such that $\PP( | z^\T u |    \geq \upsilon_1    t  ) \leq  2e^{-t ^2/2}$ for every unit vector $u \in \mathbb{S}^{p-1}$ and $t \geq 0$, where $z = \Sigma^{-1/2} x$ and $\Sigma = \EE(x x^\intercal)$ is positive definite. Moreover, the regression error $\varepsilon$ satisfies $\EE(\varepsilon |x) = 0$ and $\EE(\varepsilon^2 | x ) \leq \sigma^2$ almost surely.

For prespecified parameters $r, r_*>0$, define the events
\#
\cE_0 (r) = \big\{ \wt \beta^{(0)} \in \Theta(r) \big\} ~~\mbox{ and }~~  \cE_*(r_*)  =\big\{   \|   \nabla \wh  \cL_\tau(\beta^* )  \|_{\Omega} \leq r_* \big\}  , \label{event0}
\#
where $\Theta(r)  :=   \{  \beta\in \RR^p: \| \beta - \beta^* \|_{\Sigma} \leq r   \}$ and $\Omega := \Sigma^{-1}$.
Here $r$ quantifies the statistical accuracy of the initial estimator $\wt \beta^{(0)} $,  and $r^*$ determines the estimation error  of the centralized AHR estimator which essentially depends on the score $ \nabla \wh  \cL_\tau(\beta^* )$ with the global robustification parameter.
 
\begin{theorem} \label{thm:one-step}
Assume Condition~(C1) holds.  For any $u>0$, let the robustification parameters satisfy  $\tau \geq \kappa \asymp \sigma\sqrt{n/(p+u)}$,  and suppose the local sample size satisfies $n\gtrsim p+u$.
Then, conditioned on the event $\cE_0(r_0) \cap  \cE_*(r_*)$ with $8 r_* \leq r_0  \leq  \sigma$, the one-step estimator $\wt \beta^{(1)}$ defined in \eqref{one.step.huber} satisfies 
\begin{gather}
	 \| \wt \beta^{(1)} - \beta^*  \|_{\Sigma} \lesssim  \sqrt{\frac{p+u}{n}} \cdot r_0     +   r_* \label{one-step.error} \quad\text{and}\\
    \|   \wt \beta^{(1)} - \beta^*  +\Sigma^{-1}\nabla \wh \cL_\tau(\beta^* )   \|_{\Sigma} \lesssim  \sqrt{\frac{p+u}{n}} \cdot    r_0   \label{one-step.bahadur}
\end{gather}
with probability at least $1-3 e^{-u}$.
\end{theorem}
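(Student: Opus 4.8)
Since $\wt\cL$ is convex and smooth, $\wt\beta^{(1)}$ is the unique solution of $\nabla\wt\cL(\wt\beta^{(1)})=0$, which by \eqref{surrogate.loss} is $\nabla\wh\cL_{1,\kappa}(\wt\beta^{(1)})=\nabla\wh\cL_{1,\kappa}(\wt\beta^{(0)})-\nabla\wh\cL_\tau(\wt\beta^{(0)})$. Write $g^*=\nabla\wh\cL_\tau(\beta^*)$, $\delta=\wt\beta^{(1)}-\beta^*$, $\delta_0=\wt\beta^{(0)}-\beta^*$; subtracting $\nabla\wh\cL_{1,\kappa}(\beta^*)$ from both sides and inserting $\pm\nabla\wh\cL_\tau(\beta^*)$ on the right, the first-order condition becomes
\[
\nabla\wh\cL_{1,\kappa}(\beta^*+\delta)-\nabla\wh\cL_{1,\kappa}(\beta^*)=-\,g^*+\Delta,\qquad\Delta:=\big[\nabla\wh\cL_{1,\kappa}(\beta^*+\delta_0)-\nabla\wh\cL_{1,\kappa}(\beta^*)\big]-\big[\nabla\wh\cL_\tau(\beta^*+\delta_0)-\nabla\wh\cL_\tau(\beta^*)\big].
\]
On $\cE_*(r_*)$ we have $\|g^*\|_\Omega\le r_*$, so the argument rests on three facts: (i) a priori $\wt\beta^{(1)}\in\Theta(r_1)$ for some $r_1\asymp r_0\lesssim\sigma$ --- by convexity of $\wt\cL$ it is enough that $\langle\nabla\wt\cL(\beta),\beta-\beta^*\rangle>0$ on $\{\|\beta-\beta^*\|_\Sigma=r_1\}$, which follows from the crude bounds $\|\nabla\wh\cL_{1,\kappa}(\wt\beta^{(0)})-\nabla\wh\cL_{1,\kappa}(\beta^*)\|_\Omega\lesssim r_0$ and $\|\nabla\wh\cL_\tau(\wt\beta^{(0)})\|_\Omega\lesssim r_*+r_0\lesssim r_0$ (using $\cE_0(r_0)\cap\cE_*(r_*)$ and $8r_*\le r_0$) together with fact (ii); (ii) restricted strong convexity of the local Huber loss; (iii) the cancellation estimate $\|\Delta\|_\Omega\lesssim\sqrt{(p+u)/n}\,r_0$.

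For (ii), writing $\psi_\kappa(t)=\min(|t|,\kappa)\,\sgn(t)$ so that $\nabla\wh\cL_{1,\kappa}(\beta)=-n^{-1}\sum_{i\in\cI_1}\psi_\kappa(y_i-x_i^\T\beta)x_i$, I would show that with probability at least $1-e^{-u}$ there is an absolute constant $c_->0$ with
\[
\big\langle\nabla\wh\cL_{1,\kappa}(\beta)-\nabla\wh\cL_{1,\kappa}(\beta^*),\,\beta-\beta^*\big\rangle\ \ge\ c_-\,\|\beta-\beta^*\|_\Sigma^2\qquad\text{for all }\beta\in\Theta(r_1).
\]
Nonnegativity follows from monotonicity of $\psi_\kappa$; the quantitative bound is obtained by restricting each summand to the event $\{|\varepsilon_i|\le\kappa/2,\ |x_i^\T(\beta-\beta^*)|\le\kappa/2\}$, where $\psi_\kappa$ acts linearly, and controlling the deficit by a covering/peeling argument over $\Theta(r_1)$ using the sub-Gaussianity of $x$ in Condition~(C1). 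The calibration $\kappa\asymp\sigma\sqrt{n/(p+u)}$, the constraint $r_1\lesssim\sigma$, and $n\gtrsim p+u$ make the truncation event have probability $\lesssim(p+u)/n$ and the empirical-process remainder $\lesssim\sqrt{(p+u)/n}$, so the population $\Sigma$-curvature dominates. This is the analogue of the curvature lemmas in \cite{SZF2020} and \cite{WKSZ2017}.

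Fact (iii) is the crux and the main obstacle. I would prove the uniform bound: with probability at least $1-e^{-u}$,
\[
\sup_{\beta\in\Theta(r_0)}\big\|\,\nabla\wh\cL_{1,\kappa}(\beta)-\nabla\wh\cL_{1,\kappa}(\beta^*)-\big[\nabla\wh\cL_\tau(\beta)-\nabla\wh\cL_\tau(\beta^*)\big]\,\big\|_\Omega\ \lesssim\ \sqrt{\tfrac{p+u}{n}}\;r_0,
\]
and then specialize to $\beta=\wt\beta^{(0)}\in\Theta(r_0)$ on $\cE_0(r_0)$. The mechanism is that each of the two increments $\nabla\wh\cL_{1,\kappa}(\beta)-\nabla\wh\cL_{1,\kappa}(\beta^*)$ and $\nabla\wh\cL_\tau(\beta)-\nabla\wh\cL_\tau(\beta^*)$ equals $\Sigma(\beta-\beta^*)$ plus an error of order $\sqrt{(p+u)/n}\,\|\beta-\beta^*\|_\Sigma$, the error splitting into a deterministic truncation bias --- bounded via $\EE(\varepsilon^2\mid x)\le\sigma^2$ and $\kappa,\tau\gtrsim\sigma\sqrt{n/(p+u)}$ --- and a centered empirical process; subtracting the $\kappa$- and $\tau$-increments, the common leading term $\Sigma(\beta-\beta^*)$ cancels and only the two errors survive. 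For the empirical-process part, since $\psi_\kappa,\psi_\tau$ are bounded by $\kappa,\tau$ and $x$ is sub-Gaussian, a Bernstein-type tail bound combined with a covering argument over $\Theta(r_0)$ gives local oscillation $\lesssim\sqrt{(p+u)/n}\,\|\beta-\beta^*\|_\Sigma$. The delicate points are that $\nabla\wh\cL_{1,\kappa}$ uses only the $n$ observations on machine~$1$ while $\nabla\wh\cL_\tau$ pools all $N=nm$, and that the two truncation levels differ; one must verify that both mismatches contribute only to lower-order terms, which is exactly where the joint choice $\tau\ge\kappa\asymp\sigma\sqrt{n/(p+u)}$ is used.

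Finally, combining the pieces: taking the inner product of the rearranged first-order condition with $\delta\in\Theta(r_1)$, using (ii) on the left and Cauchy--Schwarz ($\langle u,v\rangle\le\|u\|_\Omega\|v\|_\Sigma$) with $\|g^*\|_\Omega\le r_*$ and $\|\Delta\|_\Omega\lesssim\sqrt{(p+u)/n}\,r_0$ on the right yields $c_-\|\delta\|_\Sigma^2\le\big(r_*+C\sqrt{(p+u)/n}\,r_0\big)\|\delta\|_\Sigma$, i.e.\ \eqref{one-step.error}. For \eqref{one-step.bahadur}, write the left side of the rearranged identity as $\Sigma\delta+R$ with $R=\nabla\wh\cL_{1,\kappa}(\beta^*+\delta)-\nabla\wh\cL_{1,\kappa}(\beta^*)-\Sigma\delta$; the single-gradient version of the estimate in (iii) gives $\|R\|_\Omega\lesssim\sqrt{(p+u)/n}\,\|\delta\|_\Sigma$, and since \eqref{one-step.error} together with $8r_*\le r_0$ and $n\gtrsim p+u$ forces $\|\delta\|_\Sigma\lesssim r_0$, we obtain $\|R\|_\Omega\lesssim\sqrt{(p+u)/n}\,r_0$. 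Hence $\Sigma\delta=-g^*+\Delta-R$, so $\delta+\Sigma^{-1}g^*=\Sigma^{-1}(\Delta-R)$ and $\|\delta+\Sigma^{-1}g^*\|_\Sigma=\|\Delta-R\|_\Omega\lesssim\sqrt{(p+u)/n}\,r_0$. A union bound over the three events behind (i)--(iii), each of probability at least $1-e^{-u}$ (with $\cE_0(r_0)\cap\cE_*(r_*)$ conditioned upon), delivers the stated probability $1-3e^{-u}$.
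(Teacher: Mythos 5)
Your skeleton is essentially the paper's: the same rearranged first-order condition, the same cancellation of the common leading term $\Sigma(\beta-\beta^*)$ between the local-$\kappa$ and global-$\tau$ gradient increments (the paper's processes $\Delta_1,\Delta$ in \eqref{def:Delta} and the uniform bound \eqref{diff.grad.unif.bound}), the same Cauchy--Schwarz step against a curvature lower bound, and the identical algebra for the Bahadur bound, with the same $1-3e^{-u}$ budget. The one place where your plan deviates, and where it has a genuine soft spot, is fact (ii): you want restricted strong convexity with an absolute constant uniformly over $\Theta(r_1)$ with $r_1\asymp r_0$, and you propose to get it by truncating each summand on $\{|\varepsilon_i|\le\kappa/2,\ |x_i^\T(\beta-\beta^*)|\le\kappa/2\}$ and a covering/peeling bound, claiming the empirical-process remainder is $\lesssim\sqrt{(p+u)/n}$ so that ``the population $\Sigma$-curvature dominates.'' With that calibration the remainder is additive: contraction with the Lipschitz constant $\kappa$ of the truncated map gives a fluctuation of order $\kappa\,r_1\sqrt{(p+u)/n}\asymp\sigma r_1$, which must be compared with the curvature $\asymp r_1^2$ on the sphere $\|\beta-\beta^*\|_\Sigma=r_1$. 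When $r_0\ll\sigma$ --- a regime the theorem explicitly allows ($8r_*\le r_0\le\sigma$) and the relevant one in the multi-step application, where $r_0$ shrinks toward $r_*$ --- the fluctuation swamps the curvature, so the argument as sketched does not establish (ii), and peeling over shells does not repair it (the same ratio $\sigma/\|\beta-\beta^*\|_\Sigma\gg 1$ appears on every shell).

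The inequality you need is nevertheless true, but it comes from a different calibration, which is exactly how the paper proceeds: lower-bound the summand by $\|\beta-\beta^*\|_\Sigma^2$ times a truncated quadratic in the \emph{normalized} variable $x_i^\T(\beta-\beta^*)/\|\beta-\beta^*\|_\Sigma$ with threshold $\kappa/(2r)$, and take the radius $r$ to be the \emph{large} one $r_{\loc}=\kappa/(4\eta_{0.25})\asymp\kappa$, so the envelope is $O(1)$ and the condition $n\gtrsim(\kappa/r)^2(p+u)$ in Lemma~\ref{lem:local.RSC} reduces to $n\gtrsim p+u$; since $\Theta(r_1)\subset\Theta(r_{\loc})$, this yields your $c_-$-curvature on the small ball as a corollary. (Equivalently, the paper never needs curvature calibrated to $r_0$ at all: it localizes $\wt\beta^{(1)}$ only to $\Theta(r_{\loc})$ via the intermediate convex-combination point $\wt\beta_c$ and a proof by contradiction, and the fine rate then falls out of $\tfrac14\|\wt\beta_c-\beta^*\|_\Sigma^2\le\|\nabla\wt\cL(\beta^*)\|_{\Omega}\|\wt\beta_c-\beta^*\|_\Sigma$.) Note also that applying Lemma~\ref{lem:local.RSC} directly with $r=r_1\asymp r_0$ is not an option, since its sample-size requirement $n\gtrsim(\kappa/r_1)^2(p+u)\asymp n\sigma^2/r_0^2$ forces $r_0\gtrsim\sigma$. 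So: correct architecture and correct facts (i) and (iii), but the justification you give for (ii) fails in an admissible regime and must be replaced by the big-ball/normalized curvature lemma (or by the paper's intermediate-point localization), after which the rest of your argument goes through verbatim.
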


In the above theorem, the bound \eqref{one-step.error} reflects the delicate dependence of the one-step error on the initial error $r_0$ as well as the centralized error rate $r_*$.  If we take $\wt \beta^{(0)}$ to be a local estimator constructed on a single local machine that has access to only $n$ observations, we may expect a sub-optimal convergence rate $r_0 \asymp \sigma  \sqrt{p/n}$.  Moreover,  it can be shown that $\| \nabla \wh  \cL_\tau(\beta^* )  \|_{\Omega} \lesssim  \sigma\sqrt{p/N} + \sigma^2/\tau  + \tau p/N$ with high probability,  up to  logarithmic factors; see Lemma~\ref{lem:global.score} in the Supplementary Material.  Hence,  the choice of $r_*$ corresponds to the optimal  rate of convergence when the entire dataset is available and $\tau \asymp \sigma \sqrt{N/p}$.  
Under the prescribed sample size scaling $n\gtrsim p $,  the one-step estimator $\wt \beta^{(1)}$ refines the statistical accuracy of $\wt \beta^{(0)}$ by a factor of order $\sqrt{p/n}$, which is strictly less than 1.  We thus expect the multi-step estimator,  with sufficiently many communication rounds, will achieve the optimal convergence rate obtainable on the entire dataset.

The proposed multi-round procedure for adaptive Huber regression is iterative,  starting at iteration 0 with an initial estimate $\wt \beta^{(0)} \in \RR^p$. At iteration $t\geq 1$,  it updates the estimate $\wt \beta^{(t)}  $ by fitting a shifted adaptive Huber regression which leverages global first-order information, depending on $\tau$, and local higher-order information, depending on $\kappa$. 
The procedure involves two steps.

\noindent
{\sc 1. Communicating gradient information}.  The central machine broadcasts $\wt \beta^{(t-1)}$ to every local machine. The $j$th machine, $1\leq j \leq m$,  computes the gradient $\nabla \cL_{j,\tau} (\wt \beta^{(t-1)})$, and sends it back to the central machine.  This step requires a communication of $2(m-1)p$ bits.

\noindent
{\sc 2. Fitting local shifted AHR}.  The central machine computes the update $\wt \beta^{(t)}$, defined as a solution to the optimization problem
\#
\min_{\beta \in \RR^p}  	~ \wt  \cL^{(t)} (\beta )  := \wh  \cL_{1,\kappa } ( \beta )  - \bigg\langle \nabla \wh  \cL_{1, \kappa } ( \wt \beta^{(t-1)} ) - \frac{1}{m} \sum_{j=1}^m \nabla \wh \cL_{j, \tau}(\wt \beta^{(t-1)} ) , \beta \bigg\rangle , \label{surrogate.loss.l}
\#
which can be solved by the method of iteratively reweighted least squares or quasi-Newton methods.  Details are given in section \ref{gdbb}.
We summarize the procedure, with an early stopping criterion, in Algorithm~\ref{algo:dqr}.

\begin{algorithm}[!t]
    \caption{ {\small  Communication-Efficient Adaptive Huber Regression.}}
    \label{algo:dqr}
    Input: data batches $\{(y_i, x_i)\}_{i\in \cI_j}$, $j=1,\ldots, m$, stored on $m$  machines, robustification parameters $\tau \geq \kappa >0$, initialization $\wt{\beta}^{(0)}$, number of iterations $T, g_0 = 1$.
    \begin{algorithmic}[1]
      \FOR{$t = 1, 2 \ldots, T$}
       \STATE  Broadcast $\wt \beta^{(t-1)}$ to all local machines;
          \STATE The $j$th ($1 \leq j\leq m$) machine computes $\nabla \hat \cL_{j,\tau}(\wt \beta^{(t-1)})$, and transmit it to the central machine;
          \STATE Compute $\nabla \hat \cL_\tau(\wt \beta^{(t-1)}) = (1/m) \sum_{j=1}^m  \nabla \hat \cL_{j,\tau}(\wt \beta^{(t-1)})$, $
       \nabla \hat \cL_{1, \kappa }(\wt \beta^{(t-1)})$ and $g_t = \| \nabla \hat \cL_\tau(\wt \beta^{(t-1)})  \|_{\infty}$  on the central machine;
          \STATE If $g_t \geq g_{t-1}$ or $g_t \leq 10^{-5}$ break ; otherwise,  on the central machine,  solve the shifted adaptive Huber regression problem in \eqref{surrogate.loss.l} to update the estimate   $\wt \beta^{(t)}$;
      \ENDFOR 
    \end{algorithmic}
      Output: $\wt \beta^{(T)}$.
\end{algorithm}

\begin{theorem} \label{thm:multi-step}
Assume the same conditions in Theorem~\ref{thm:one-step}, and let $8 r_* \leq r_0 \leq \sigma$.  Conditioned on event $\cE_0(r_0) \cap \cE_*(r_*)$, the distributed AHR estimator $\wt \beta^{(T)}$ with $T \gtrsim \lceil  \log(r_0/r_*) /  \log(n/(p+u))    \rceil$ satisfies the bounds
\#
 \| \wt \beta^{(T)} - \beta^*   \|_{\Sigma} \lesssim r_* ~~\mbox{ and }~~
 \|      \wt \beta^{(T)}- \beta^* + \Sigma^{-1} \nabla \wh \cL_\tau(\beta^*)  \|_{\Sigma}  \lesssim     \sqrt{\frac{p+u}{n}}  \cdot r_* \label{multi-step.bound}
\#
with probability at least $1- (2T+1) e^{-u}$.
\end{theorem}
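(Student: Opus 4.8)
The proof is a round-by-round induction that applies Theorem~\ref{thm:one-step} at each iteration of Algorithm~\ref{algo:dqr}. The structural observation is that the update $\wt\beta^{(t)}$ minimizing \eqref{surrogate.loss.l} is exactly the one-step estimator \eqref{one.step.huber} obtained from the shifted loss \eqref{surrogate.loss} with the initial point $\wt\beta^{(0)}$ replaced by $\wt\beta^{(t-1)}$. Hence Theorem~\ref{thm:one-step} applies verbatim at round $t$ as soon as one knows that $\wt\beta^{(t-1)}$ lies in an admissible ball $\Theta(R)$ with $8r_*\le R\le\sigma$, and that $\cE_*(r_*)$ holds; note that $\cE_*(r_*)$ concerns only $\nabla\wh\cL_\tau(\beta^*)$ and is therefore the same event at every round. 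Write $\rho:=C_1\sqrt{(p+u)/n}$ for the absolute constant $C_1$ in \eqref{one-step.error}; enlarging the implied constant in the hypothesis $n\gtrsim p+u$ we may assume $\rho\le 1/4$, so that $\rho$ is a genuine contraction factor and $\log(1/\rho)\asymp\log(n/(p+u))$.

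\textbf{Inductive contraction.} Let $C_2$ be the constant multiplying $r_*$ in \eqref{one-step.error}. Define $\delta_0:=r_0$ and $\delta_t:=\max\{\rho\,\delta_{t-1}+C_2 r_*,\ 8r_*\}$ for $t\ge 1$, so that $\delta_t\le\max\{\rho^t r_0,\,8r_*\}+2C_2 r_*$, whence $\delta_t$ decays to order $r_*$; using $8r_*\le r_0\le\sigma$ and $\rho\le 1/4$ one checks that $8r_*\le\delta_t\le\sigma$ for every $t$ (this is exactly the role of the numerical constants in the hypothesis). We prove by induction that $\wt\beta^{(t)}\in\Theta(\delta_t)$ for all $0\le t\le T$ on the good event described below. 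The base case is $\cE_0(r_0)$. For the step, apply Theorem~\ref{thm:one-step} at round $t$ with initial point $\wt\beta^{(t-1)}\in\Theta(\delta_{t-1})$ and ball radius $\delta_{t-1}\in[8r_*,\sigma]$: bound \eqref{one-step.error} gives $\|\wt\beta^{(t)}-\beta^*\|_\Sigma\le\rho\,\delta_{t-1}+C_2 r_*\le\delta_t$, which is the claim at round $t$.

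\textbf{The two displayed bounds.} Choosing the implied constant in the lower bound on $T$ large enough that $\rho^{\,T-1}r_0\le r_*$ (possible since $\log(1/\rho)\asymp\log(n/(p+u))$) forces $\delta_{T-1}\lesssim r_*$. A final application of Theorem~\ref{thm:one-step} at round $T$, now starting from the $O(r_*)$-accurate iterate $\wt\beta^{(T-1)}$, yields through \eqref{one-step.error} the estimate $\|\wt\beta^{(T)}-\beta^*\|_\Sigma\le\rho\,\delta_{T-1}+C_2 r_*\lesssim r_*$, and, crucially, through \eqref{one-step.bahadur} the sharper Bahadur bound $\|\wt\beta^{(T)}-\beta^*+\Sigma^{-1}\nabla\wh\cL_\tau(\beta^*)\|_\Sigma\le C_1\sqrt{(p+u)/n}\,\delta_{T-1}\lesssim\sqrt{(p+u)/n}\,r_*$, which is \eqref{multi-step.bound}. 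The extra factor $\sqrt{(p+u)/n}$ in the remainder is obtained precisely because one runs a single round past the point at which the estimation error first reaches order $r_*$.

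\textbf{Probability accounting and the hard part.} For rounds $t=1,\dots,T-1$ only the contraction bound \eqref{one-step.error} is invoked, which by the proof of Theorem~\ref{thm:one-step} holds off an event of probability at most $2e^{-u}$; the final round also invokes \eqref{one-step.bahadur}, costing one further $e^{-u}$, hence $3e^{-u}$ there. A union bound over the rounds, on top of the conditioning event $\cE_0(r_0)\cap\cE_*(r_*)$, gives a total failure probability of at most $2(T-1)e^{-u}+3e^{-u}=(2T+1)e^{-u}$. The early-stopping rule of Algorithm~\ref{algo:dqr} can only shorten the run, and the monotone decay of the radii $\delta_t$ established above shows it cannot fire before \eqref{multi-step.bound} is already in force, so it suffices to analyze $T$ full rounds. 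The steps requiring the most care are bookkeeping ones: keeping the working radii $\delta_{t-1}$ inside $[8r_*,\sigma]$ at every round — which is precisely what the hypothesis $8r_*\le r_0\le\sigma$ is designed to ensure — and verifying that the high-probability events underlying Theorem~\ref{thm:one-step} can be chained across the $T$ rounds with an $e^{-u}$-budget that grows only linearly in $T$.
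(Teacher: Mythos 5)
Your overall architecture — deterministic radius recursion $\delta_t$, per-round contraction via the one-step analysis, keeping $\delta_{t-1}\in[8r_*,\sigma]$ so the one-step hypotheses stay valid, and extracting the sharper Bahadur bound from one extra round after the error has reached order $r_*$ — is essentially the paper's argument, and the deterministic radii are exactly the right device for chaining the rounds. The genuine gap is in the probability bookkeeping, which is precisely the place where treating Theorem~\ref{thm:one-step} as a black box fails. By the proof of Theorem~\ref{thm:one-step}, the contraction bound \eqref{one-step.error} does \emph{not} hold off an event of probability $2e^{-u}$: it requires the local strong convexity bound \eqref{local.rsc.bound} (one $e^{-u}$) \emph{and} the two uniform gradient-process bounds \eqref{diff.grad.unif.bound} (two $e^{-u}$'s), i.e.\ $3e^{-u}$ per invocation; conversely, the Bahadur bound \eqref{one-step.bahadur} costs nothing extra, since it reuses the same process bounds. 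So your per-round decomposition ($2e^{-u}$ for the error bound, one further $e^{-u}$ for the Bahadur step) is inconsistent with the one-step proof, and a faithful black-box chaining of Theorem~\ref{thm:one-step} over $T$ rounds yields only $1-3Te^{-u}$, not the claimed $1-(2T+1)e^{-u}$; your arrival at the stated constant comes from two compensating miscounts.

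The missing idea that recovers $(2T+1)e^{-u}$ — and the reason the paper opens up the one-step proof rather than citing it — is that the shifted losses $\wt\cL^{(t)}$ differ only by linear terms, hence all share the same symmetrized Bregman divergence, namely that of $\wh\cL_{1,\kappa}$. Consequently the strong-convexity statement over the fixed local ball $\Theta(r_{\loc})$ with $r_{\loc}=\kappa/(4\eta_{0.25})$ is a \emph{single} event $\cE_{\lsc}$, valid simultaneously for every round, and is paid for once ($e^{-u}$); each round $t$ then contributes only the event $\cF(r_{t-1})$ controlling $\sup_{\beta\in\Theta(r_{t-1})}\|\Delta_1(\beta)\|_2$ and $\sup_{\beta\in\Theta(r_{t-1})}\|\Delta(\beta)\|_2$ ($2e^{-u}$ each round), and on the intersection of these deterministic-radius events the induction $\cE_{t-1}(r_{t-1})\Rightarrow\cE_t(r_t)$ holds deterministically, giving $e^{-u}+2Te^{-u}=(2T+1)e^{-u}$ by a single union bound. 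Your proof needs this sharing made explicit (or else the theorem's probability must be weakened to $1-3Te^{-u}$). A minor further point: your claim that the early-stopping rule of Algorithm~\ref{algo:dqr} "cannot fire before \eqref{multi-step.bound} is in force" is not justified by the monotone decay of the $\delta_t$ (the rule monitors gradient norms $g_t$, not radii); the theorem, like the paper's proof, concerns the iterate defined by the minimization in \eqref{surrogate.loss.l}, so this side remark should simply be dropped.
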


The above result shows that, with proper choices of $\tau$ and $\kappa$ as well as the number of iterations, the statistical error of the multi-step distributed AHR estimator matches that of the centralized AHR estimator on the entire dataset.For the initialization, we may take $\wt \beta^{(0)}$ to be a local AHR estimator computed on the central machine. 
With the above preparations,  we are ready to explicitly describe the estimation error and Bahadur linearization error of the proposed distributed AHR estimator.  The result is nonasymptotic, and carefully tracks the impact of the parametric dimension $p$, local sample size $n$ and the number of machines $m$.

\begin{theorem} \label{thm:final.rate}
Assume Condition~(C1) holds, and suppose the local sample size satisfies $n\gtrsim p+\log n + \log_2 m$, where $\log_2 m := \log(\log m)$ and $m= N/n$. Choose the robustification parameters $\tau \geq \kappa >0$ as $\tau \asymp \sigma \sqrt{N/(p+\log n + \log_2 m)}$ and $\kappa \asymp \sigma\sqrt{n/(p+\log n+\log_2 m)}$. Then, starting at iteration 0 with a local AHR estimate $\wt \beta^{(0)}$, the distributed estimator $\wt \beta = \wt \beta^{(T)}$ with $ T \asymp \lceil  \frac{\log(m)}{\log( n / (p+\log n+ \log_2 m))}  \rceil$   satisfies 
\begin{gather}
 \| \wt \beta  - \beta^*   \|_{\Sigma} \lesssim \sigma \sqrt{\frac{p+ \log n + \log_2 m}{N}}\quad\text{and} \label{final.rate}\\
 \left\| \wt \beta  - \beta^* - \Sigma^{-1} \frac{1}{N}\sum_{i=1}^N  \psi_\tau(\varepsilon_i) x_i  \right\|_{\Sigma} \lesssim   \sigma   \frac{p+\log n + \log_2 m}{(n N)^{1/2}}  \label{final.bahadur}
\end{gather}
with probability at least $1- C n^{-1}$, where $\psi_\tau (u) := \ell_\tau'(u) = \sign(u) \min(|u|, \tau)$.
\end{theorem}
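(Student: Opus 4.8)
The plan is to reduce the statement to Theorem~\ref{thm:multi-step} by making explicit (i) the tolerance parameter $u$, (ii) the accuracy radius $r_0$ of the local initialization, and (iii) the radius $r_*$ governing the centralized score, and then to recast the resulting Bahadur bound into the empirical-average form in \eqref{final.bahadur}. Throughout I would fix $u := \log n + \log_2 m$. With this choice the prescribed robustification parameters $\tau \asymp \sigma\sqrt{N/(p+\log n+\log_2 m)}$ and $\kappa \asymp \sigma\sqrt{n/(p+\log n+\log_2 m)}$ are exactly of the order $\tau \geq \kappa \asymp \sigma\sqrt{n/(p+u)}$ required in Theorems~\ref{thm:one-step}--\ref{thm:multi-step}, and the sample-size hypothesis $n \gtrsim p + \log n + \log_2 m$ reads $n \gtrsim p+u$.

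I would next certify the two events in \eqref{event0}. For the initialization, take $\wt\beta^{(0)} \in \argmin_{\beta} \wh\cL_{1,\kappa}(\beta)$, the local AHR estimator on the central machine from its $n$ observations at robustification level of order $\kappa$; the centralized adaptive Huber deviation bound of \cite{SZF2020} gives $\|\wt\beta^{(0)} - \beta^*\|_\Sigma \lesssim \sigma\sqrt{(p+u)/n}$ on an event $\cE_0(r_0)$ of probability at least $1 - e^{-u}$, and since $n \gtrsim p+u$ this radius $r_0 \asymp \sigma\sqrt{(p+u)/n}$ is at most $\sigma$. For the centralized score I would invoke Lemma~\ref{lem:global.score}: in its $u$-explicit form it bounds $\|\nabla\wh\cL_\tau(\beta^*)\|_\Omega$ by a sum of terms each of which, under $\tau \asymp \sigma\sqrt{N/(p+u)}$, is $\lesssim \sigma\sqrt{(p+u)/N}$ — this is precisely where the choice of $\tau$ balances the bias term $\sigma^2/\tau$ against the fluctuation term $\tau(p+u)/N$ — so $\cE_*(r_*)$ holds with $r_* \asymp \sigma\sqrt{(p+u)/N}$ and probability at least $1 - e^{-u}$. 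Since $r_*/r_0 \asymp (n/N)^{1/2} = m^{-1/2}$ and $N \geq n \gtrsim p+u$, one can fix the implicit constants — inflating $r_0$ when $m$ is small, which only enlarges $\cE_0(r_0)$ — so that $8 r_* \leq r_0 \leq \sigma$, as Theorem~\ref{thm:multi-step} demands.

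With these radii, $\log(r_0/r_*) \asymp \log m$, and because the hypothesis supplies $n/(p+u) \geq C$ for an absolute constant $C>1$ we have $\log(n/(p+u)) \gtrsim 1$, so the stated $T \asymp \lceil \log(m)/\log(n/(p+u)) \rceil$ simultaneously satisfies $T \gtrsim \lceil \log(r_0/r_*)/\log(n/(p+u)) \rceil$ and $T \lesssim \log m$. Theorem~\ref{thm:multi-step} then yields, on $\cE_0(r_0) \cap \cE_*(r_*)$ and with conditional probability at least $1 - (2T+1)e^{-u}$, both $\|\wt\beta^{(T)} - \beta^*\|_\Sigma \lesssim r_* \asymp \sigma\sqrt{(p+u)/N}$ and $\|\wt\beta^{(T)} - \beta^* + \Sigma^{-1}\nabla\wh\cL_\tau(\beta^*)\|_\Sigma \lesssim (p+u)^{1/2}n^{-1/2} r_* \asymp \sigma (p+u)(nN)^{-1/2}$. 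A union bound over $\cE_0(r_0)$, $\cE_*(r_*)$ and the iteration event gives overall probability at least $1 - (2T+3)e^{-u}$, and since $e^{-u} = (n\log m)^{-1}$ while $2T+3 \lesssim \log m$, this is at least $1 - C/n$. Finally $\nabla\wh\cL_\tau(\beta^*) = -(1/N)\sum_{i=1}^N \psi_\tau(y_i - x_i^\T\beta^*) x_i = -(1/N)\sum_{i=1}^N \psi_\tau(\varepsilon_i) x_i$ with $\psi_\tau = \ell_\tau'$, so $\|\wt\beta^{(T)} - \beta^* + \Sigma^{-1}\nabla\wh\cL_\tau(\beta^*)\|_\Sigma$ equals the left-hand side of \eqref{final.bahadur}, and substituting $u = \log n + \log_2 m$ into the two displays yields \eqref{final.rate} and \eqref{final.bahadur}.

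The main obstacle is not any individual estimate but the \emph{joint bookkeeping} of the constants: $\tau$ and $\kappa$ are functions of $p+u$, the round count $T$ is a function of $\log(r_0/r_*)$ and $\log(n/(p+u))$, and the union bound has size $2T+3$ which itself grows with $m$. One is forced to take $u \asymp \log n + \log_2 m$ rather than the more obvious $u \asymp \log n$, precisely so that $T\,e^{-u} \lesssim n^{-1}$ survives the union bound, and one must then check that this enlarged $u$ leaves the rates at the advertised $\sqrt{(p+\log n+\log_2 m)/N}$ and $(p+\log n+\log_2 m)(nN)^{-1/2}$. A secondary check is that the early-stopping rule in Algorithm~\ref{algo:dqr} — breaking when the global gradient norm $g_t$ stops decreasing — does not terminate before round $T$ on the good event, or, if it does, that the output iterate is already within the claimed accuracy; this follows from the restricted-curvature control of the shifted Huber loss underlying Theorem~\ref{thm:one-step}, which relates $\|\wt\beta^{(t)} - \beta^*\|_\Sigma$ to $g_t$ up to terms of the target order.
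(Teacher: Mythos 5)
Your proposal is correct and follows essentially the same route as the paper's proof: certify $\cE_0(r_0)$ via the centralized AHR bound of \cite{SZF2020} with $r_0 \asymp \sigma\sqrt{(p+u)/n}$, certify $\cE_*(r_*)$ via Lemma~\ref{lem:global.score} with $r_* \asymp \sigma\sqrt{(p+u)/N}$ under the balanced choice of $\tau$, set $u = \log n + \log_2 m$, and invoke Theorem~\ref{thm:multi-step} with $r_0/r_* \asymp \sqrt{m}$. Your explicit accounting of the $(2T+1)e^{-u}$ union bound (which is why $u$ must include $\log_2 m$) is in fact spelled out more carefully than in the paper's own argument, and the remark about the early-stopping rule goes beyond what the stated theorem requires.
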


The above theorem indicates that  the multi-step distributed AHR estimator $\wt \beta$ achieves the optimal statistical rate of convergence by a delicate
combination of the local robustification parameter,  the global robustification parameter,  and number of communication rounds.  The second bound, \eqref{final.bahadur},  explicitly describes the
error term of the Bahadur linearization. This  allows to establish the asymptotic distribution of $\wt \beta$ when both $p,  n$ tend to infinity. Moreover,  to achieve statistical optimality and communication efficiently simultaneously,  the above results impose minimal conditions on the number of machines $m$.   In summary,  when data are  heavy-tailed and collected on each machine remain local,  the proposed procedure delivers a statistically optimal estimate by communicating as many as $O( p m \log (m)  )$ bits.

\subsection{Distributed confidence estimation}\label{dist.conf.construction}

In this section, we consider uncertainty quantification of the multi-step estimator in a distributed setting, with a particular focus on statistical confidence estimation.
We first establish a  Berry-Esseen bound for linear functionals of the distributed AHR estimator $\wt \beta$,  which  explicitly quantifies the normal approximation error.

\begin{theorem} \label{thm:CLT}
In addition to the conditions in Theorem~\ref{thm:final.rate}, assume $\EE (\varepsilon^2 | x) = \sigma^2$ and $\EE (|\varepsilon|^{2+\delta} | x) \leq v_{2+\delta}$ almost surely for some $0<\delta\leq 1$. Then, the distributed estimator $\wt \beta = \wt \beta^{(T)}$ satisfies
\#
\sup_{t\in  \RR, \, a \in \RR^p } &  \left|  \PP \left[       \frac{ N^{1/2} a^\T (\wt \beta - \beta^* ) }{  \sqrt{ \EE \{ \psi_\tau(\varepsilon)  a^\T \Sigma^{-1}x  \}^2 } }  \leq   t \right]   - \Phi(t)  \right| \nn \\
&  \lesssim \frac{p+\log n+\log_2 m}{n^{1/2}} +  \frac{v_{2+\delta }(p+ \log n + \log_2 m)^{(1+\delta)/2}}{ \sigma^{2+\delta} N^{\delta/2}},  \label{CLT1}
\#
where $\Phi(\cdot)$ is the standard normal distribution function. In particular, assume $\EE(|\varepsilon|^3|x) \leq v_3<\infty$ almost surely. Then, under the dimension constraint $p + \log_2 m = o(n^{1/2})$, 
\#
  \frac{ N^{1/2} a^\T (\wt \beta - \beta^* ) }{ \sqrt{ \EE \{ \psi_\tau(\varepsilon)  a^\T \Sigma^{-1}x  \}^2 }  } \xrightarrow {\rm d} \cN(0, 1)~~\mbox{ and }~~
\frac{N^{1/2} a^\T (\wt \beta - \beta^* )}{ \sigma ( a^\T \Sigma^{-1} a )^{1/2} } \xrightarrow {\rm d} \cN(0, 1)   \label{CLT2}
\#
uniformly over $a \in \RR^p$ as $n \to \infty$.
\end{theorem}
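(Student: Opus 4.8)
# Proof Proposal for Theorem~\ref{thm:CLT}

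The plan is to combine the Bahadur linearization established in Theorem~\ref{thm:final.rate} with a Berry--Esseen bound for the leading linear term, controlling the remainder by the bound \eqref{final.bahadur}. Write $S_N := N^{1/2} a^\T \Sigma^{-1} \frac{1}{N}\sum_{i=1}^N \psi_\tau(\varepsilon_i) x_i = N^{-1/2}\sum_{i=1}^N a^\T \Sigma^{-1} x_i\,\psi_\tau(\varepsilon_i)$, and let $\sigma_\tau^2 := \EE\{\psi_\tau(\varepsilon) a^\T\Sigma^{-1}x\}^2$. The target statistic is $N^{1/2} a^\T(\wt\beta - \beta^*)/\sigma_\tau = S_N/\sigma_\tau + R_N$, where by \eqref{final.bahadur} the remainder obeys $|R_N| \lesssim (N^{1/2}/\sigma_\tau)\cdot \|\wt\beta - \beta^* - \Sigma^{-1}\frac1N\sum \psi_\tau(\varepsilon_i)x_i\|_\Sigma \cdot \|a\|_{\Sigma^{-1}}/\|a\|_{\Sigma^{-1}} \lesssim \sigma\,(p+\log n+\log_2 m)/\sqrt{n}\cdot \|a\|_{\Sigma^{-1}}/\sigma_\tau$ on the event of probability $1 - Cn^{-1}$ from Theorem~\ref{thm:final.rate}. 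One first shows $\sigma_\tau \asymp \sigma\|a\|_{\Sigma^{-1}}$: indeed $\sigma_\tau^2 \le \sigma^2\|a\|_{\Sigma^{-1}}^2$ trivially, and a lower bound follows since $\EE\{\psi_\tau(\varepsilon)^2 \mid x\} \ge \EE(\varepsilon^2\mid x) - \EE\{\varepsilon^2 I(|\varepsilon|>\tau)\mid x\} \ge \sigma^2 - v_{2+\delta}\tau^{-\delta} \ge \sigma^2/2$ once $\tau \gtrsim (v_{2+\delta}/\sigma^2)^{1/\delta}$, which holds under the stated scaling of $\tau$ for $N$ large. Hence $|R_N| \lesssim (p+\log n+\log_2 m)/\sqrt{n}$, which is exactly the first term on the right of \eqref{CLT1}.

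Next I would apply the classical Berry--Esseen theorem to $S_N/\sigma_\tau$, a normalized sum of i.i.d.\ mean-zero random variables $\xi_i := a^\T\Sigma^{-1}x_i\,\psi_\tau(\varepsilon_i)$ with variance $\sigma_\tau^2$. Using $|\psi_\tau(\varepsilon)| \le \min(|\varepsilon|,\tau) \le |\varepsilon|^{(2+\delta)/3}\tau^{(1-\delta)/3}$ (interpolating between $|\psi_\tau|\le|\varepsilon|$ and $|\psi_\tau|\le\tau$), one bounds the third absolute moment: $\EE|\xi_i|^3 \le \EE\{|a^\T\Sigma^{-1}x|^3\,|\varepsilon|^{2+\delta}\}\,\tau^{1-\delta} \le C\,\upsilon_1^3\,\|a\|_{\Sigma^{-1}}^3\,v_{2+\delta}\,\tau^{1-\delta}$, where the last step uses the sub-Gaussianity of $z = \Sigma^{-1/2}x$ from Condition~(C1) (so $a^\T\Sigma^{-1}x = (\Sigma^{-1/2}a)^\T z$ has sub-Gaussian moments of all orders with scale $\|a\|_{\Sigma^{-1}}$) together with the conditional $(2+\delta)$-moment bound $\EE(|\varepsilon|^{2+\delta}\mid x)\le v_{2+\delta}$. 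The Berry--Esseen bound then gives $\sup_t|\PP(S_N/\sigma_\tau \le t) - \Phi(t)| \lesssim \EE|\xi_i|^3/(\sigma_\tau^3 N^{1/2}) \lesssim v_{2+\delta}\tau^{1-\delta}/(\sigma^3 N^{1/2})$. Substituting $\tau \asymp \sigma\sqrt{N/(p+\log n+\log_2 m)}$ yields $\tau^{1-\delta} \asymp \sigma^{1-\delta}(N/(p+\log n+\log_2 m))^{(1-\delta)/2}$, so this term becomes $\lesssim v_{2+\delta}(p+\log n+\log_2 m)^{(1+\delta)/2}/(\sigma^{2+\delta}N^{\delta/2})$ after arithmetic on the powers of $N$, matching the second term in \eqref{CLT1}.

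To assemble \eqref{CLT1}, I would use the standard smoothing/anti-concentration argument: on the good event $\cG$ of probability $\ge 1 - Cn^{-1}$ we have $|R_N| \le \varepsilon_N := C(p+\log n+\log_2 m)/\sqrt{n}$, so for any $t$,
\#
\PP\!\left[\tfrac{N^{1/2}a^\T(\wt\beta-\beta^*)}{\sigma_\tau} \le t\right] \le \PP\!\left[\tfrac{S_N}{\sigma_\tau} \le t + \varepsilon_N\right] + \PP(\cG^c) \le \Phi(t+\varepsilon_N) + \tfrac{C\,\EE|\xi|^3}{\sigma_\tau^3 N^{1/2}} + \tfrac{C}{n},
\#
and $\Phi(t+\varepsilon_N) \le \Phi(t) + \varepsilon_N/\sqrt{2\pi}$ by the Lipschitz property of $\Phi$; the matching lower bound is symmetric. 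Since $1/n$ is dominated by $(p+\log n+\log_2 m)/\sqrt{n}$, collecting the three error terms gives \eqref{CLT1}, uniformly in $t$ and $a$. The two displays in \eqref{CLT2} follow immediately: the first from \eqref{CLT1} once $p+\log_2 m = o(n^{1/2})$ forces both error terms to $0$ (with $\delta = 1$, $v_3<\infty$), and the second by additionally noting $\sigma_\tau^2 \to \sigma^2 a^\T\Sigma^{-1}a$ as $\tau\to\infty$ by dominated convergence (since $\psi_\tau(\varepsilon)\to\varepsilon$ and $\psi_\tau(\varepsilon)^2 \le \varepsilon^2$), combined with Slutsky's theorem.

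The main obstacle is the careful bookkeeping of the $\tau$-dependence: one must verify that the bias introduced by truncation (reflected both in the gap between $\sigma_\tau^2$ and $\sigma^2\|a\|_{\Sigma^{-1}}^2$ and, more importantly, in the deterministic shift $\EE\psi_\tau(\varepsilon)\ne 0$ under asymmetry that is already absorbed into the Bahadur expansion's centering at $\beta^*$) does not degrade the rate, and that the interplay between the Bahadur remainder $\varepsilon_N = O((p+\log n+\log_2 m)/\sqrt n)$ and the Berry--Esseen term is exactly as claimed. A subtle point worth flagging: the statistic is centered at $\beta^*$ while $\psi_\tau$ has nonzero conditional mean under skewness, so one should check that the linearization in \eqref{final.bahadur} already accounts for $\EE\{\psi_\tau(\varepsilon)\mid x\}$ — this is precisely why the adaptive choice of $\tau$ (growing with $N$) is needed, as it forces $\sup_x|\EE\{\psi_\tau(\varepsilon)\mid x\}| \le v_{2+\delta}\tau^{-(1+\delta)}$ to be negligible relative to $N^{-1/2}$, so no extra bias term survives in \eqref{CLT1}.
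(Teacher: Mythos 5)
Your overall architecture—Bahadur linearization from Theorem~\ref{thm:final.rate}, a Berry--Esseen bound for the linear term, and a Gaussian smoothing step—is the same as the paper's. But there is a genuine gap in the middle: you apply the classical Berry--Esseen theorem to $S_N/\sigma_\tau$ while asserting that the summands $\xi_i w_i = \psi_\tau(\varepsilon_i)\, a^\T\Sigma^{-1}x_i$ are i.i.d.\ mean zero. They are not: under asymmetric errors $\EE\{\psi_\tau(\varepsilon)\mid x\}\neq 0$, hence $\EE(\xi_i w_i)\neq 0$ in general. The paper centers the sum, working with $S^0_a = S_a - \EE S_a$, and carries the bias $|\EE S_a|\le v_{2+\delta}\,\|a\|_{\Omega}\, N^{1/2}\tau^{-1-\delta}$ through the approximation; with $\tau\asymp\sigma\sqrt{N/q}$, $q=p+\log n+\log_2 m$, this bias is of order $v_{2+\delta}\, q^{(1+\delta)/2}/(\sigma^{2+\delta}N^{\delta/2})$, i.e.\ it is exactly the second term of \eqref{CLT1}, not something that ``does not survive.'' Your closing remark that the truncation bias is negligible relative to $N^{-1/2}$ is only an asymptotic statement relevant to \eqref{CLT2} under the extra dimension constraint; as an accounting of the nonasymptotic bound \eqref{CLT1} it is wrong, and your error budget as written omits this contribution—it remains numerically consistent with \eqref{CLT1} only because the omitted bias happens to be of the same order as the stated second term.

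Two smaller bookkeeping issues. First, your third-moment Berry--Esseen with the interpolation $|\psi_\tau(\varepsilon)|^3\le|\varepsilon|^{2+\delta}\tau^{1-\delta}$ is a legitimate alternative to the paper's $(2+\delta)$-moment bound of Chen and Shao, but the resulting term is $v_{2+\delta}\tau^{1-\delta}/(\sigma^3N^{1/2})\asymp v_{2+\delta}\, q^{(\delta-1)/2}/(\sigma^{2+\delta}N^{\delta/2})$, which is smaller than (not equal to) the second term of \eqref{CLT1}; the claimed ``matching'' is off by a factor of $q$, in the harmless direction. Second, the Berry--Esseen theorem normalizes by $\var(\xi w)^{1/2}$ of the centered summand, while the theorem statement (and your $\sigma_\tau$) uses the second moment $\EE(\xi w)^2$; the paper bridges the two normalizations via \eqref{GAR4} and \eqref{var.compare.1}, at a cost of order $\sigma^2/\tau^2\asymp q/N$, a step your write-up skips (it is dominated, so the rate is unaffected). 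Once you center the partial sum and insert these comparisons, your argument becomes essentially the paper's proof.
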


Let $\wt \beta = (\wt \beta_1, \ldots, \wt \beta_p)^\T$ be the distributed estimator described in the previous subsection.  Theorem~\ref{thm:CLT} implies that, for each $1\leq j\leq p$,  $N^{1/2}   (\wt \beta_j - \beta_j^* )$ is asymptotically normal with zero mean and variance  $(\Sigma^{-1} \EE \{ \psi_\tau(\varepsilon) x x^\T  \}^2 \Sigma^{-1})_{jj}$.  Let $\wh \Sigma = (1/N) \sum_{i=1}^N x_i x_i^\T$ be the sample version of $\Sigma$, and   $\wh \varepsilon_i  = y_i - x_i^\T \wt \beta$ be the fitted residuals. 
It can be shown that  
$
	 (  \wh \Sigma^{-1}  N^{-1}\sum_{i=1}^N \psi_\tau^2(\varepsilon_i) x_i x_i^\T \, \wh \Sigma^{-1} )_{jj}
$
provides a consistent estimator of $(\Sigma^{-1} \EE \{ \psi_\tau(\varepsilon) x x^\T  \}^2 \Sigma^{-1})_{jj}$.  In a distributed setting,  the computation of this variance estimator requires communicating $O(p^2 m)$ bits, thus incurring exorbitant communication costs when $p$ is large.

To achieve a tradeoff between communication and statistical efficiencies,  we propose averaging pointwise variance estimators, defined by $\wh \sigma_j^2 : = (1/m) \sum_{k=1}^m \wh \sigma_{jm}^2$ for $j=1,\ldots, p$, where 
$$
   \wh \sigma^2_{jk} = ( \wh \Sigma_k^{-1} \wh \Lambda_k \wh \Sigma_k^{-1} )_{jj}, \quad  \wh \Lambda_k = \frac{1}{n} \sum_{i \in \cI_k}  \psi^2_\tau(\wh \varepsilon_i) x_i x_i^\T  ~\mbox{ and }~    \wh \Sigma_k = \frac{1}{n} \sum_{i\in \cI_k} x_i x_i^\T.
$$
This approach takes one additional round of communication, and is robust against heteroscedastic errors that are of the form $\varepsilon_i = \sigma(x_i) e_i$. When $\varepsilon_i$ is independent of $x_i$,  the asymptotic variances reduce to $\EE \{ \psi^2_\tau(\varepsilon) \} (\Sigma^{-1})_{jj}$,  and thus can be consistently estimated by $\wt \sigma^2_j  :=   (\wh \sigma^2_\varepsilon /m)\sum_{k=1}^m(\wh \Sigma_k^{-1})_{jj}$, where $\wh \sigma^2_\varepsilon = (N-p)^{-1} \sum_{i=1}^N \psi^2_\tau(\wh \varepsilon_i )$.
For $\alpha\in(0,1)$,  the  distributed 100$(1-\alpha)$\% normal-based confidence intervals for $\beta^*_j$, $j=1,\ldots,p$, are given by $[ \wt \beta_j -z_{\alpha/2}\wh \sigma_j N^{-1/2} ,    \wt \beta_j + z_{\alpha/2}\wh \sigma_j N^{-1/2}]$ or $[ \wt \beta_j - z_{\alpha/2}\wt \sigma_j N^{-1/2} ,    \wt \beta_j + z_{\alpha/2}\wt \sigma_j N^{-1/2}]$, where $z_{\alpha/2} =   \Phi^{-1}(1-\alpha/2)$.

\section{Distributed Regularized Adaptive Huber Regression}

In this section, we consider high-dimensional linear models under sparsity.  Specifically,  we allow the parametric dimension $p$ to be much larger than the local sample size $n$, and assume $\beta^*$ is $s$-sparse, where $s = |\cS|$ and   $\cS =  {\rm supp}(\beta^*) = \{ 1\leq j\leq p: \beta^*_j \neq 0\}$ denotes the true active set.

Given independent observations $\{ (y_i, x_i ) \}_{i=1}^N$ from the linear model \eqref{linear.model}, the centralized/global $\ell_1$-penalized Huber regression estimator ($\ell_1$-Huber) is defined as
\#
\wh \beta = \wh \beta_{\tau }(\lambda)    \in \argmin_{\beta \in \RR^p}   \big\{  \wh \cL_\tau(\beta) + \lambda \| \beta \|_1  \big\}  , \label{lasso-Huber}
\#
where $\lambda>0$ is a regularization parameter.  Statistical properties of $\ell_1$-penalized Huber regression have been thoroughly studied by \cite{LZ2011}, \cite{FLW2017},  \cite{L2017} and \cite{CLL2020} from different perspectives.   To deal with asymmetric heavy-tailed errors, \cite{FLW2017} established high probability bounds for the $\ell_1$-Huber estimator with $\tau \asymp \sigma \sqrt{N/\log(p)}$ in the high-dimensional regime $p\gg  n\gtrsim s \log(p)$.

\begin{remark}
In practice, it is natural to leave the intercept or a given subset of the parameters unpenalized in the penalized $M$-estimation framework \eqref{lasso-Huber}. Denote by $\cR \subseteq \{1,\ldots, p\}$ the index set of unpenalized parameters, which is typically user-specified and contains at least index 1.  A more flexible $\ell_1$-Huber estimator can be obtained by solving
$
\min_{\beta \in \RR^p}   \{  \wh \cL_\tau(\beta) + \lambda \| \beta_{\cR^{{\rm c}}} \|_1      \}  = \min_{\beta \in \RR^p}   \{  \wh \cL_\tau(\beta) + \lambda \sum_{j\in \mathcal{R}^{{\rm c}}} |\beta_j|     \} .
$
Similar theoretical analysis can be carried out with slight modifications, and thus will be omitted. 
\end{remark}

In a distributed setting,  we integrate the ideas of \cite{WKSZ2017} and \cite{JLY2018} with adaptive robustification to parallelize regularized Huber regression with controlled communication cost and optimal statistical guarantees.
As before, let $\tau$ and $\kappa$ be the global and local robustification parameters. 
Recall that  $\wh \cL_{j, \kappa}(\cdot)$, $j=1,\ldots, m$, denote   local Huber loss functions.  Commenced with a regularized estimator $\wt \beta^{(0)}$, let $\wt \cL(\beta) =  \wh  \cL_{1,\kappa} (\beta)  -   \langle \nabla  \wh  \cL_{1,\kappa } ( \wt \beta^{(0)} ) - \nabla  \wh \cL_\tau( \wt \beta^{(0)}),  \beta  \rangle$ be the shifted adaptive Huber loss as in \eqref{surrogate.loss}.  With slight abuse of notation, we define the one-step $\ell_1$-penalized Huber regression estimator as
\#
  \wt \beta^{(1)} = \wt \beta^{(1)}_{\kappa, \tau}(\lambda)  \in \argmin_{\beta \in \RR^p}   \big\{   \wt \cL(\beta) + \lambda \| \beta \|_1 \big\} . \label{lasso.shifted.huber}
\#
To assess the statistical properties of the one-step estimator $\wt \beta^{(1)}$,  we work under the the following moment condition on the random covariate vector in high dimensions.
 \medskip
 
\noindent
(C2).
The covariate vector $x = (x_1, \ldots, x_p)^\T \in \RR^p$ with $x_1 \equiv 1$ has bounded components and uniformly bounded kurtosis. That is, $\max_{1\leq j\leq p} |x_j| \leq B$ for some $B\geq 1$ and $\mu_4 =\sup_{u \in \mathbb S^{p-1}} \EE (z^\T u)^4<\infty$, where $z = \Sigma^{-1/2}x $ and $\Sigma= (\sigma_{jk})_{1\leq j,k\leq p} = \EE(x x^\T)$. Write $\sigma_u = \max_{1\leq j\leq p} \sigma_{jj}^{1/2}$ and $\lambda_l = \lambda_{\min}(\Sigma) >0$.  For simplicity,  we assume $\lambda_l=1$.  Moreover, the error variables $\varepsilon_i$ satisfy $\EE(\varepsilon_i |x_i) = 0$ and $\EE(\varepsilon_i^2 | x_i ) \leq \sigma^2$ almost surely. 

As before, we first examine the performance of $\wt \beta^{(1)}$ conditioned on certain ``good" events in regard of the initialization and the centralized $\ell_1$-Huber estimator.  For $r_0, \lambda_*>0$, define 
\#
\cE_0(r_0) = \big\{ \wt \beta^{(0)} \in \Theta(r_0) \cap \Lambda \big\} ~~\mbox{ and }~~
\cE_*(\lambda_*) = \big\{  \| \nabla \wh \cL_\tau(\beta^*) -  \nabla \cL_\tau(\beta^*) \|_\infty \leq \lambda_* \big\} , \label{hd.good.events}
\#
where $\Lambda := \{ \beta \in \RR^p: \| \beta -\beta^* \|_1 \leq 4s^{1/2} \| \beta - \beta^* \|_{\Sigma} \}$ is an $\ell_1$-cone.

\begin{theorem} \label{thm:hd.one-step}
Assume Condition~(C2) holds. Given $\delta \in (0,1)$ and $0< r_0, \lambda_* \lesssim \sigma$, let $(\tau, \kappa, \lambda )$ satisfy $\tau \geq \kappa \asymp \sigma\sqrt{n/\log(p/\delta)}$ and $\lambda = 2.5 (\lambda_* + \rho)$ with 
\$
	\rho \asymp  \max \Bigg\{   r_0    \sqrt{\frac{s\log(p/\delta)}{n}} ,   s^{-1/2}  \sigma^2 \tau^{-1}     \Bigg\} . 
\$
Moreover, suppose the local sample size satisfies $n\gtrsim s\log(p/\delta)$. Then, conditioned on the event $\cE_0(r_0) \cap \cE_*(\lambda_*)$, the one-step regularized estimator $\wt \beta^{(1)}$ defined in \eqref{lasso.shifted.huber} satisfies $\wt \beta^{(1)} \in \Lambda$ and 
\#
	\| \wt \beta^{(1)} - \beta^* \|_{\Sigma} \lesssim       s\sqrt{\frac{ \log(p/\delta)}{n}}   \cdot r_0 +  \sigma^2 \tau^{-1}   +    s^{1/2} \lambda_*  \label{one-step.hd.err}
\#
with probability at least $1-\delta$.
\end{theorem}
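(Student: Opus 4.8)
The plan is to run the standard localized analysis for penalized $M$-estimation with a convex loss, adapted to the shifted adaptive Huber objective, and in close parallel with the proof of Theorem~\ref{thm:one-step}. Write $\wh\Delta = \wt\beta^{(1)} - \beta^*$. Since the shift in $\wt\cL(\cdot)$ is linear, its Bregman geometry coincides with that of the local Huber loss $\wh\cL_{1,\kappa}(\cdot)$; and because $\wt\beta^{(1)}$ is the global minimizer of the convex map $\beta\mapsto\wt\cL(\beta)+\lambda\|\beta\|_1$, everything reduces to (i) controlling the score $\nabla\wt\cL(\beta^*) = \nabla\wh\cL_{1,\kappa}(\beta^*) - \nabla\wh\cL_{1,\kappa}(\wt\beta^{(0)}) + \nabla\wh\cL_\tau(\wt\beta^{(0)})$ and (ii) establishing restricted strong convexity (RSC) of $\wh\cL_{1,\kappa}$ on the $\ell_1$-cone $\Lambda$.

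\emph{Step 1 (score control).} I would decompose $\nabla\wt\cL(\beta^*) = b + g$, where $b = \nabla\cL_\tau(\beta^*) = -\EE\{\psi_\tau(\varepsilon)x\}$ is the robustification bias and $g$ collects the rest. The term $g$ splits into: (a) the centralized score fluctuation $\nabla\wh\cL_\tau(\beta^*) - \nabla\cL_\tau(\beta^*)$, which is $\le\lambda_*$ in $\ell_\infty$ on $\cE_*(\lambda_*)$; (b) the localized, $\kappa$-truncated, \emph{local}-machine empirical-process increment $[\nabla\wh\cL_{1,\kappa} - \nabla\cL_\kappa](\beta^*) - [\nabla\wh\cL_{1,\kappa} - \nabla\cL_\kappa](\wt\beta^{(0)})$; (c) the analogous $\tau$-truncated \emph{global} increment; and (d) the population discrepancy $[\nabla\cL_\kappa - \nabla\cL_\tau](\beta^*) - [\nabla\cL_\kappa - \nabla\cL_\tau](\wt\beta^{(0)})$. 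For (b) and (c), using that $\psi_\kappa,\psi_\tau$ are $1$-Lipschitz together with a symmetrization/contraction argument, the boundedness of the covariates, and the cone bound $\|\wt\beta^{(0)}-\beta^*\|_1\le 4s^{1/2}r_0$ valid on $\cE_0(r_0)$, I expect $\|\cdot\|_\infty \lesssim r_0\sqrt{s\log(p/\delta)/n}$ for (b) and a strictly smaller ($N$-driven) term for (c); for (d), writing the integrand as $\psi_\tau-\psi_\kappa$ whose a.e.\ derivative is supported on $\{\kappa<|\cdot|<\tau\}$ and using $\PP(|\varepsilon|>\kappa/2\mid x)\le 4\sigma^2/\kappa^2$, the discrepancy is $\lesssim (\sigma^2/\kappa^2)r_0 \asymp (\log(p/\delta)/n)\,r_0$, which is absorbed into the same rate since $n\gtrsim s\log(p/\delta)$. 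Thus, with a suitable choice of the implied constant in $\rho$, $\|g\|_\infty \le \lambda/2$ with probability at least $1-\delta$ conditionally on $\cE_0(r_0)\cap\cE_*(\lambda_*)$. Crucially, the bias $b$ is \emph{not} controlled through $\|b\|_\infty$ but directionally: $|\langle b,\wh\Delta\rangle| \le \EE|\EE(\psi_\tau(\varepsilon)\mid x)\,x^\T\wh\Delta| \le (\sigma^2/\tau)\,\EE|x^\T\wh\Delta| \le (\sigma^2\sigma_u/\tau)\,\|\wh\Delta\|_{\Sigma}$, which is exactly what permits the clean term $\sigma^2\tau^{-1}$ in \eqref{one-step.hd.err} with no spurious $s^{1/2}$ factor.

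\emph{Step 2 (cone membership and final bound).} From optimality of $\wt\beta^{(1)}$ and convexity of $\wt\cL$, $\langle\nabla\wt\cL(\beta^*),\wh\Delta\rangle \le \lambda(\|\wh\Delta_{\cS}\|_1 - \|\wh\Delta_{\cS^c}\|_1)$; inserting $\nabla\wt\cL(\beta^*)=b+g$, using $\|g\|_\infty\le\lambda/2$, the directional bias bound, and $\lambda\gtrsim s^{-1/2}\sigma^2\tau^{-1}$ (so that $(\sigma^2\sigma_u/\tau)\|\wh\Delta\|_\Sigma \lesssim s^{1/2}\lambda\|\wh\Delta\|_\Sigma$), I obtain $\|\wh\Delta_{\cS^c}\|_1 \le 3\|\wh\Delta_\cS\|_1 + c\,s^{1/2}\|\wh\Delta\|_\Sigma$, hence (using $\lambda_l=1$) $\|\wh\Delta\|_1 \le 4 s^{1/2}\|\wh\Delta\|_\Sigma$, i.e.\ $\wt\beta^{(1)}\in\Lambda$; the constants are tuned by the $2.5$ factor in $\lambda$. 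Next, on the cone I lower-bound the Bregman divergence $\delta\wt\cL := \wt\cL(\wt\beta^{(1)}) - \wt\cL(\beta^*) - \langle\nabla\wt\cL(\beta^*),\wh\Delta\rangle$ by its local-Huber counterpart $n^{-1}\sum_{i\in\cI_1}\{\psi_\kappa(\varepsilon_i) - \psi_\kappa(\varepsilon_i - x_i^\T\wh\Delta)\}x_i^\T\wh\Delta$, and use the truncation identity $\psi_\kappa(\varepsilon_i) - \psi_\kappa(\varepsilon_i - x_i^\T\wh\Delta) = x_i^\T\wh\Delta$ on $\{|\varepsilon_i|\le\kappa/2,\ |x_i^\T\wh\Delta|\le\kappa/2\}$ to get $\delta\wt\cL \ge n^{-1}\sum_{i\in\cI_1}(x_i^\T\wh\Delta)^2\,I(|\varepsilon_i|\le\kappa/2) - (\text{small})$. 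A restricted-eigenvalue bound under (C2) with $n\gtrsim s\log(p/\delta)$, together with $|x_i^\T\wh\Delta|\le B\|\wh\Delta\|_1 \le 4Bs^{1/2}r_0 \ll \kappa$ and $\PP(|\varepsilon|>\kappa/2\mid x)=O(\log(p/\delta)/n)$, yields $\delta\wt\cL \gtrsim \|\wh\Delta\|_\Sigma^2$. On the other hand, the basic inequality and Step~1 give $\delta\wt\cL \le \lambda\|\wh\Delta_\cS\|_1 + |\langle b,\wh\Delta\rangle| + (\lambda/2)\|\wh\Delta\|_1 \lesssim (s^{1/2}\lambda + \sigma^2\tau^{-1})\|\wh\Delta\|_\Sigma$. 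Cancelling one power of $\|\wh\Delta\|_\Sigma$ gives $\|\wh\Delta\|_\Sigma \lesssim s^{1/2}\lambda + \sigma^2\tau^{-1} \asymp s^{1/2}\lambda_* + s\sqrt{\log(p/\delta)/n}\,r_0 + \sigma^2\tau^{-1}$, which is \eqref{one-step.hd.err}. As usual, this is cleanest after a preliminary localization: by convexity of the penalized objective it suffices to verify the inequality on the sphere $\{\|\Delta\|_\Sigma = R\}$ for $R$ of the target order, so RSC need only hold up to that radius, which is $\lesssim \kappa/(Bs^{1/2})$ and hence harmless under the stated scalings.

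The main obstacle I anticipate is Step~1(b): the localized $\ell_\infty$ increment bound $\sup_{\beta\in\Theta(r_0)\cap\Lambda}\|[\nabla\wh\cL_{1,\kappa} - \nabla\cL_\kappa](\beta) - [\nabla\wh\cL_{1,\kappa} - \nabla\cL_\kappa](\beta^*)\|_\infty \lesssim r_0\sqrt{s\log(p/\delta)/n}$, which must be uniform over the cone precisely because $\wt\beta^{(0)}$ (a local AHR estimate on the central machine) is statistically dependent on the data in $\cI_1$. This calls for a careful peeling/contraction argument exploiting the $1$-Lipschitz property of $\psi_\kappa$, boundedness of the covariates, and the $\ell_1$-cone geometry, and it is where the restriction $n\gtrsim s\log(p/\delta)$ is consumed; the RSC estimate in Step~2 is technically similar but more standard. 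Several of these ingredients are plausibly already isolated as lemmas in the Supplementary Material, in which case the proof reduces to assembling them along the lines above.
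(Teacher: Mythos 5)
Your proposal is correct and follows essentially the same route as the paper's proof: the same decomposition of the shifted score (centralized fluctuation controlled on $\cE_*(\lambda_*)$, local and global empirical-process increments bounded in $\ell_\infty$ uniformly over $\Theta(r_0)\cap\Lambda$ by symmetrization and contraction, the $\kappa$-versus-$\tau$ population discrepancy via Markov, and the robustification bias paired with $\|\cdot\|_{\Sigma}$ rather than $\ell_\infty$ to avoid the $s^{1/2}$ factor), the same cone-membership argument from the choice of $\lambda$, and the same localized restricted strong convexity step with convexity-based localization --- precisely the content of the paper's Lemmas~\ref{lem:hd.grad}, \ref{lem:deltar} and \ref{lem:local.RSC} together with the intermediate-estimator trick. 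The only slip is directional and harmless: the sum $n^{-1}\sum_{i\in\cI_1}\{\psi_\kappa(\varepsilon_i)-\psi_\kappa(\varepsilon_i-x_i^\T\wh\Delta)\}x_i^\T\wh\Delta$ is the \emph{symmetrized} Bregman divergence, which upper- rather than lower-bounds your $\delta\wt\cL$, so either expand the unsymmetrized divergence directly on the quadratic region $\{|\varepsilon_i|\le\kappa/2,\ |x_i^\T\wh\Delta|\le\kappa/2\}$ (giving $\tfrac{1}{2}(x_i^\T\wh\Delta)^2$ per term) or, as the paper does, work with the symmetrized divergence via the first-order optimality condition.
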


Theorem~\ref{thm:hd.one-step} indicates that the one-step procedure is able to reduce the statistical error of the initial estimator by a factor of $s\sqrt{\log(p)/n}$ when  the local sample size satisfies $n\gtrsim s^2 \log (p)$; see the first term on the right-hand of \eqref{one-step.hd.err}. The second term, $ \sigma^2 \tau^{-1}   +    s^{1/2} \lambda_* $, corresponds to the global error rate achievable on the entire dataset. In view of Theorem~B.2 (with $\delta=1$) in \cite{SZF2020}, if we take $\lambda_* \asymp \sigma \sqrt{\log(p)/N}$ and $\tau \asymp \sigma \sqrt{N/\log(p)}$, the centralized $\ell_1$-Huber estimator given in \eqref{lasso-Huber} satisfies 
$ \| \wh \beta - \beta^* \|_{\Sigma} \lesssim  \sigma^2 \tau^{-1}   +    s^{1/2} \lambda_* \asymp   \sigma \sqrt{s \log(p)/N} $ with probability at least $1-Cp^{-1}$.

Now we extend the iterative procedure in Section~\ref{sec2} to high-dimensional settings, starting at iteration 0 with an initial estimate $\wt \beta^{(0)}\in \RR^p$. At iteration $t=1,2,\ldots$,  it proceeds as follows:

\noindent
{\it Communicating gradient information}. The $j$th $(2\leq j\leq m)$ machine receives  $\wt \beta^{(t-1)}$ from the central machine,  computes the local gradient $\nabla \wh \cL_{j,  \tau }( \wt \beta^{(t-1)})$,  and sends it back to the central.

\noindent
{\it Fitting local regularized AHR}: On the central machine,  solve $\min_{\beta \in \RR^p} \{ \wt \cL^{(t)}(\beta) + \lambda_t \| \beta\|_1 \}$ to obtain  $\wt \beta^{(t)}$, where $ \wt \cL^{(t)}(\beta) =  \wh \cL_{1, \kappa }(\beta) - \langle  \nabla \wh \cL_{1, \kappa } (\wt \beta^{(t-1)} ) - (1/m) \sum_{j=1}^m \nabla \wh \cL_{j,\tau}(\wt \beta^{(t-1)} ), \beta \rangle$ and $\lambda_t>0$ is a regularization parameter.

Computationally,  we use a variant of the majorize-minimize algorithm \citep{LHY2000}, a proximal gradient descent type method, to solve the regularized optimization problem at each iteration. Details are provided in section \ref{lamm}.  Theorem \ref{thm:hd.final} below describes the statistical properties of the solution path $\{ \wt \beta^{(t)} \}_{t\geq 1}$ conditioned on a prespecified level of accuracy of the initial estimator.

\begin{theorem}\label{thm:hd.final}
Assume Condition~(C2) holds. Given $\delta \in (0,1)$ and $0< r_0, \lambda_* \lesssim \sigma$, let $(\tau, \kappa)$ satisfy $\tau \geq \kappa \asymp \sigma\sqrt{n/\log(p/\delta)}$. For $t=1,2,\ldots$, set $\lambda_t = 2.5(\lambda_* + \rho_t)>0$ with $\rho_t \asymp s^{-1/2} \max\{ \alpha^t r_0 , \sigma^2 \tau^{-1} \}$ and $\alpha \asymp s \sqrt{\log(p/\delta) /n}$. Suppose the local sample size satisfies $n\gtrsim s^2 \log(p/\delta)$, and let $r_* \asymp \sigma^2 \tau^{-1} + s^{1/2} \lambda_*$.   Then, conditioned on event $\cE_0(r_0) \cap \cE_*(\lambda_*)$, the distributed regularized estimator $\wt \beta^{(T)}$ with $T \asymp \frac{ \log(r_0/ r_* )}{\log(1/\alpha)}$ satisfies $\wt \beta^{(T)}  \in \Lambda$ and  $\| \wt \beta^{(T)} - \beta^* \|_{\Sigma} \lesssim r_*$ with probability at least $1-T \delta$.
\end{theorem}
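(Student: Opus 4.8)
The plan is to run Theorem~\ref{thm:hd.one-step} iteratively, viewing the multi-round procedure as a chain of one-step updates whose ``initialization radius'' contracts geometrically until it reaches the centralized floor $r_*\asymp\sigma^2\tau^{-1}+s^{1/2}\lambda_*$. I would write $\alpha_0:=s\sqrt{\log(p/\delta)/n}$ and let $c_0$ be the universal constant hidden in the $\lesssim$ of \eqref{one-step.hd.err}; under $n\gtrsim s^2\log(p/\delta)$ with a large enough implied constant, $\alpha:=c_0\alpha_0\le 1/2$, which is consistent with the stated $\alpha\asymp s\sqrt{\log(p/\delta)/n}$. Define the radii $R_0:=r_0$ and $R_t:=c_0(\alpha_0 R_{t-1}+r_*)$ for $t\ge1$; telescoping the geometric recursion gives $R_t\le\alpha^t r_0+2c_0 r_*\lesssim\alpha^t r_0+r_*$. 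The inductive claim to establish is that, conditioned on $\cE_0(r_0)\cap\cE_*(\lambda_*)$, with probability at least $1-t\delta$ one has $\wt\beta^{(t)}\in\Theta(R_t)\cap\Lambda$.

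The base case $t=0$ is exactly $\cE_0(r_0)$. For the inductive step, assuming $\wt\beta^{(t-1)}\in\Theta(R_{t-1})\cap\Lambda$, I would invoke Theorem~\ref{thm:hd.one-step} with its ``$r_0$'' taken to be $R_{t-1}$ and its ``$\rho$'' taken to be $\rho_t\asymp\max\{R_{t-1}\sqrt{s\log(p/\delta)/n},\,s^{-1/2}\sigma^2\tau^{-1}\}$, so that its ``$\lambda$'' is exactly $\lambda_t=2.5(\lambda_*+\rho_t)$. Here $R_{t-1}\sqrt{s\log(p/\delta)/n}=s^{-1/2}\alpha_0 R_{t-1}\asymp s^{-1/2}(\alpha^t r_0+\alpha_0 r_*)$ with $\alpha_0 r_*\lesssim r_*$, so this $\rho_t$ agrees --- after the residual $\lambda_*$-term is absorbed into $\lambda_t$ --- with the value $\rho_t\asymp s^{-1/2}\max\{\alpha^t r_0,\sigma^2\tau^{-1}\}$ stated in the theorem. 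The remaining hypotheses stay in force: $R_{t-1}\le r_0\lesssim\sigma$, $\lambda_*\lesssim\sigma$, $n\gtrsim s\log(p/\delta)$, and $\tau\ge\kappa\asymp\sigma\sqrt{n/\log(p/\delta)}$ is unchanged. Theorem~\ref{thm:hd.one-step} then gives, on an event of conditional probability at least $1-\delta$, that $\wt\beta^{(t)}\in\Lambda$ and
\[
\|\wt\beta^{(t)}-\beta^*\|_\Sigma\le c_0\Big(s\sqrt{\tfrac{\log(p/\delta)}{n}}\,R_{t-1}+\sigma^2\tau^{-1}+s^{1/2}\lambda_*\Big)=c_0(\alpha_0 R_{t-1}+r_*)=R_t,
\]
that is, $\wt\beta^{(t)}\in\Theta(R_t)\cap\Lambda$, closing the induction. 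I would emphasize that the feature making the recursion self-sustaining is the $\max$ with $s^{-1/2}\sigma^2\tau^{-1}$ inside $\rho_t$ (hence inside $\lambda_t$): it keeps the regularization from falling below the Huber-bias scale $\sigma^2/\tau$, which is exactly what preserves both the cone membership $\wt\beta^{(t)}\in\Lambda$ and the bound \eqref{one-step.hd.err} once the ``optimization error'' $\alpha^t r_0$ stops dominating.

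A union bound over the $T$ step-wise data events then makes the entire induction valid, conditionally on $\cE_0(r_0)\cap\cE_*(\lambda_*)$, with probability at least $1-T\delta$; and taking $T\asymp\log(r_0/r_*)/\log(1/\alpha)$ forces $\alpha^T r_0\lesssim r_*$, so $R_T\asymp r_*$ and hence $\wt\beta^{(T)}\in\Lambda$ with $\|\wt\beta^{(T)}-\beta^*\|_\Sigma\lesssim r_*$, which is the assertion.

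The main obstacle I anticipate is not the arithmetic of the contraction but the uniformity needed to chain the one-step lemma: the iterate $\wt\beta^{(t-1)}$ fed into iteration $t$ is data-dependent, so one must ensure the conclusion of Theorem~\ref{thm:hd.one-step} holds on a data event selected independently of the particular admissible initialization --- that is, that the restricted strong convexity of the shifted penalized loss over the cone $\Lambda$ and the gradient/noise controls behind \eqref{one-step.hd.err} hold uniformly over $\Theta(R_{t-1})\cap\Lambda$. Granting that (which the proof of Theorem~\ref{thm:hd.one-step} should supply), the chaining and the geometric bookkeeping are routine, and the only remaining checks --- that the radii stay $\lesssim\sigma$ and the sample-size condition persists across iterations --- are immediate, since $R_t\le r_0$ and none of $n,s,p,\delta$ varies with $t$.
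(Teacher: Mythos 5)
Your proposal is correct and follows essentially the same route as the paper's proof: iterate Theorem~\ref{thm:hd.one-step} with radii $r_t=\alpha r_{t-1}+r_*$, match $\rho_t$ (absorbing the residual $\alpha_0 r_*$ piece into $\lambda_t$ exactly as the paper does implicitly), and conclude by a union bound over the $T\asymp\log(r_0/r_*)/\log(1/\alpha)$ iterations. Your remark on uniformity over data-dependent initializations is handled the same way in the paper, since the events behind \eqref{one-step.hd.err} are controlled via suprema over $\Theta(r_{t-1})\cap\Lambda$ inside the proof of Theorem~\ref{thm:hd.one-step}.
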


With sufficiently many samples on the central machine---$n\gtrsim s^2 \log(p)$,  Theorems~\ref{thm:hd.one-step} and~\ref{thm:hd.final} ensure that the initial estimation error, albeit being sub-optimal,  can be repeatedly refined by a factor of order $s  \sqrt{\log(p)/n}$ until it reaches the optimal rate.  For simplicity,  we take $\wt\beta^{(0)}$ to be a local $\ell_1$-penalized AHR estimator, that is, $\wt\beta^{(0)} \in \argmin_{\beta \in \RR^p}  \{ \cL_{1,\kappa}(\beta) + \lambda_0 \| \beta \|_1 \}$.


\begin{corollary}\label{thm:hd.final2}
Assume Condition~(C2) holds, and the sample size per machine satisfies $n\gtrsim s^2 \log p$.  Choose the robustification and regularization parameters as $\tau \asymp \sigma \sqrt{N/\log(p)}$, $\kappa \asymp \sigma \sqrt{n/\log(p)}$ and 
\$
 \lambda_t \asymp  \sigma \sqrt{\frac{\log p}{N}} +   \sigma  \Bigg(  \frac{s^2 \log p}{n} \Bigg)^{t/2}  \sqrt{\frac{\log p }{n}}  , \ \ t =  0 , 1, 2, \ldots .
\$
Starting at iteration 0 with a local $\ell_1$-penalized AHR estimator,  the multi-step estimator $\wt \beta^{(T)}$ after $T \asymp  \lceil \log(m) \rceil $ rounds of communication satisfies the bounds
\#
 \| \wt \beta^{(T)} - \beta^* \|_{\Sigma} \lesssim \sigma \sqrt{\frac{s \log p}{N}} ~~\mbox{ and }~~   \| \wt \beta^{(T)} - \beta^* \|_1  \lesssim \sigma s \sqrt{\frac{  \log p}{N}}   \nn
\#
with probability at least $1-C \log(m)/p$.
\end{corollary}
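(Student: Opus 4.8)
The plan is to obtain Corollary~\ref{thm:hd.final2} as a direct specialization of Theorem~\ref{thm:hd.final}. Three things have to be checked: (i) the local $\ell_1$-penalized AHR initializer $\wt\beta^{(0)}$ lands in the good event $\cE_0(r_0)$ with $r_0 \asymp \sigma\sqrt{s\log p / n}$; (ii) the global-score event $\cE_*(\lambda_*)$ holds with $\lambda_* \asymp \sigma\sqrt{\log p/N}$; and (iii) with $\delta \asymp 1/p$ the parameter choices $(\tau,\kappa,\lambda_t,T)$ in the statement agree, up to absolute constants, with those prescribed by Theorem~\ref{thm:hd.final}. The $\ell_1$ bound then drops out of cone membership.

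For (i), note that $\wt\beta^{(0)}\in\argmin_{\beta\in\RR^p}\{\wh\cL_{1,\kappa}(\beta)+\lambda_0\|\beta\|_1\}$ is just a centralized $\ell_1$-penalized adaptive Huber estimator on the $n$ observations stored on the first machine, so I would apply the centralized high-dimensional AHR guarantee (Theorem~B.2 with $\delta=1$ in \cite{SZF2020}), with $n$ playing the role of the total sample size, $\kappa\asymp\sigma\sqrt{n/\log p}$ and $\lambda_0\asymp\sigma\sqrt{\log p/n}$. Under $n\gtrsim s^2\log p\ge s\log p$ this yields, on an event of probability at least $1-Cp^{-1}$, both $\|\wt\beta^{(0)}-\beta^*\|_{\Sigma}\lesssim\sigma\sqrt{s\log p/n}$ and---as a by-product of the basic inequality for $\ell_1$-penalized $M$-estimation, since $\lambda_0$ exceeds twice the $\ell_\infty$ norm of the local score $\nabla\wh\cL_{1,\kappa}(\beta^*)$---the cone membership $\wt\beta^{(0)}-\beta^*\in\Lambda$; this is exactly $\cE_0(r_0)$ with $r_0\asymp\sigma\sqrt{s\log p/n}$. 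For (ii), write $\nabla\wh\cL_\tau(\beta^*)-\nabla\cL_\tau(\beta^*)=-N^{-1}\sum_{i=1}^N\{\psi_\tau(\varepsilon_i)x_i-\EE[\psi_\tau(\varepsilon_i)x_i]\}$; under Condition~(C2) each coordinate is an average of $N$ independent centered summands bounded in absolute value by $B\tau$ and with variance $\lesssim\sigma^2$, so Bernstein's inequality and a union bound over the $p$ coordinates give $\|\nabla\wh\cL_\tau(\beta^*)-\nabla\cL_\tau(\beta^*)\|_\infty\lesssim\sigma\sqrt{\log p/N}+B\tau\log(p)/N$ with probability at least $1-Cp^{-1}$, and $\tau\asymp\sigma\sqrt{N/\log p}$ makes the second term negligible, so $\lambda_*\asymp\sigma\sqrt{\log p/N}$ is admissible. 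This is the high-dimensional analogue of Lemma~\ref{lem:global.score}.

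For (iii), with $r_0\asymp\sigma\sqrt{s\log p/n}$, $\lambda_*\asymp\sigma\sqrt{\log p/N}$ and $\tau\asymp\sigma\sqrt{N/\log p}$ one gets $r_*\asymp\sigma^2\tau^{-1}+s^{1/2}\lambda_*\asymp\sigma\sqrt{s\log p/N}$ (the second term dominates since $s\ge1$), $\alpha\asymp s\sqrt{\log p/n}=\sqrt{s^2\log p/n}$, and $\rho_t\asymp s^{-1/2}\max\{\alpha^t r_0,\sigma^2\tau^{-1}\}$, whence $\lambda_t=2.5(\lambda_*+\rho_t)\asymp\sigma\sqrt{\log p/N}+\sigma(s^2\log p/n)^{t/2}\sqrt{\log p/n}$, which is the $\lambda_t$ in the statement. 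Moreover $r_0/r_*\asymp\sqrt{N/n}=\sqrt m$, so the number of rounds required by Theorem~\ref{thm:hd.final} is $\log(r_0/r_*)/\log(1/\alpha)\asymp(\log m)/\log(1/\alpha)$, and since $n\gtrsim s^2\log p$ keeps $\alpha$ below a fixed constant strictly less than $1$, this is $\lesssim\log m$; hence $T\asymp\lceil\log m\rceil$ rounds are enough (once the iterate reaches accuracy $\asymp r_*$, the stabilized penalty level $\lambda_t\asymp\sigma\sqrt{\log p/N}$ keeps subsequent iterates there). Applying Theorem~\ref{thm:hd.final} on $\cE_0(r_0)\cap\cE_*(\lambda_*)$ with $\delta\asymp1/p$ now gives $\wt\beta^{(T)}\in\Lambda$ and $\|\wt\beta^{(T)}-\beta^*\|_{\Sigma}\lesssim r_*\asymp\sigma\sqrt{s\log p/N}$ with conditional probability at least $1-T\delta\asymp1-C(\log m)/p$; combining this with the two high-probability events from the previous paragraph via a union bound gives the unconditional probability $1-C(\log m)/p$. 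Finally $\wt\beta^{(T)}\in\Lambda$ yields $\|\wt\beta^{(T)}-\beta^*\|_1\le4s^{1/2}\|\wt\beta^{(T)}-\beta^*\|_{\Sigma}\lesssim\sigma s\sqrt{\log p/N}$.

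The main obstacle is accounting rather than a new idea: everything rests on using \emph{consistent} constants so that (a) the contraction factor $\alpha\asymp\sqrt{s^2\log p/n}$ is genuinely bounded away from $1$ under $n\gtrsim s^2\log p$, which is what makes the geometric refinement in Theorem~\ref{thm:hd.final} legitimate and $\lceil\log m\rceil$ rounds sufficient; (b) the initializer accuracy $r_0$ and the global-score level $\lambda_*$ are controlled \emph{simultaneously} on a single high-probability event, so the union bound is clean; and (c) the free tolerance $\delta$ in Theorems~\ref{thm:hd.one-step}--\ref{thm:hd.final} is taken to be $\asymp1/p$, so that the union bound over the $O(\log m)$ rounds costs only a $\log m$ factor in the failure probability and produces the stated $1-C(\log m)/p$.
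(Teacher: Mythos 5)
Your proposal is correct and follows essentially the same route the paper intends for this corollary (mirroring the proof of Theorem~\ref{thm:final.rate} in the low-dimensional case): verify $\cE_0(r_0)$ with $r_0 \asymp \sigma\sqrt{s\log p/n}$ via the centralized $\ell_1$-AHR guarantee applied on the first machine, verify $\cE_*(\lambda_*)$ with $\lambda_* \asymp \sigma\sqrt{\log p/N}$ via Lemma~\ref{lem:global.score}(ii), and then invoke Theorem~\ref{thm:hd.final} with $\delta \asymp 1/p$, $r_0/r_* \asymp \sqrt{m}$ and $T \asymp \lceil \log m\rceil$, with the $\ell_1$ bound following from the cone property $\wt\beta^{(T)}\in\Lambda$. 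The parameter bookkeeping ($\alpha$ bounded away from $1$ under $n\gtrsim s^2\log p$, the form of $\lambda_t$, and the $1-C\log(m)/p$ union bound) all check out.
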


Corollary~\ref{thm:hd.final2}, along with the global error analysis in \cite{FLW2017} and \cite{L2017},  implies the optimality of distributed adaptive Huber regression in terms of the tradeoff between communication cost and statistical accuracy.
 
 \begin{remark}
Under light-tailed error distributions (e.g., sub-Gaussian errors),  \cite{LLST2017} and \cite{BFLLZ2018} studied a one-shot approach based on averaging debiased Lasso estimators \citep{ZZ2014,vdG2014}. Theoretically,   averaged debiased Lasso achieves the optimal  error rate when the local size satisfies $n\gtrsim m s^2 \log(p)$; and computationally, each local machine needs to estimate a $p \times p$ matrix  for debiasing the Lasso.  
We may expect the same issues for the robust one-shot  method that averages debiased $\ell_1$-Huber estimators.
The proposed distributed AHR method not only requires the minimum sample complexity but also is computationally efficient.
 \end{remark}

\section{Optimization Methods}
\subsection{Barzilai-Borwein gradient descent for distributed AHR}\label{gdbb}

Let us first recall the multi-round distributed procedure for adaptive Huber regression.
Starting with an initial estimator $\wt \beta^{(0)} \in \RR^p$, and given robustification parameters $\tau$ and $\kappa$, for $t = 1,\ldots, T$, we update
\# 
\wt \beta^{(t)} \in \argmin_{\beta \in \RR^p}  \wt \cL^{(t)}(\beta) =  \wh  \cL_{1,\kappa} (\beta)  - \bigl\langle \nabla  \wh  \cL_{1,\kappa } ( \wt \beta^{(t-1)} ) - \nabla  \wh \cL_\tau( \wt \beta^{(t-1)}),  \beta    \bigr\rangle . \label{iterate.l}
\#
Since $\wt \cL^{(t)}(\cdot)$ is convex, twice-differentiable and provably locally strongly convex, we propose to use the gradient descent method with a Barzilai-Borwein update step (GD-BB) \citep{BB1988} to solve the optimization problem in \eqref{iterate.l}.
The Barzilai-Borwein method is motivated by quasi-Newton methods, which avoid calculating the inverse Hessian at each iteration. The latter is computationally expensive when $p$ is large.
To be specific, let us consider the optimization $\min_{\beta \in \RR^p}  \wt \cL^{(t)}(\beta)$ for a fixed $t\geq 1$. Starting with the initialization $\wt \bbeta^{ ( t,0 ) } = \wt \bbeta^{(t-1)}$,
at (inner) iteration $k = 1,2,...$, compute the update $\wt \bbeta^{ (t,k+1) }  =\wt \bbeta^{ (t,k) } - \min\{\eta_k,10\} \nabla \wt \cL^{(t)}(\wt \bbeta^{ (t,k) })$, where $\eta_1=1$ and for $k\geq 2$,
\#
\eta_k=\frac{\langle\wt {\beta}^{(t,k )}-\wt{\beta}^{(t,k-1)}, \wt{\beta}^{(t,k )}-\wt{\beta}^{(t,k-1)}\rangle}{\langle\wt{\beta}^{(t,k )}-\wt {\beta}^{(t,k-1)}, \nabla \wt {\cL}^{(t)}(\wt{\beta}^{(t,k )})-\nabla \wt{\cL}^{(t)}(\wt{\beta}^{(t,k-1)})\rangle}  \label{eta.update}
\#
or
\#
\eta_k=\frac{\langle\wt {\beta}^{(t,k )}-\wt{\beta}^{(t,k-1)},   \nabla \wt {\cL}^{(t)}(\wt{\beta}^{(t,k )})-\nabla \wt{\cL}^{(t)}(\wt{\beta}^{(t,k-1)}) \rangle}{  \|  \nabla \wt {\cL}^{(t)}(\wt{\beta}^{(t,k )})-\nabla \wt{\cL}^{(t)}(\wt{\beta}^{(t,k-1)}) \|_2^2 }  .\nn 
\#
In practice, the step size computed in GD-BB may sometimes vibrate to some extent, and this may cause instability of the algorithm. Therefore, we set a upper bound for the step sizes by taking $\min\{\eta_k, 10\}$. This procedure is summarized in Algorithm~\ref{algo:gdbb}.

\subsection{Majorize-minimize algorithm for distributed penalized AHR}\label{lamm}

In the high-dimensional setting, we need to solve $\ell_1$-penalized shifted Huber loss minimization problems.  
\begin{algorithm}[!t]
    \caption{ {\small  Gradient Descent with Barzilai-Borwein stepsize for solving \eqref{iterate.l}}}
    \label{algo:gdbb}
    Input: Local data vectors  $\{(y_{i}, x_{i})\}_{i \in \cI_{1}}$,  initial estimator  $\wh{\beta}^{0} = \wt {\beta}^{(t-1)}$,  gradient  $\nabla \wh{\cL}_{1, \kappa}(\wt {\beta}^{(t-1)}) $
and $ \nabla \wh{\cL}_{j,\tau}(\wt {\beta}^{(t-1)})$  for $ j=1, \ldots, m,$  and gradient tolerance level $\delta = 10^{-4}$.
    \begin{algorithmic}[1]
    \STATE Compute $ \wh{\beta}^{1} \leftarrow \wh{\beta}^{0}-\nabla \wt{\cL} ^{(t)} (\wh{\beta}^{0} ) $
      \FOR{$k = 1, 2 \ldots$}
       \STATE  Compute  $\eta_{k}$ as defined in  \eqref{eta.update}. 
       \STATE Update  $\wh{\beta}^{k+1} \leftarrow \wh{\beta}^{k}-\min\{\eta_{k},10\} \nabla \wt{ \cL}^{(t)} (\wh{\beta}^{k}  )$;
      \ENDFOR  ~when $ \|\nabla \wt{\cL}^{(t)} (\wh{\beta}^{k} ) \|_{\infty} \leq \delta $
    \end{algorithmic}
\end{algorithm}
With slight abuse of notation, given an initial regularized estimator $\wt {\beta}^{(0)}$, at each iteration $t = 1,2,\ldots,T$, define the update as
\#
   \wt \beta^{(t)} \in \argmin_{\beta \in \RR^p} \Bigg\{  \wt \cL^{(t)}(\beta) +  \lambda \| \beta_{-} \|_1 = \wh  \cL_{1,\kappa} (\beta)  - \big\langle \nabla  \wh  \cL_{1,\kappa } ( \wt \beta^{(t-1)} ) - \nabla  \wh \cL_\tau( \wt \beta^{(t-1)}),  \beta    \big\rangle + \lambda \| \beta_{-} \|_1 \Bigg\} . \label{hd.iterate.l}
\#
Here we use $\beta_- \in \RR^{p-1}$ to denote the subvector of $\beta$ with its first component removed. To solve the optimization problem in \eqref{hd.iterate.l}, we employ the locally adaptive majorize-minimize (LAMM) principle \cite{FLSZ2018}, which extends the classical MM algorithm \citep{HL2000} to accommodate $\ell_1$ penalty. This procedure minimizes a surrogate $\ell_1$-penalized isotropic quadratic function at each iteration, thus permitting an analytical solution.

Let $\wt \cL(\cdot)$ be the loss function of interest. For $k=1,2,...$, define 
\#
    g_k(\beta; \beta^{k-1}, \phi_k) = \wt{\cL}( \beta^{k-1})+\bigl\langle\nabla \wt{\cL}( {\beta}^{k-1}), \beta- {\beta}^{k-1}\bigr\rangle+\frac{\phi_{k}}{2}\|\beta-{\beta}^{k-1}\|_{2}^{2}. \nn
\#
We say $g_k(\beta;   \beta^{k-1},\phi_k)$ majorizes $\wt \cL (\beta)$ at $\beta^{k-1}$ if 
\#\label{local majorization}
g_k(\beta ;   \beta^{k-1}, \phi_k) \geq \wt \cL (\beta) ~~\forall \beta \in \RR^{p} \text{~~and~~} g_k(   \beta^{k-1} ;   \beta^{k-1}, \phi_k) = \wt \cL (  \beta^{k-1}).
\#

By choosing $\phi_k$ large enough, $g_k(\cdot ;  \beta^{k-1}, \phi_k)$ is guaranteed to satisfy \eqref{local majorization}. To find the smallest such $\phi_k$, we start with $\phi_0 = 0.0001$, and repeatedly inflate it by a constant factor, say 1.1, until \eqref{local majorization} is satisfied. Finally, we update $\beta^k$ by minimizing
\#\label{penalized LAMM loss}
    g_k( \beta ;   \beta^{k-1}, \phi_k)  + \lambda \| \beta_{-} \|_1.
\#
Due to the isotropic quadratic term in $g_k( \beta ;  \beta^{k-1}, \phi_k)$, $\beta^k$ can be obtained by a simple analytic formula:
\$
\left\{\begin{array}{l}
 {\beta}_{1}^{k}= {\beta}_{1}^{k-1}-\phi_{k}^{-1}  ( \nabla \wt{\cL}( {\beta}^{k-1}) )_1 \\
 {\beta}_{j}^{k}=S( {\beta}_{j}^{k-1}-\phi_{k}^{-1} ( \nabla  \wt{\cL}( {\beta}^{k-1}) )_j, \phi_{k}^{-1} \lambda) , ~~ j=2, \ldots, p,
\end{array}\right.
\$
where $S(u, \lambda) = \sign(u)\max(|u| - \lambda, 0)$ denotes the soft-thresholding operator. This algorithm also guarantees a descent in the overall loss function at every iteration, which is a direct consequence of \eqref{local majorization} and \eqref{penalized LAMM loss}:
\$
\begin{aligned}
\wt{\cL}( {\beta}^{k})+\lambda\| {\beta}_{-}^{k}\|_{1} & \leq g_k(  {\beta}^{k} ;    {\beta}^{k-1}, \phi_{k}) + \lambda\| {\beta}_{-}^{k}\|_{1} \\
& \leq g_k( {\beta}^{k-1} ; {\beta}^{k-1}, \phi_{k}) + \lambda\|{\beta}_{-}^{k-1}\|_{1}=\wt{\cL}({\beta}^{k-1})+\lambda\|  {\beta}_{-}^{k-1}\|_{1}.
\end{aligned}
\$
Algorithm~\ref{algo:lamm} summarizes the LAMM algorithm described above.

\begin{algorithm}[!t]
    \caption{ {\small  Local adaptive majorize-minimise (LAMM) algorithm for solving \eqref{lasso.shifted.huber}}}
    \label{algo:lamm}
    Input: Local data vectors  $\{(y_{i}, x_{i})\}_{i \in I_{1}}$, initial estimator $\wh{\beta}^{0}=\wt{\beta}^{(t-1)}$  gradient vectors  $\nabla \wh{\cL}_{1, \kappa}(\tilde{\beta}^{(t-1)})$  and  $\nabla \wh{\cL}_{\tau}(\tilde{\beta}^{(t-1)})$,  regularisation parameter  $\lambda$,  initial isotropic parameter  $\phi_{0}$ and convergence tolerance  $\delta$ 
    \begin{algorithmic}[1]
      \FOR{$k = 1, 2 \ldots$}
       \STATE  $\operatorname{Set} \phi_{k} \leftarrow \max \{\phi_{0}, \phi_{k-1} / 1.1\}$
       \REPEAT
        \STATE  Update  $\wh{\beta}_{1}^{k} \leftarrow \wh{\beta}_{1}^{k-1}-\phi_{k}^{-1} \nabla_{\beta_{1}} \wt{\cL}(\wh{\beta}^{k-1})$
        \STATE Update  $\wh{\beta}_{j}^{k} \leftarrow S(\wh{\beta}_{j}^{k-1}-\phi_{k}^{-1} \nabla_{\beta_{j}} \wt{\cL}(\wh{\beta}^{k-1}), \phi_{k}^{-1} \lambda) $ for  $j=2, \ldots, p $
        \STATE If $g_k(\wh{\beta}^{k} ; \wh{\beta}^{k-1},  \phi_{k})<\wt{\cL}(\wh{\beta}^{k})$,~set $ \phi_{k} \leftarrow 1.1 \phi_{k}$
       \UNTIL $g_k(\wh{\beta}^{k} ;  \wh{\beta}^{k-1}, \phi_{k}) \geq \wt{\cL}(\wh{\beta}^{k}) $
      \ENDFOR ~when $\|\wh{\beta}^{k} -\wh{\beta}^{k-1}\|_{2} \leq \delta$ 
    \end{algorithmic}
\end{algorithm}

\section{Numerical Studies}
In this section, we compare the numerical performance of the proposed method with several state-of-the-art distributed regression methods in both low and high dimensions.
\subsection{Distributed robust regression and inference}  \label{numerical.low.dimension}

In the low-dimensional setting where $n\gg p$, we consider five distributed regression methods: (i) the global adaptive Huber regression (AHR) estimator \citep{SZF2020} that uses all the available $N = mn$ observations; (ii) divide-and-conquer AHR (DC-AHR) estimator based on averaging $m$ local AHR estimators; (iii)  DC-OLS estimator that averages $m$ local OLS estimators;  (iv) distributed OLS estimator \citep{SSZ2014}; and (v) the proposed distributed AHR estimator with early stopping.

To implement methods (i) and (ii), we use the self-tuning principle proposed by \cite{WZZZ2020} which automatically selects the robustification parameter $\tau$. 
The distributed procedures (iv) and (v) are iterative, and require a reasonably well initial estimator, say $\wt \beta^{(0)}$. 
In our simulations, we take $\wt \beta^{(0)}$ to be either the DC-AHR or the DC-OLS estimator, which only requires one communication round.
When the error distribution is heavy-tailed and symmetric, DC-AHR often has better finite-sample performance than DC-OLS. However, it produces biased estimate  when the error is asymmetric. In contrast, although the DC-OLS exhibits larger variability due to heavy-tailedness, it has smaller bias on average. Therefore, we use DC-OLS estimator as the initialization for both methods (iv) and (v).
Recall that the distributed AHR estimator involves two robustification parameters $\kappa$ and $\tau$. The local parameter $\kappa$ can be automatically obtained by the self-tuning procedure \citep{WZZZ2020}. Guided by theoretical orders of $(\kappa, \tau)$ stated in Theorem~\ref{thm:final.rate}, we choose the global parameter $\tau$ to be $c m^{1/2}\kappa$, where $c\geq 1$ is a numerical constant that can be tuned by the validation set approach.
We suggest to choose $c$ from $\{1,2,3,4,5\}$,  which suffices to achieve promising performance in a wide range of simulation settings.

We generate data vectors $\{ (y_i , x_i) \}_{i=1}^N$ from a heteroscedastic model $y_i = x_i^\T\beta^* + c^{-1}(x_i^\T\beta^*)^2\varepsilon_i$,  where $\beta^* = (1.5,\ldots, 1.5)^\T \in \mathbb R^p, x_i = ( 1, x_{i2}, \ldots, x_{ip})^\T$ with $x_{ij} \sim \cN(0,1)$ for $j=2,\ldots, p$ and $c = \sqrt{3} \|\beta^*\|_2^2$ that makes $\EE \{c^{-1}(x_i^{\T}\beta^*)^2\}^2 = 1$. The regression errors $\varepsilon_i$ are generated from one of the following four distributions (centered if the mean is nonzero): (a) $\cN(0,1)$ (standard normal), (b) $t_2$ ($t$-distribution with 2 degrees of freedom), (c) Par$(4,2)$--Pareto distribution with scale  parameter 4 and shape parameter 2,  and (d) Burr$(1, 2, 1)$--Burr distribution or the Singh-Maddala distribution \citep{SM1976}, which is commonly used to model household income.  
First,  we fix $(n,p)= (400,20)$ and let the number of machines $m$ increase from 10 to 500.
Figure \ref{low_dim_estimation} plots the $\ell_2$-error $\|\wh \beta - \beta^*\|_2$ versus  the number of machines,  averaged over 500 replications, for all five methods.
The global and distributed AHR estimators have almost identical performance,  thus corroborating our theoretical results.
The DC-AHR estimator only performs well under symmetric errors and suffers from non-negligible bias if the errors come from asymmetric distributions.  This is largely expected because the robustification parameter for a local AHR estimator is tuned by a small subset of the data and results in a bias scaling with the local sample size. After averaging, this bias will not be offset when the number of machines increases. This points out a key drawback of the  one-shot averaging approach when dealing with skewed data distributed across local machines. The DC-OLS method has decaying estimation error as $m$ grows, but at a slower rate compared to the global and the distributed AHR estimators.  The boxplots in Figure~\ref{low_dim_box} show that the DC-OLS method often produces very poor estimates with high variability, while the distributed AHR method exhibits high degree of robustness.

Turning to uncertainty quantification, we construct approximate 95\%  confidence intervals for the slope coefficients based on distributed OLS and AHR methods.  As before,  we set $(n,p) = (400,20)$ and let $m$ increase from 10 to 500.    Table~\ref{coverage.table} shows the average coverage probabilities and widths, with standard errors in parentheses, across all slope coefficients based on 500 Monte Carlo simulations. 
 Across all the settings, the AHR-based confidence intervals are consistently accurate with tight width and reliable with high coverage. In the presence of heavy-tailed errors, the OLS-based confidence intervals tend to be wider, and standard errors of the interval width are also larger than those of the AHR method  by one order of magnitude.

\subsection{Distributed regularized Huber regression}

In the high-dimensional setting where the dimension $p$ exceeds the sample size $n$,  we compare four methods across a range of settings: (1) centralized $\ell_1$-penalized AHR estimator; (2) DC $\ell_1$-penalized AHR estimator; (3) centralized Lasso; and (4) distributed regularized AHR estimator with $T= \lfloor \log(m) \rfloor$ rounds of communication and with a local Lasso estimator as the initialization.  All four methods involve a regularization parameter $\lambda$,  which will be tuned by a held-out validation set  of size $\lfloor 0.25 N \rfloor$.  The robustification parameter $\tau$ in methods (1), (2) and (4) is selected by the self-tuning principle proposed by \cite{WZZZ2020}.   
 
The simulated data $\{ (y_i , x_i) \}_{i=1}^N$ is generated from a heteroscedastic model $y_i = x_i^\T\beta^* + c^{-1}(x_i^\T\beta^*)^2\varepsilon_i$,  where $\beta^* = (1.5, 1.5, 1.5, 1.5, 1.5, 0, \ldots, 0)^\T \in \mathbb R^p$, $x_i = (1, x_{i2}, \ldots, x_{ip})^\T$ with $x_{ij} \sim {\cN}(0, 1)$ for $j=2,\ldots, p$,  and $c = \sqrt{3} \|\beta^*\|_2^2$.   The regression errors $\varepsilon_i$ are generated from one of the four distributions considered in Section~\ref{numerical.low.dimension}, which are $\cN(0,1)$, $t_2$ (heavy-tailed and symmetric),  Par$(4,2)$ and Burr$(1, 2, 1)$ (heavy-tailed and skewed).
We fix $(n,p) = (250,1000)$ and let $m$ increase from 10 to 50. 
Figure~\ref{high_dim_estimation} plots the  $\ell_2$ error $\| \wh \beta - \beta^* \|_2$ versus the number of machines $m$,  averaged over 100 replications,  for all four methods.  The averaging $\ell_1$-penalized AHR estimator has a nondecaying estimation error as $m$ increases, which is expected because of its sub-optimal convergence rate that scales with the local sample size $n$.  The distributed AHR estimator with $T= \lfloor \log(m) \rfloor$ rounds of communication performs as good as the centralized AHR on the entire data set, and has much smaller estimation errors than the centralized Lasso in heavy-tailed cases.
Furthermore,  from the boxplots displayed in Figure~\ref{high_dim_box}  we see that the distributed AHR improves upon centralized Lasso in terms of both average performance and variability.

\begin{table}[!ht]
\begin{center}
\resizebox{\columnwidth}{!}{
\begin{tabular}{llcccccccc}
\hline
     &  & \multicolumn{2}{c}{$N(0,1)$}     & \multicolumn{2}{c}{$t_2$}     & \multicolumn{2}{c}{Par(4,2)}   & \multicolumn{2}{c}{Burr(1,2,1)} \\ \hline
     &  & \begin{tabular}[c]{@{}c@{}}Coverage\\ mean (sd)\end{tabular} & \begin{tabular}[c]{@{}c@{}}Width\\ mean (sd)\end{tabular} & \begin{tabular}[c]{@{}c@{}}Coverage\\ mean (sd)\end{tabular} & \begin{tabular}[c]{@{}c@{}}Width\\ mean (sd)\end{tabular} & \begin{tabular}[c]{@{}c@{}}Coverage\\ mean (sd)\end{tabular} & \begin{tabular}[c]{@{}c@{}}Width\\ mean (sd)\end{tabular} & \begin{tabular}[c]{@{}c@{}}Coverage\\ mean (sd)\end{tabular} & \begin{tabular}[c]{@{}c@{}}Width\\ mean (sd)\end{tabular} \\ \hline
\multirow{2}{*}{$m=50$}    & Dist-OLS  &0.93(0.011) & 0.029(0.001) & 0.93(0.011) & 0.097(0.056) & 0.93(0.012) & 0.35(0.420) & 0.94(0.011) & 0.088(0.068) \\
						     & Dist-AHR &0.95(0.007) & 0.031(0.001) & 0.95(0.009) & 0.077(0.007) & 0.95(0.008) & 0.23(0.025) & 0.95(0.009) & 0.058(0.006) \\
\multirow{2}{*}{$m=100$}  & Dist-OLS  &0.93(0.012) & 0.020(0.000) & 0.94(0.010) & 0.072(0.056) & 0.93(0.012) & 0.25(0.220) & 0.93(0.008) & 0.058(0.021) \\
						     & Dist-AHR &0.95(0.010) & 0.022(0.001) & 0.96(0.008) & 0.058(0.005) & 0.95(0.009) & 0.18(0.017) & 0.95(0.009) & 0.044(0.004) \\
\multirow{2}{*}{$m=200$}  & Dist-OLS  &0.93(0.011) & 0.014(0.000) & 0.93(0.013) & 0.052(0.031) & 0.93(0.010) & 0.18(0.095) & 0.94(0.015) & 0.044(0.021) \\
 						     & Dist-AHR &0.96(0.007) & 0.015(0.000) & 0.95(0.011) & 0.043(0.003) & 0.95(0.009) & 0.13(0.012) & 0.96(0.012) & 0.034(0.003) \\
\multirow{2}{*}{$m=300$}  & Dist-OLS  &0.93(0.013) & 0.012(0.000) & 0.94(0.011) & 0.043(0.022) & 0.94(0.011) & 0.18(0.820) & 0.93(0.008) & 0.038(0.020) \\
						     & Dist-AHR &0.95(0.010) & 0.013(0.000) & 0.96(0.010) & 0.036(0.003) & 0.95(0.012) & 0.11(0.009) & 0.96(0.009) & 0.028(0.002) \\
\multirow{2}{*}{$m=400$}  & Dist-OLS  &0.93(0.010) & 0.010(0.000) & 0.94(0.011) & 0.040(0.046) & 0.93(0.008) & 0.13(0.071) & 0.94(0.010) & 0.032(0.014) \\
 						     & Dist-AHR &0.95(0.009) & 0.011(0.000) & 0.96(0.009) & 0.031(0.002) & 0.95(0.012) & 0.10(0.008) & 0.96(0.009) & 0.025(0.002) \\
                       \hline
\end{tabular}
}
\caption{Coverage probabilities and widths (with standard errors in parentheses) of the normal-based CIs (averaged over all slope coefficients) for the distributed OLS and distributed AHR methods, based on 500 Monte Carlo simulations. } \label{coverage.table}
\end{center}
\end{table}

\begin{figure}[!htbp]
\begin{center}
\begin{tabular}{cccc}
\includegraphics[width=34mm]{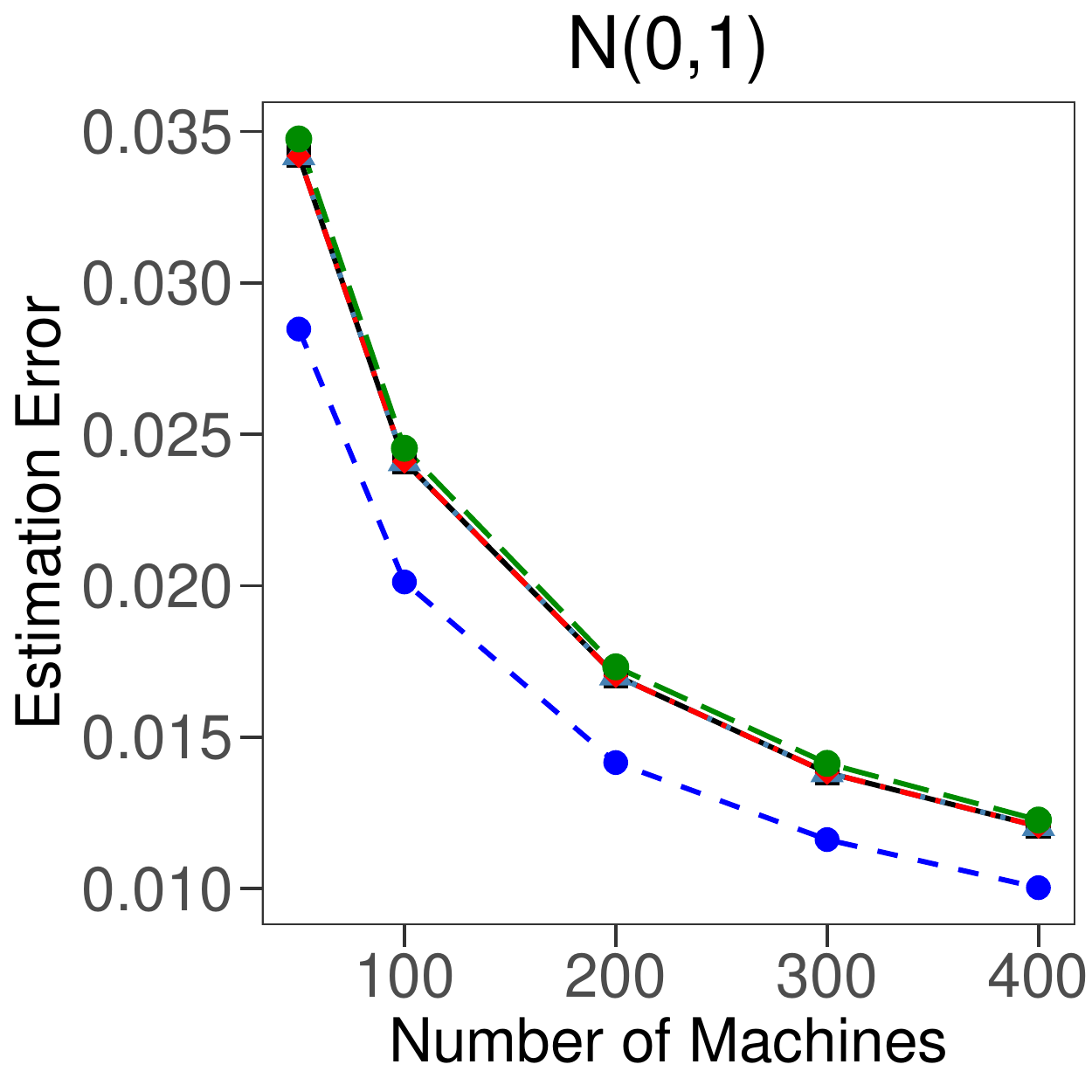} &   \includegraphics[width=34mm]{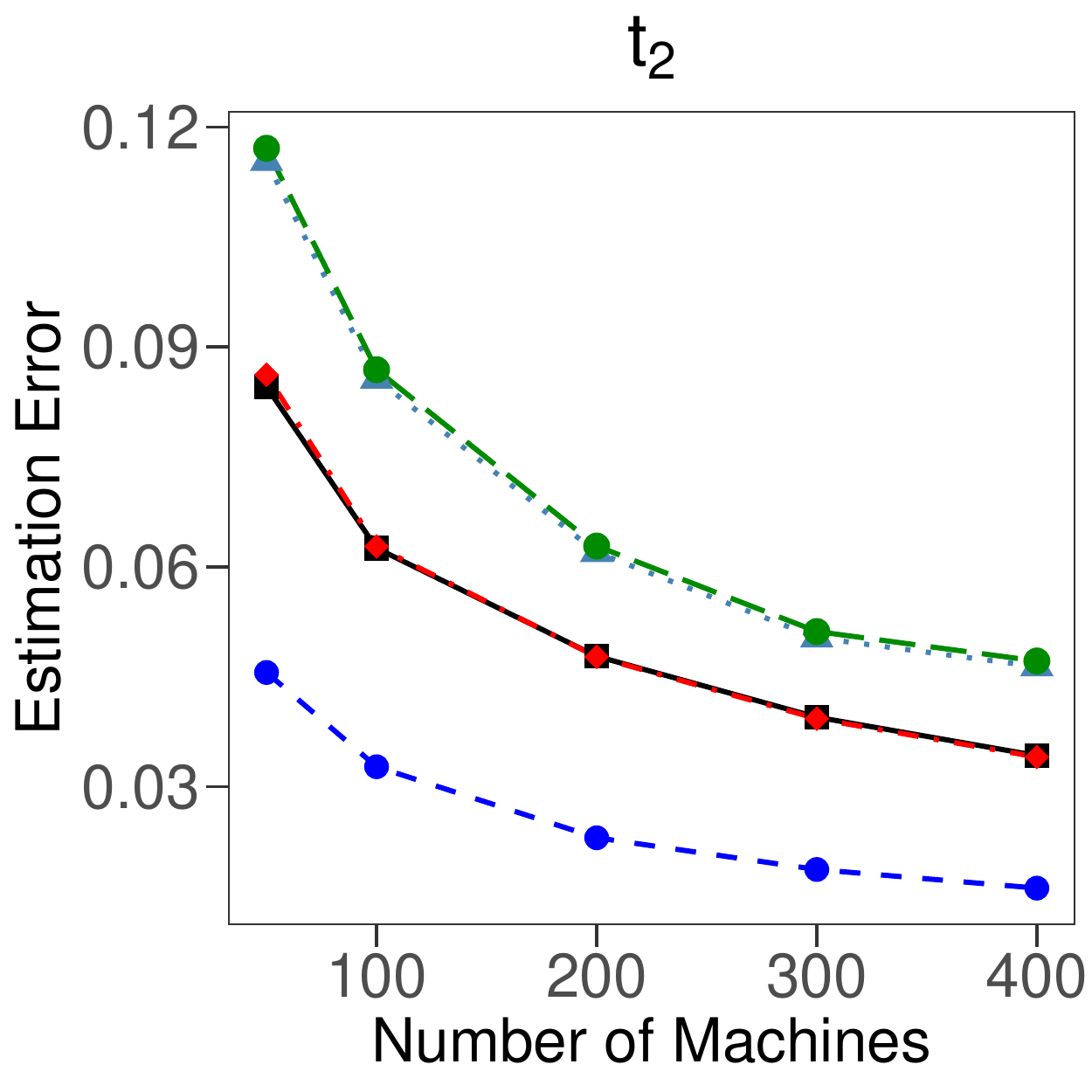} &  \includegraphics[width=34mm]{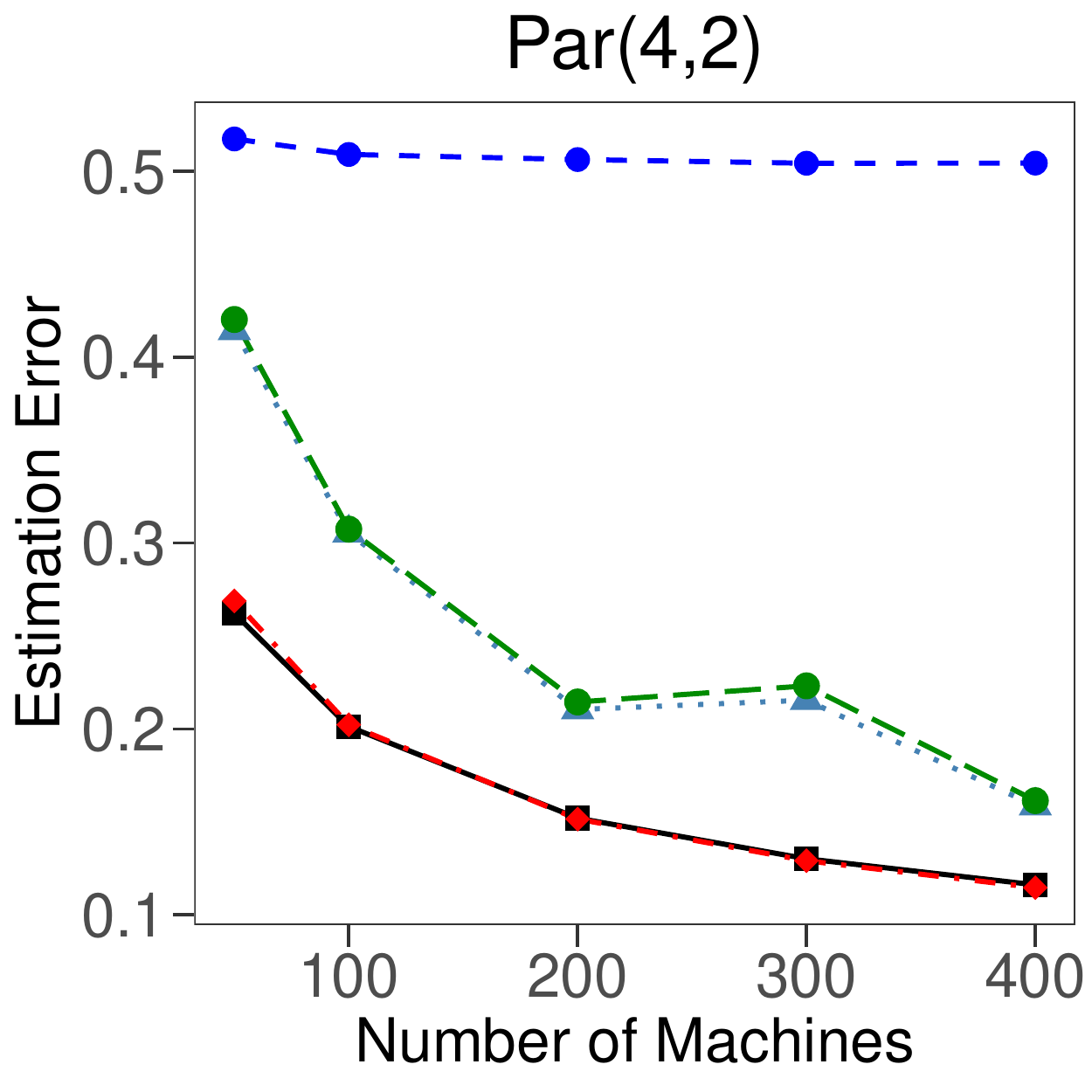} &   \includegraphics[width=34mm]{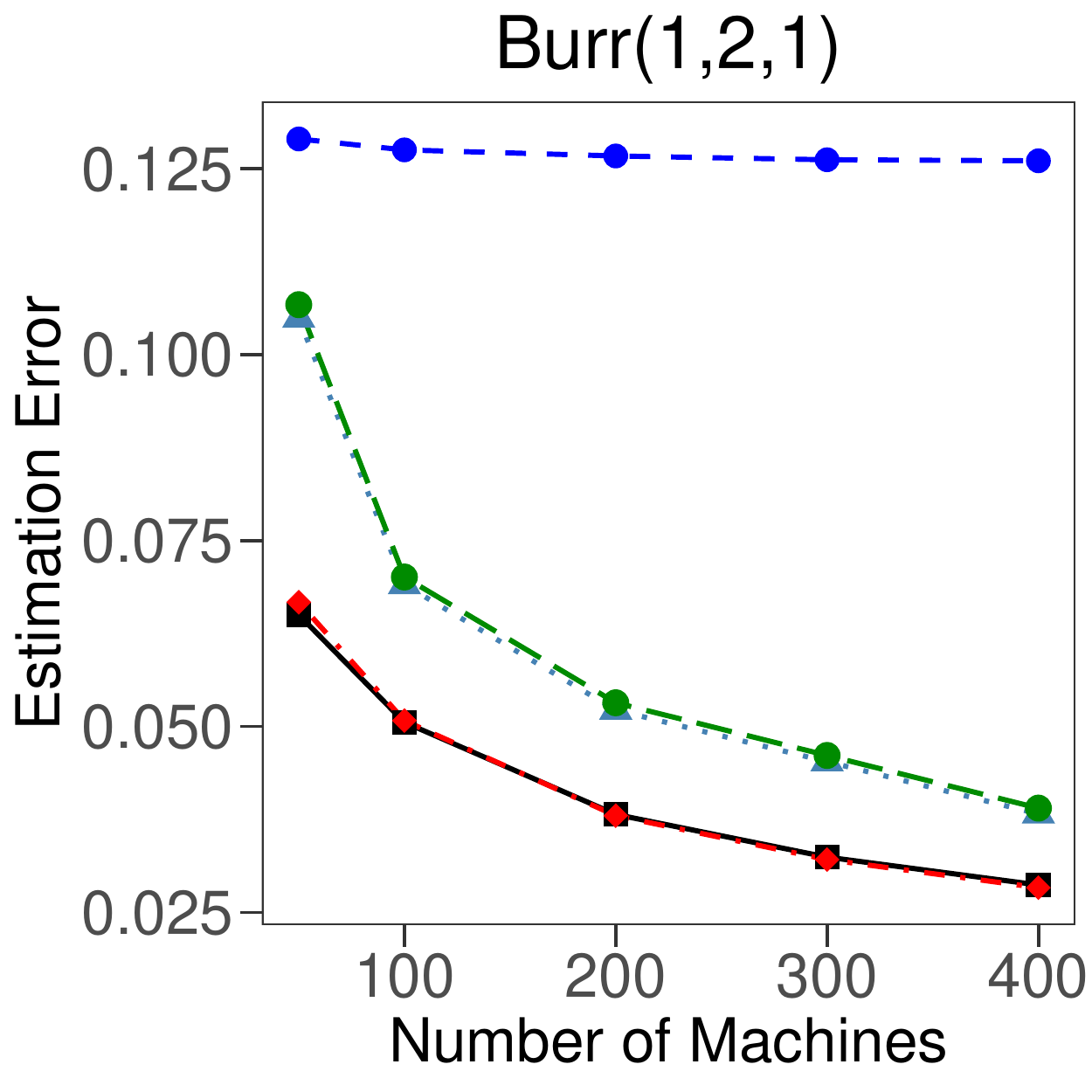}\\
(a) & (b) & (c) & (d)\\
\end{tabular}
\caption{Plots of estimation error (under $\ell_2$-norm) versus   number of machines when $(n,p)= (400, 20)$,  averaged over 500 replications.
Five estimators are presented:  global AHR estimator (\includegraphics[width= 8.5mm]{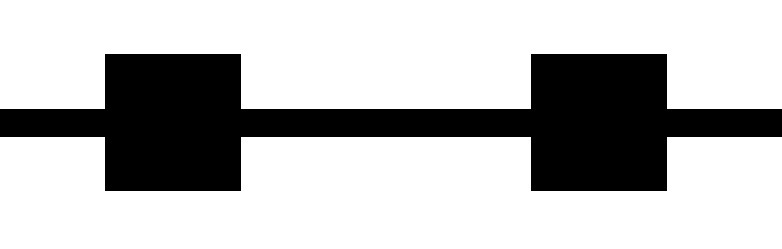}); DC-AHR estimator (\includegraphics[width= 8.5mm]{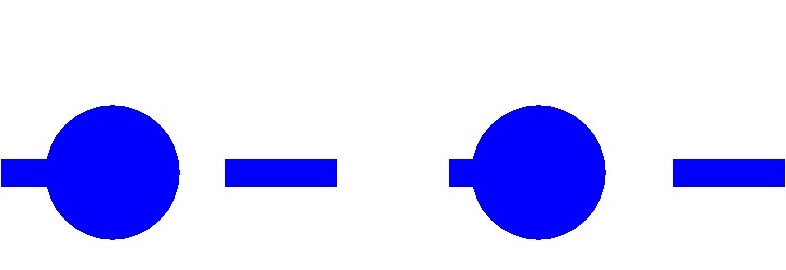}); DC-OLS estimator (\includegraphics[width= 8.5mm]{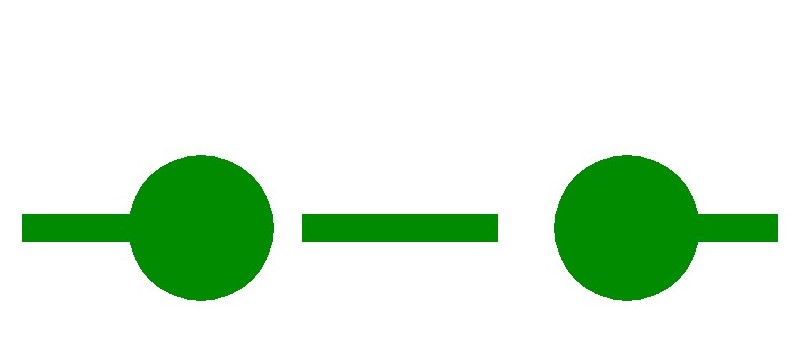});  distributed OLS estimator (\includegraphics[width= 8.5mm]{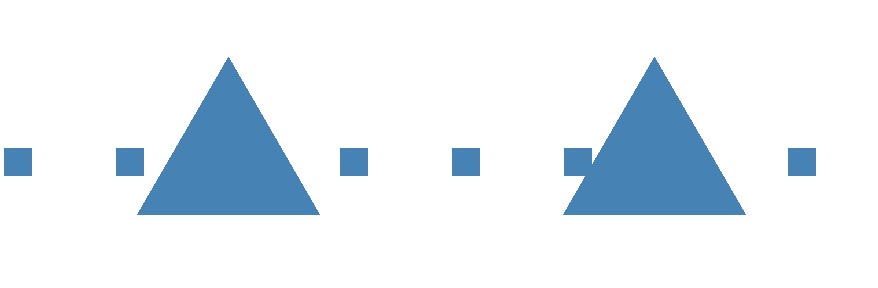}); and distributed AHR estimator (\includegraphics[width= 8.5mm]{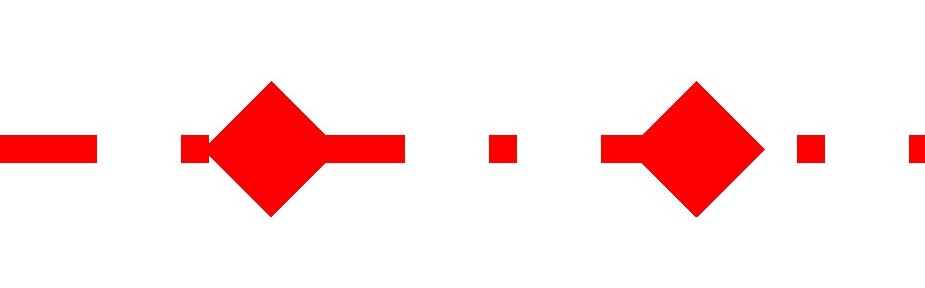}).}  \label{low_dim_estimation}
\end{center}
\end{figure}

\begin{figure}[!htbp]
\begin{center}
\begin{tabular}{cccc}
\includegraphics[width=34mm]{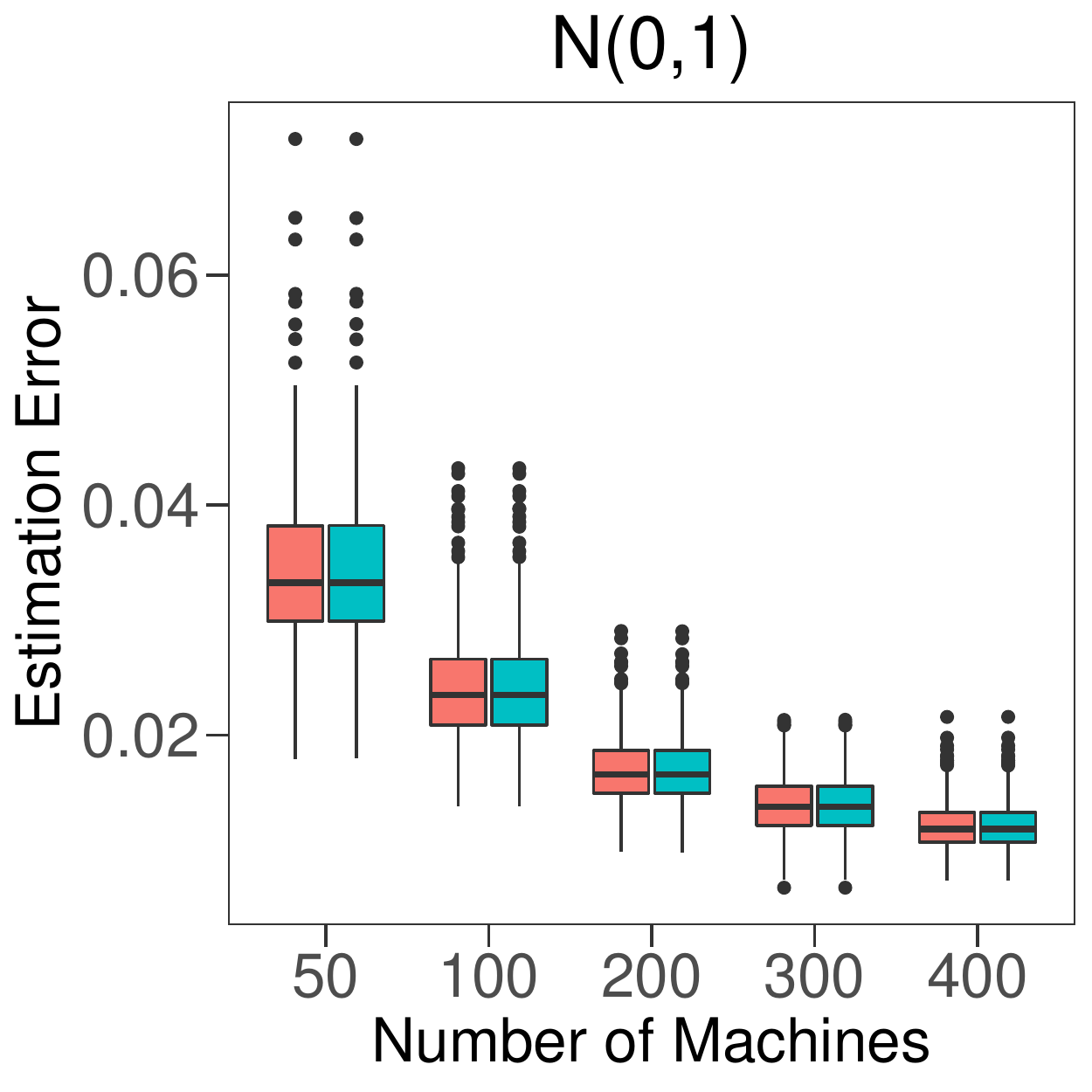} &   \includegraphics[width=34mm]{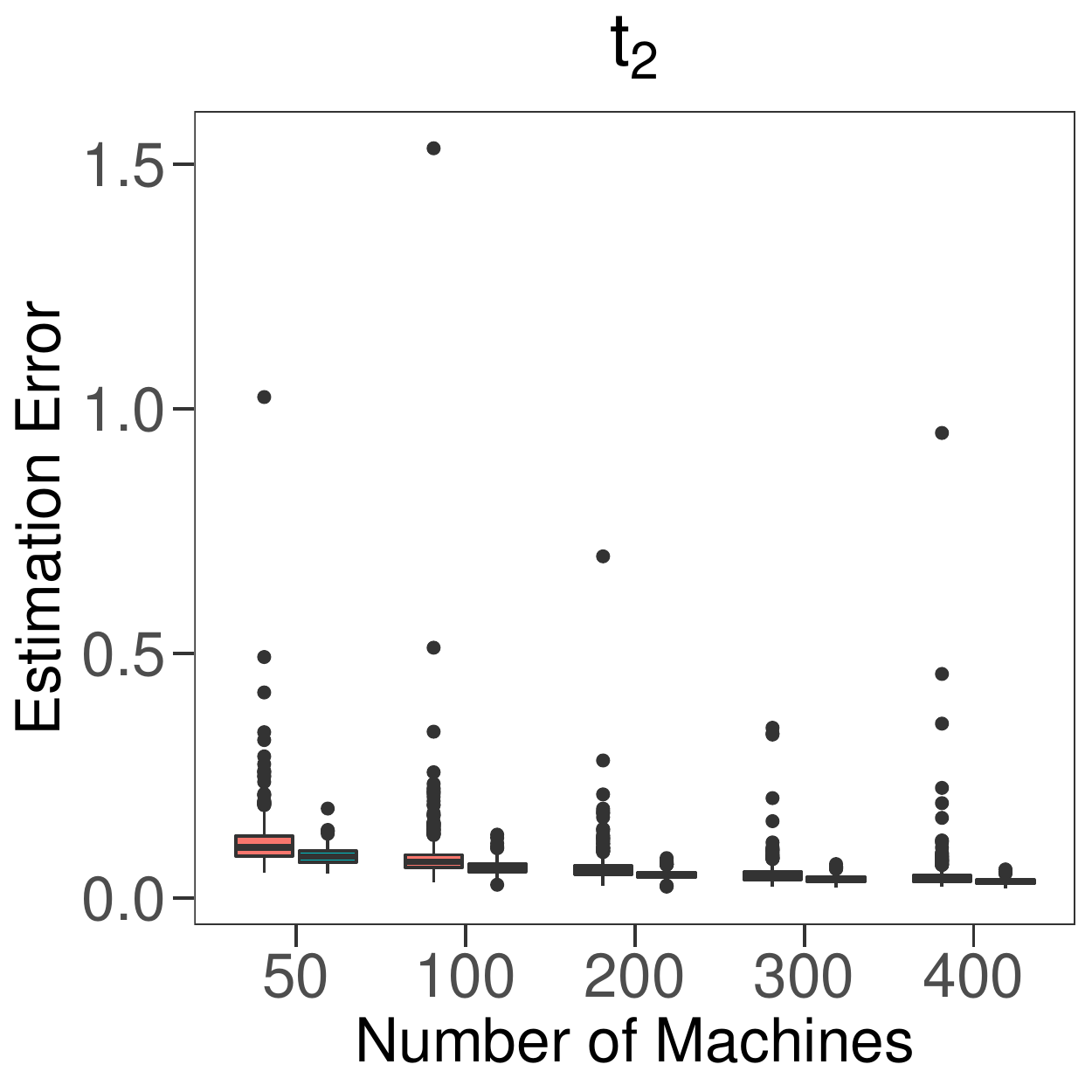} &  \includegraphics[width=34mm]{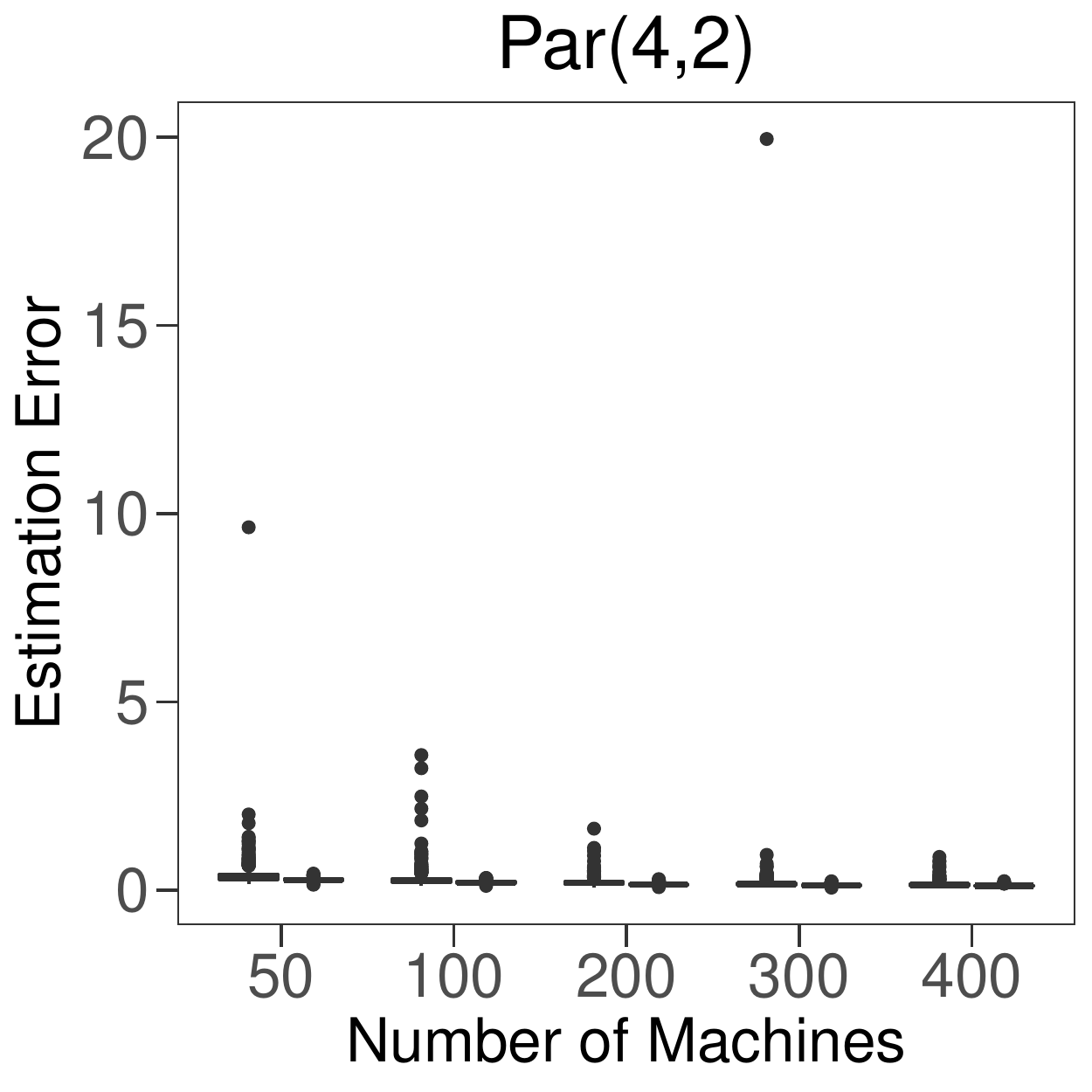} &   \includegraphics[width=34mm]{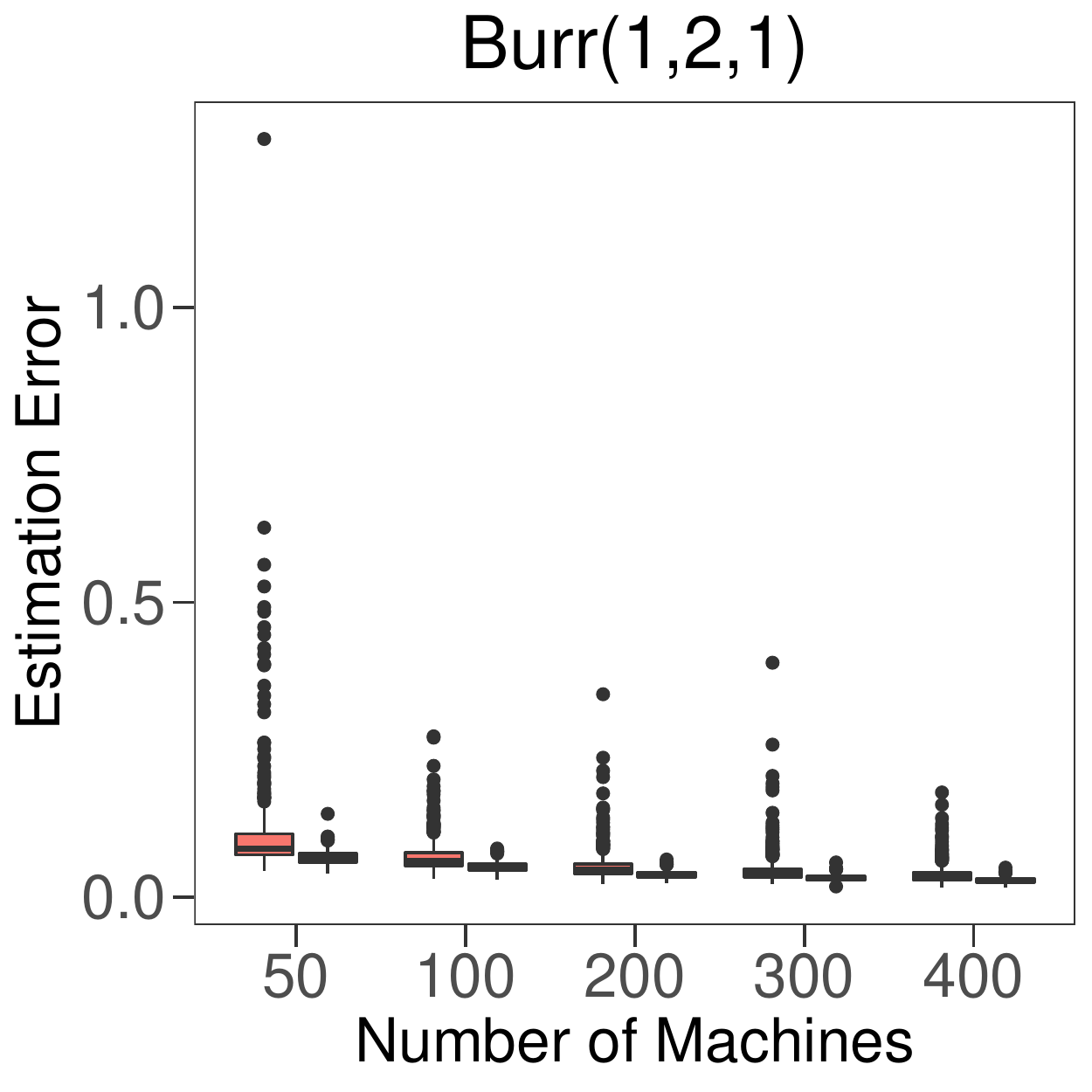}\\
(a) & (b) & (c) & (d)\\
\end{tabular}
\caption{Boxplots of estimation error (under $\ell_2$-norm) versus the  number of machines when $(n,p)= (400, 20)$ for distributed OLS estimator (\includegraphics[width= 1.53mm]{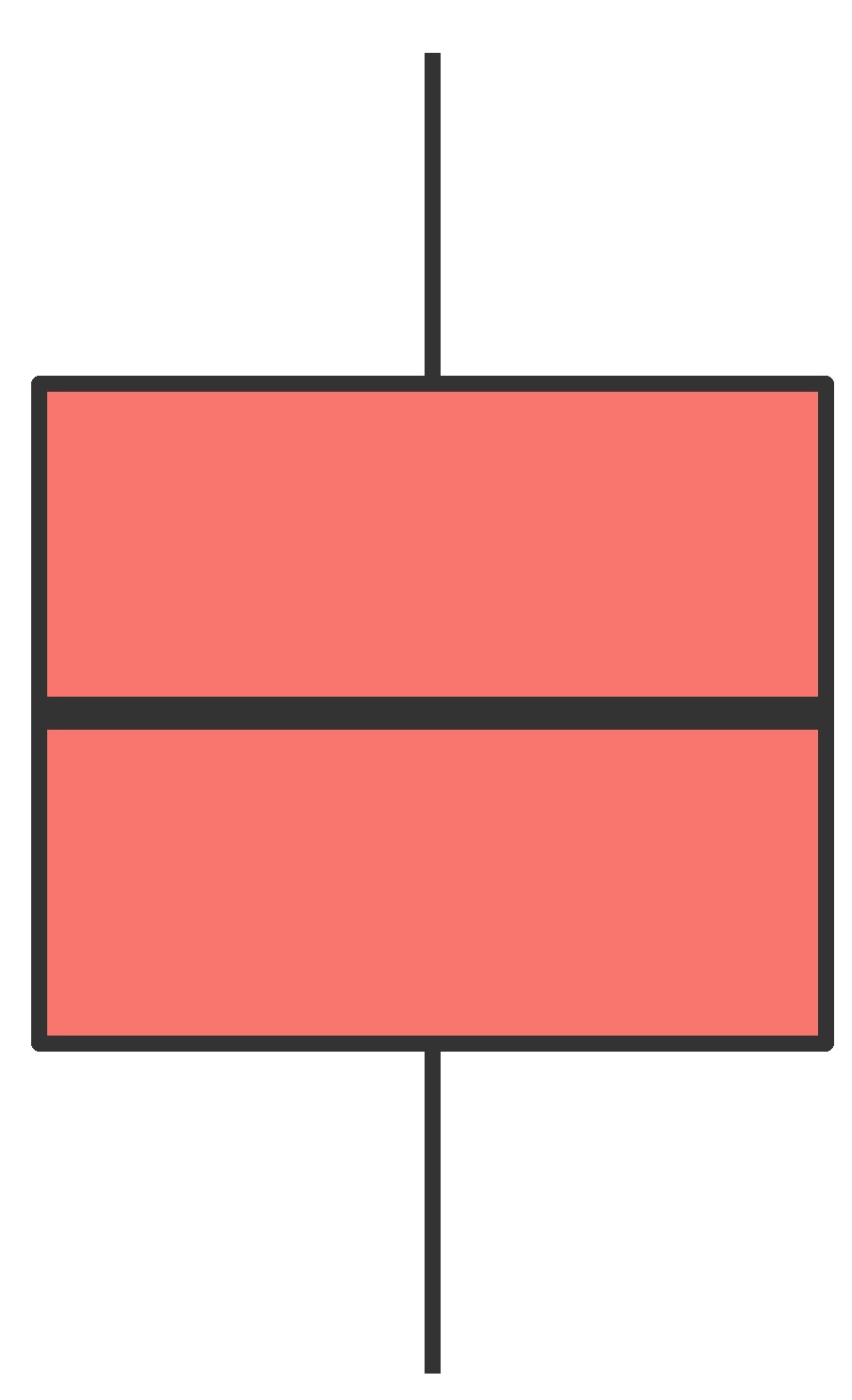}) and distributed AHR estimator (\includegraphics[width= 1.53mm]{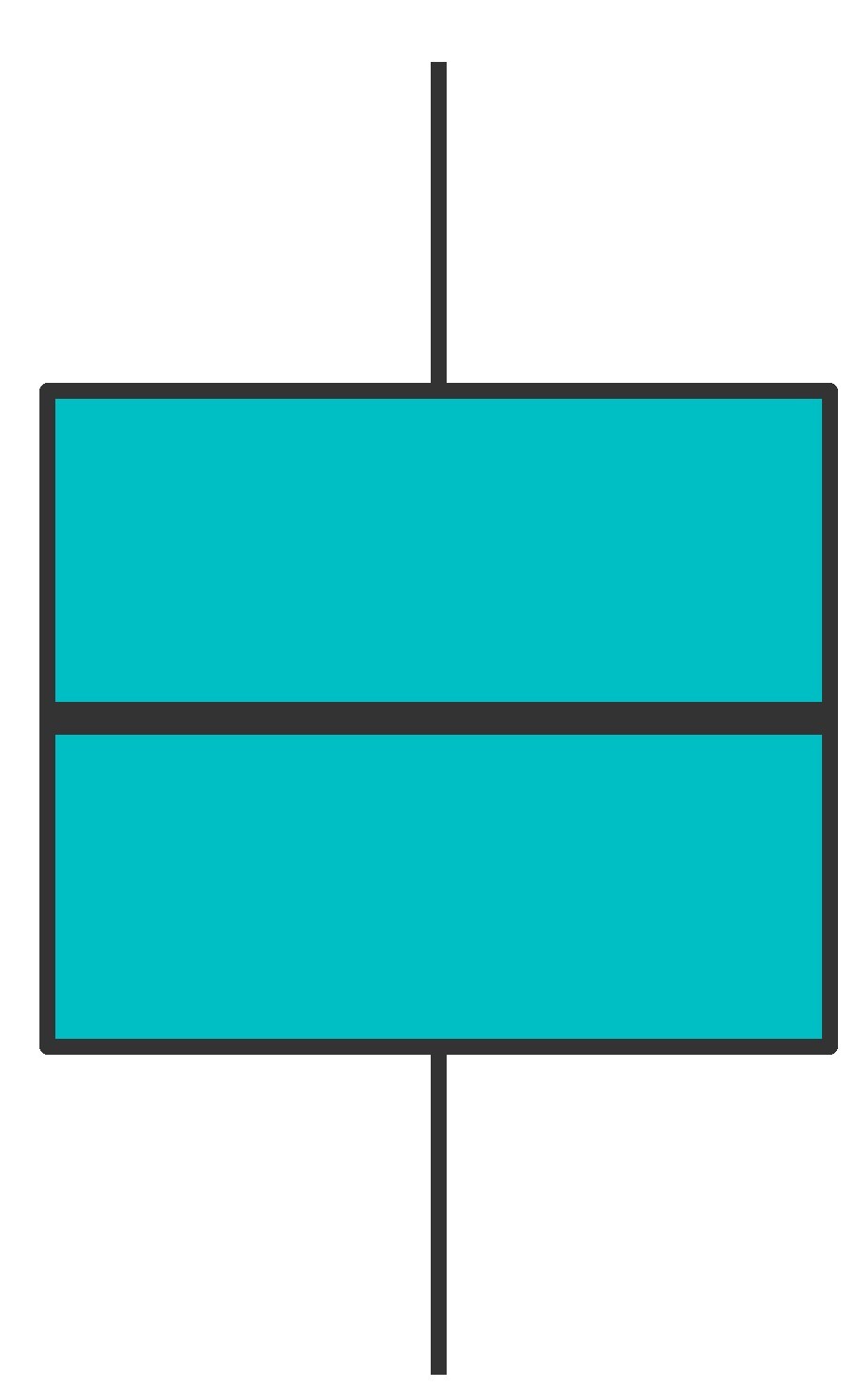}), averaged over 500 replications. }  \label{low_dim_box}
\end{center}
\end{figure}

\begin{figure}[!htbp]
\begin{center}
\begin{tabular}{cccc}
\includegraphics[width=34mm]{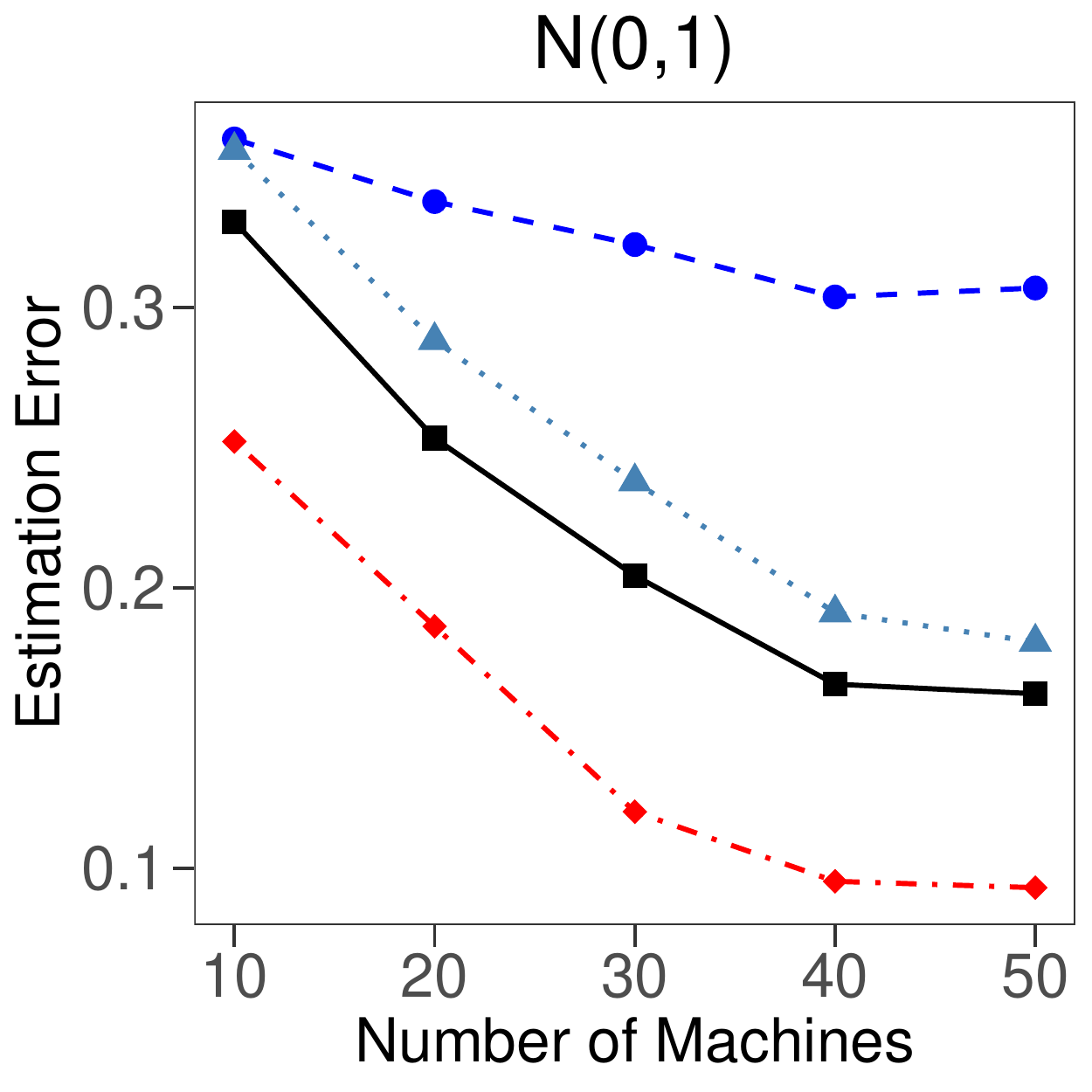} &   \includegraphics[width=34mm]{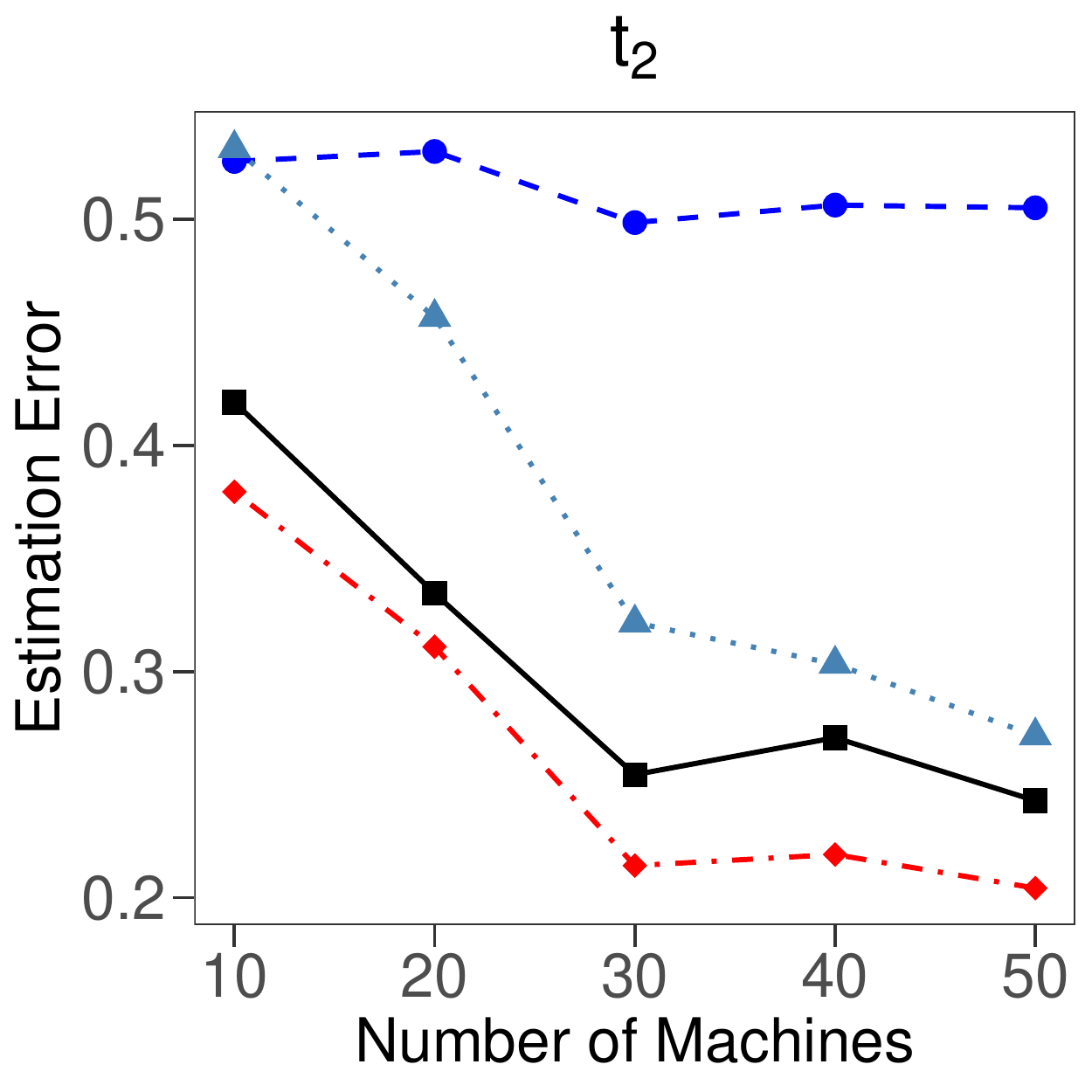} &  \includegraphics[width=34mm]{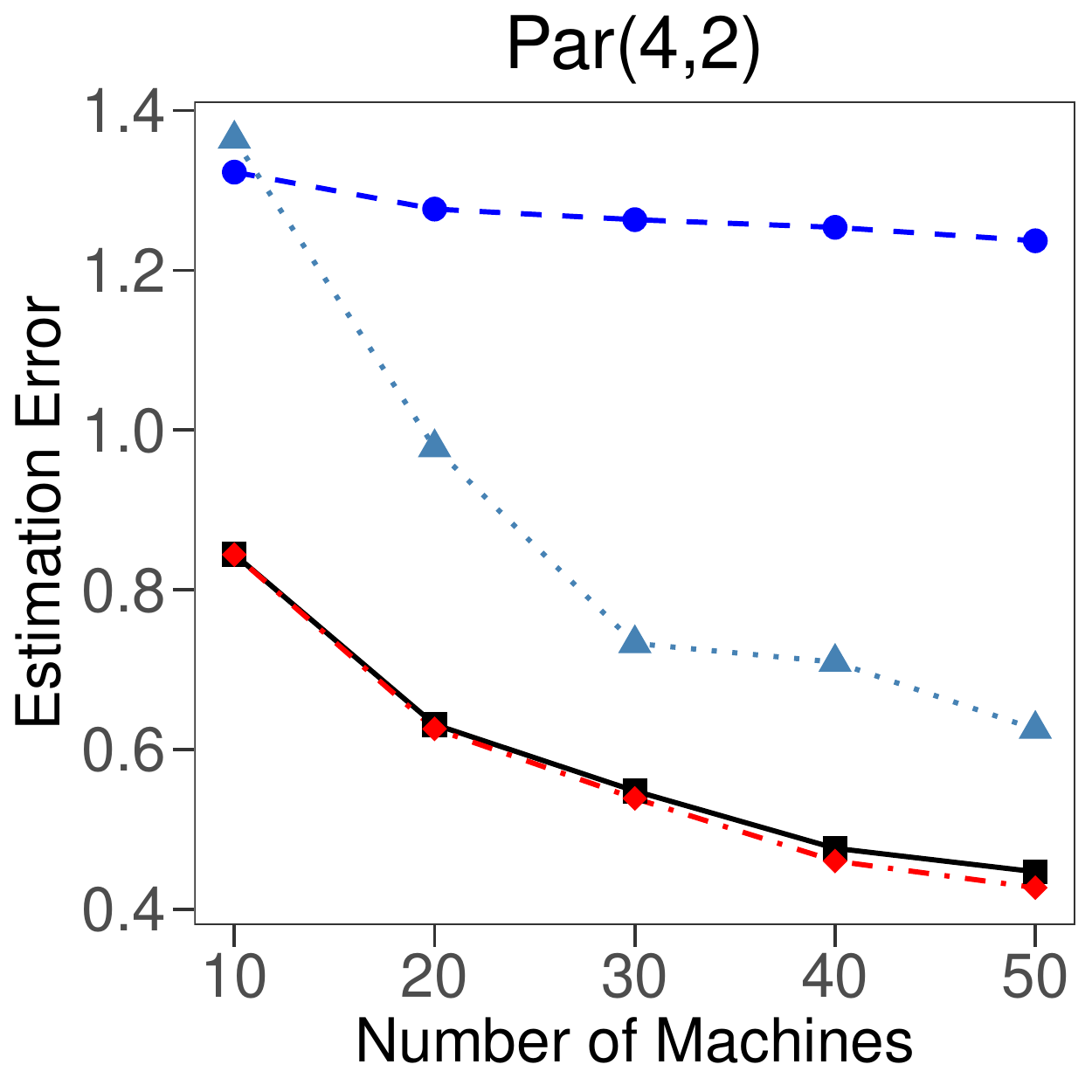} &   \includegraphics[width=34mm]{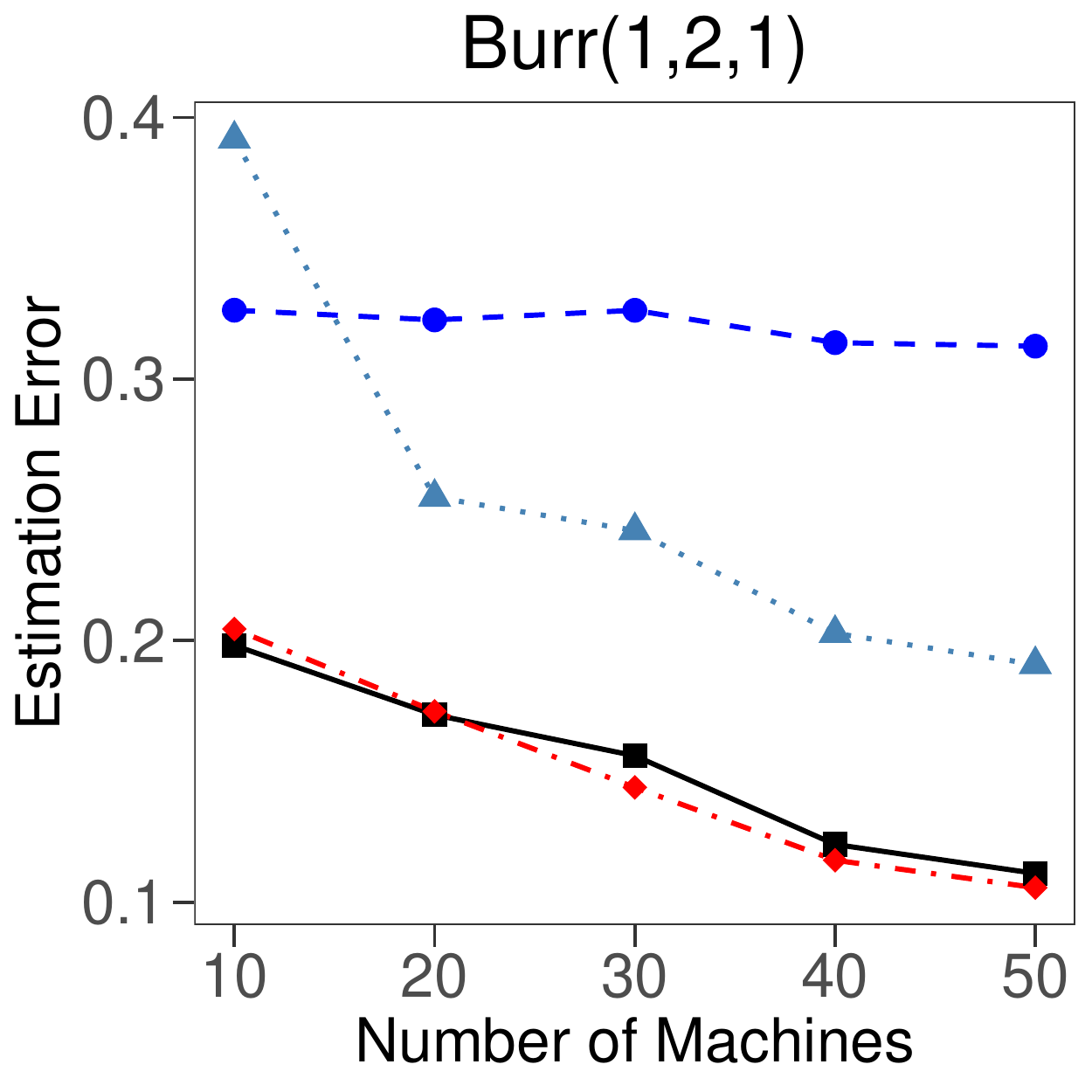}\\
(a) & (b) & (c) & (d)\\
\end{tabular}
\caption{Plots of estimation error (under $\ell_2$-norm) versus the number of machines,  over 100 replications,  under a high-dimensional heteroscedastic model when $(n,p,s)= (250,1000,5)$. 
Four estimators are presented: centralized $\ell_1$-penalized AHR estimator (\includegraphics[width= 8.5mm]{figures/lineblack.jpg}); DC $\ell_1$-penalized AHR estimator (\includegraphics[width= 8.5mm]{figures/lineblue.jpg}); centralized Lasso estimator (\includegraphics[width= 8.5mm]{figures/linesteel.jpg}); and proposed distributed regularized AHR estimator (\includegraphics[width= 8.5mm]{figures/linered.jpg})}  \label{high_dim_estimation}
\end{center}
\end{figure}

\begin{figure}[!htbp]
\begin{center}
\begin{tabular}{cccc}
\includegraphics[width=34mm]{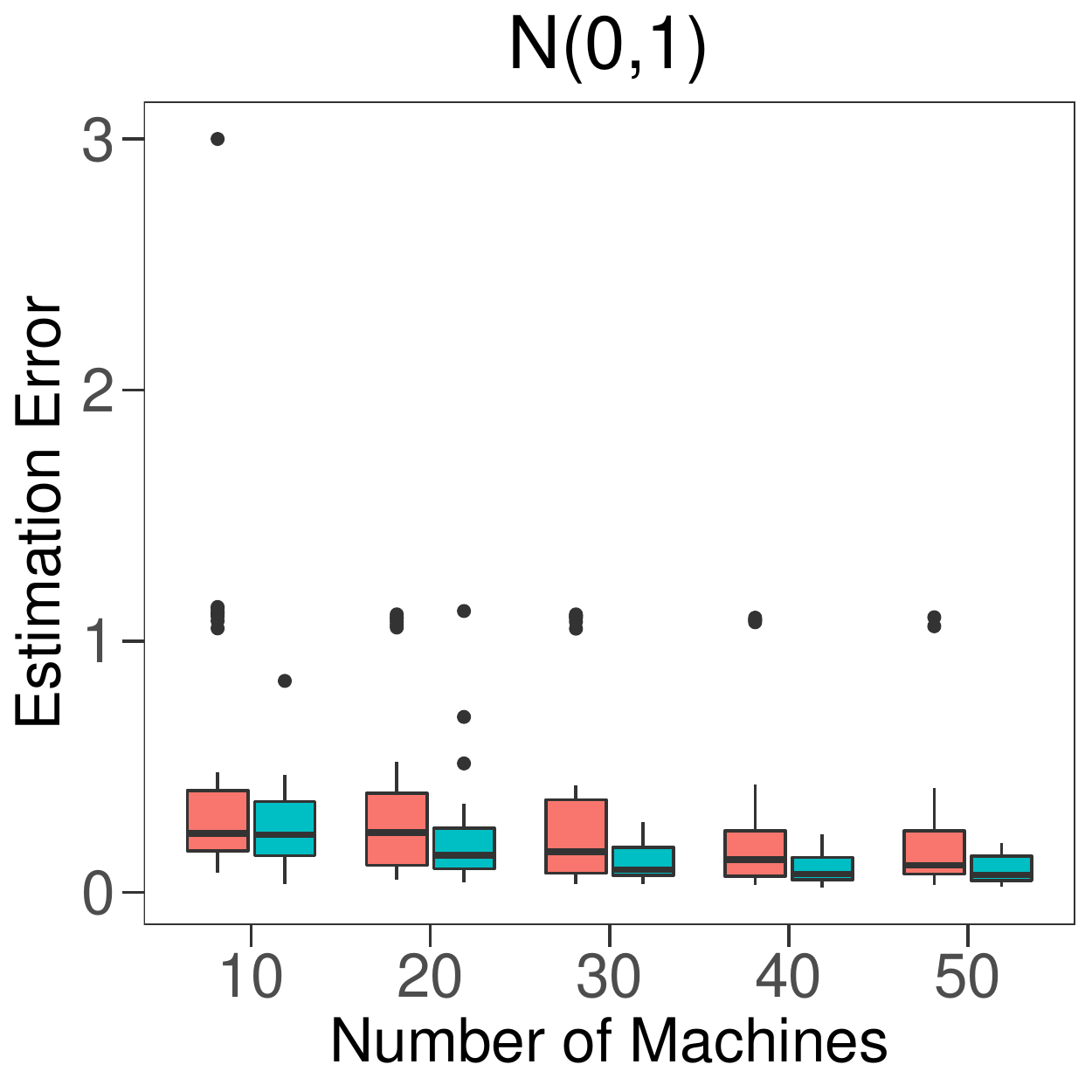} &   \includegraphics[width=34mm]{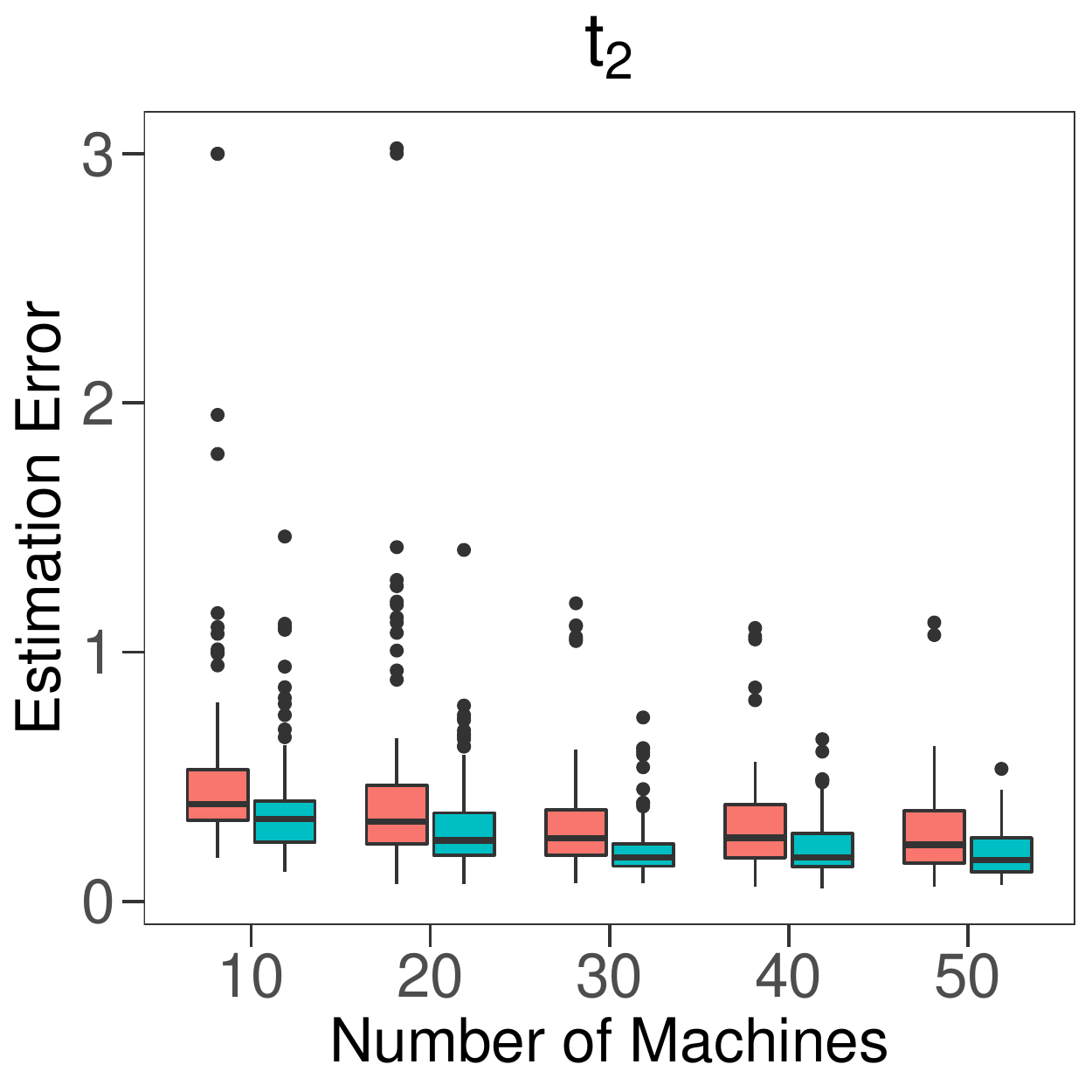} &  \includegraphics[width=34mm]{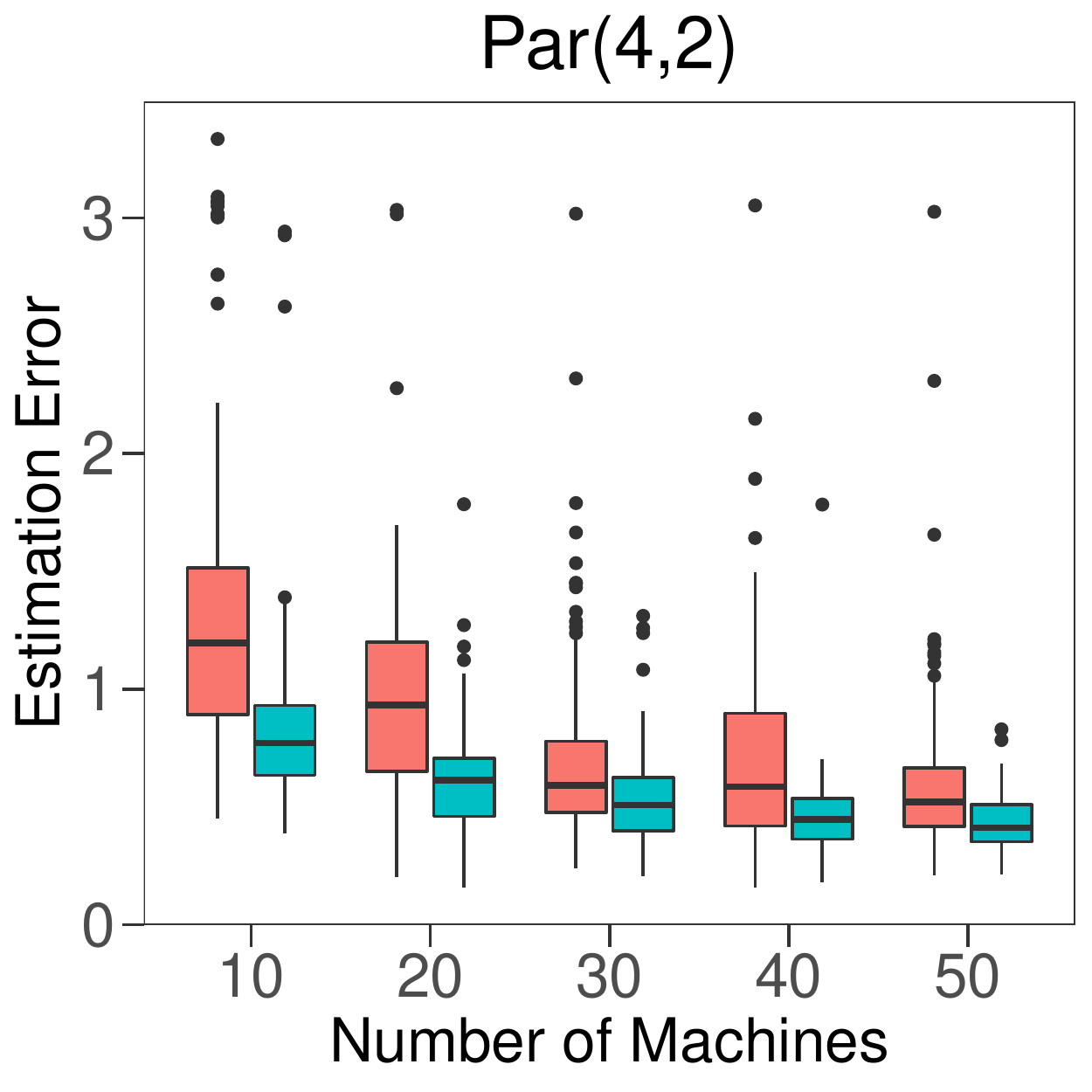} &   \includegraphics[width=34mm]{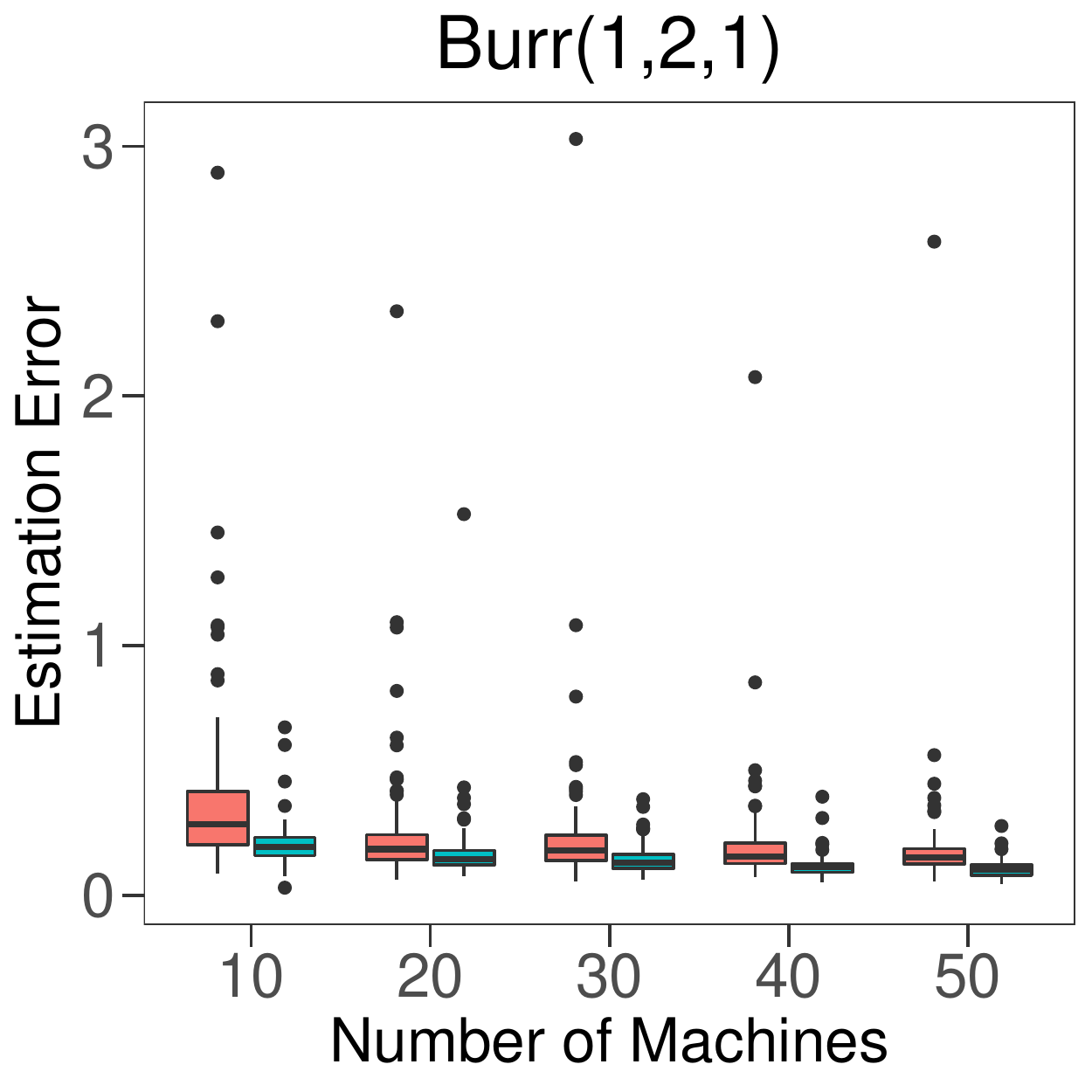}\\
(a) & (b) & (c) & (d)\\
\end{tabular}
\caption{Boxplots of estimation errors (under $\ell_2$-norm) versus the number of machines,  over 100 replications,  for  centralized Lasso  (\includegraphics[width= 1.53mm]{figures/box_ols.jpg}) and distributed AHR (\includegraphics[width= 1.53mm]{figures/box_dist.jpg}) under a high-dimensional heteroscedastic model when $(n,p,s)= (250,1000,5)$. }  \label{high_dim_box}
\end{center}
\end{figure}

\newpage
{\textbf{\Large Appendix} 

\appendix

\section{Preliminaries}

For any convex function $\psi : \RR^k \to \RR$, define the corresponding Bregman divergence $D_{\psi}(w', w) = \psi(w') - \psi(w) - \langle \nabla \psi(w), w' - w \rangle$ and its symmetrized version
\#
	\overbar D_{\psi}(w, w') = D_{\psi}(w , w') + D_{\psi}(w', w)  = \big\langle \nabla \psi(w) - \nabla \psi(w'), w - w' \big\rangle, \ \ w, w' \in \RR^k.
\#

Let $z = \Sigma^{-1/2} x \in \RR^p$ be the standardized vector of covariates such that $\EE (z z^\T ) = I_p$, and define 
$\mu_k = \sup_{u \in \mathbb{S}^{p-1}} \EE  |z^\T u |^k$ for $k\geq 1$. In particular, $\mu_2=1$.
 For every $\delta \in (0, 1]$, define
\#
	\eta_\delta = \inf\Bigg\{ \eta>0 :  \sup_{u \in \mathbb{S}^{p-1}} \EE \big\{ (z^\T u )^2 \mathbbm{1}(|z^\T u |> \eta )  \big\} \leq \delta  \Bigg\}. \label{bregman.def}
\#

Under Condition~(C1), $\eta_\delta$ depends only on $\delta$ and $\upsilon_1$, and the map $\delta \mapsto \eta_\delta$ is non-increasing with $\eta_\delta \downarrow 0$ as $\delta \to 1$. A crude bound for $\eta_\delta$, as a function of $\delta$, is  $\eta_\delta \leq  (\mu_4 / \delta)^{1/2}$.

In Lemmas~\ref{lem:local.RSC} and \ref{lem:global.score} below, we provide a lower bound on the symmetrized Bregman divergence and an upper bound on the score, respectively. The former is a direct consequence of Lemmas~C.3 and C.4 in \cite{SZF2020} with slight modifications, and the latter combines Lemmas~C.5 and C.6 in \cite{SZF2020} with $\delta=1$.
For the shifted Huber loss $\wt \cL(\cdot)$, note that
\$
 \overbar D_{\wt \cL}(\beta, \beta^* ) = \bigl\langle \nabla \wh \cL_{1,\kappa}(\beta) - \nabla \wh \cL_{1,\kappa} (\beta^*), \beta - \beta^* \bigr\rangle.
\$
 Moreover, define the $\ell_1$-cone 
\$
	\Lambda = \bigl\{ \beta \in\RR^p : \| \beta - \beta^*\|_1 \leq  4 s^{1/2} \| \beta - \beta^* \|_{\Sigma}  \bigr\}.
\$

\begin{lemma}   \label{lem:local.RSC}
Let $\kappa, r>0$ satisfy $\kappa  \geq 4 \max( \eta_{0.25} r, \sigma)$. 
\begin{itemize}
\item[(i)] Condition~(C1) ensures that, with probability at least $1-e^{-u}$,
\#   \label{rsc.lbd}
	 \overbar D_{\wt \cL}(\beta, \beta^* ) \geq \frac{1}{4} \| \beta - \beta^* \|_{\Sigma}^2 ~\mbox{ holds uniformly over }~ \beta \in \Theta(r) 
\#
as long as $n\gtrsim (\kappa /r)^2 (p+u)$.

\item[(ii)] Condition~(C2) ensures that, with probability at least $1-e^{-u}$,
\#   \label{hd-rsc.lbd}
	 \overbar D_{\wt \cL}(\beta, \beta^* ) \geq \frac{1}{4} \| \beta - \beta^* \|_{\Sigma}^2 ~\mbox{ holds uniformly over }~ \beta \in \Theta(r) \cap \Lambda
\#
as long as $n\gtrsim   (  \kappa/ r)^2 (s \log p + u)$.
\end{itemize}

\begin{proof} 
Without loss of generality, assume $\cI_1 = \{ 1, \ldots, n\}$.
It suffices to prove \eqref{hd-rsc.lbd} under Condition~(C2). Following the proof of Lemma~C.4 in \cite{SZF2020}, the key is to upper bound the expected value of the maximum $\| (1/n) \sn e_i x_i  \|_\infty$, where $e_1,\ldots, e_n$ are independent Rademacher random variables. Let $\EE _e$ be the expectation with respect to $e_1,\ldots,e_n$ conditional on the remaining variables. By Hoeffding's moment inequality,
\$
	\EE _e \Bigg\| \frac{1}{n} \sn e_i x_i \Bigg\|_\infty \leq \max_{1\leq j\leq p}  \Biggl( \frac{1}{n} \sn x_{ij}^2 \Biggr)^{1/2} \sqrt{\frac{2\log(2p)}{n}} \leq B \sqrt{\frac{2\log(2p)}{n}} ,
\$
which in turns implies $\EE  \| (1/n) \sn e_i x_i  \|_\infty \leq B \sqrt{2\log(2p)/n}$. Keep the rest of the proof the same proves the claimed bound.
\end{proof}

\end{lemma}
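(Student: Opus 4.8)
The plan is to strip the (linear) shift off the symmetrized Bregman divergence, bound what remains below pointwise by a truncated quadratic form, split that form into a deterministic mean plus an empirical-process remainder, and control each piece separately; a star-shapedness argument reduces the task from a ball to a sphere. The architecture parallels the restricted-curvature arguments in \cite{SZF2020}, the only genuinely new input being the covariate moment condition used in part (ii).

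First, since $\langle\nabla\wh\cL_{1,\kappa}(\wt\beta^{(0)})-\nabla\wh\cL_\tau(\wt\beta^{(0)}),\beta\rangle$ is linear in $\beta$ it cancels in $\overbar D$, so $\overbar D_{\wt\cL}(\beta,\beta^*)=\langle\nabla\wh\cL_{1,\kappa}(\beta)-\nabla\wh\cL_{1,\kappa}(\beta^*),\beta-\beta^*\rangle$. Putting $\delta=\beta-\beta^*$ and $\varepsilon_i=y_i-x_i^\T\beta^*$, and using $\ell_\kappa''=\mathbbm{1}(|\cdot|\le\kappa)$ a.e., the fundamental theorem of calculus gives
\[
\overbar D_{\wt\cL}(\beta,\beta^*)=\frac1n\sum_{i\in\cI_1}(x_i^\T\delta)^2\int_0^1\mathbbm{1}\big(|\varepsilon_i-sx_i^\T\delta|\le\kappa\big)\,ds\ \ge\ \frac1n\sum_{i\in\cI_1}(x_i^\T\delta)^2\,\mathbbm{1}(|\varepsilon_i|\le\tfrac\kappa2)\,\mathbbm{1}(|x_i^\T\delta|\le\tfrac\kappa2)=:Q_n(\delta),
\]
because $|\varepsilon_i|,|x_i^\T\delta|\le\kappa/2$ force $|\varepsilon_i-sx_i^\T\delta|\le\kappa$ for every $s\in[0,1]$. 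Moreover $Q_n(t\delta)\ge t^2Q_n(\delta)$ for $t\in[0,1]$ (the only $\delta$-dependence beyond the factor $(x_i^\T\delta)^2$ is an indicator that grows as $\delta$ shrinks), and both $\Theta(r)$ and $\Lambda$ are scale-compatible; hence it suffices to prove $Q_n(\delta)\ge\tfrac14 r^2$ uniformly over $\{\|\delta\|_\Sigma=r\}$ (and, in part (ii), over the additional restriction $\delta\in\Lambda$), after which the stated ball version follows by rescaling.

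For the mean, writing $x^\T\delta=\|\delta\|_\Sigma z^\T u$ with $u=\Sigma^{1/2}\delta/\|\delta\|_\Sigma\in\mathbb{S}^{p-1}$ one has $\EE Q_n(\delta)\ge\|\delta\|_\Sigma^2\big(1-\EE[(z^\T u)^2\mathbbm{1}(|\varepsilon|>\tfrac\kappa2)]-\EE[(z^\T u)^2\mathbbm{1}(|z^\T u|>\tfrac{\kappa}{2\|\delta\|_\Sigma})]\big)$. Conditioning on $x$ and Chebyshev bound the first correction by $4\sigma^2/\kappa^2\le\tfrac14$ (as $\kappa\ge4\sigma$), while the definition of $\eta_{0.25}$ bounds the second by $0.25$ (as $\kappa/(2r)\ge2\eta_{0.25}\ge\eta_{0.25}$ on the sphere). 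Thus $\EE Q_n(\delta)\ge\tfrac12 r^2$.

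The crux is the uniform deviation $\sup_{\delta}|Q_n(\delta)-\EE Q_n(\delta)|\le\tfrac14 r^2$ with probability $\ge1-e^{-u}$. I would (i) replace the non-Lipschitz map $t\mapsto t^2\mathbbm{1}(|t|\le\kappa/2)$ by the $\kappa$-Lipschitz truncation $\min(t^2,(\kappa/2)^2)$, absorbing the defect $(\kappa/2)^2\mathbbm{1}(|t|>\kappa/2)\le t^2$ into a side term handled by the same machinery; (ii) symmetrize and apply the Rademacher contraction inequality to pull out the Lipschitz constant $\kappa$ and reduce to the linear class $\{x\mapsto x^\T\delta\}$; (iii) bound the resulting Rademacher average — under (C1) it is $\lesssim\kappa r\sqrt{(p+u)/n}$ by sub-Gaussianity of $z$, and under (C2) the constraint $\delta\in\Lambda$ gives $\|\delta\|_1\le4\sqrt{s}\,\|\delta\|_\Sigma$, so the process is dominated by $\kappa r\sqrt{s}\,\|\tfrac1n\sum_{i\in\cI_1}e_ix_i\|_\infty\lesssim\kappa r\sqrt{s\log p/n}$ via Hoeffding's maximal inequality (exactly the modification needed to pass from sub-Gaussian to bounded-component covariates); and (iv) upgrade from expectation to high probability via Talagrand's (or a bounded-difference) concentration inequality, the process being bounded by $\kappa^2/4$. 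Choosing $n\gtrsim(\kappa/r)^2(p+u)$, respectively $n\gtrsim(\kappa/r)^2(s\log p+u)$, forces the deviation below $\tfrac14 r^2$; combined with $\EE Q_n(\delta)\ge\tfrac12 r^2$ this yields $Q_n(\delta)\ge\tfrac14 r^2$ on the sphere, hence $\overbar D_{\wt\cL}(\beta,\beta^*)\ge\tfrac14\|\beta-\beta^*\|_\Sigma^2$ throughout $\Theta(r)$ (resp.\ $\Theta(r)\cap\Lambda$). I expect step (iv) — together with squeezing the sharp $s\log p$ (rather than a crude $p\log p$) rate out of the $\ell_1$-cone geometry and Hoeffding's bound in part (ii) — to be the main technical obstacle; the remaining arithmetic with constants is dictated cleanly by the hypothesis $\kappa\ge4\max(\eta_{0.25}r,\sigma)$.
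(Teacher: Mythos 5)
Your proposal is correct and follows essentially the same route as the paper, which simply defers to the restricted-strong-convexity argument of Lemmas C.3--C.4 in \cite{SZF2020}: cancel the linear shift, minorize the symmetrized Bregman divergence by the truncated quadratic form with thresholds $\kappa/2$ (using $\kappa\geq 4\sigma$ for the Chebyshev correction and $\kappa\geq 4\eta_{0.25}r$ for the $\eta_{0.25}$ correction), reduce from the ball to the sphere by star-shapedness, and control the uniform deviation by symmetrization, contraction and Talagrand/Bernstein-type concentration, with the only (C2)-specific modification being exactly the one the paper records, namely $\EE\|\tfrac1n\sum_{i\in\cI_1}e_ix_i\|_\infty\leq B\sqrt{2\log(2p)/n}$ via Hoeffding combined with the $\ell_1$-cone constraint $\|\delta\|_1\leq 4\sqrt{s}\,\|\delta\|_\Sigma$. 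The one cosmetic suggestion is to use a Lipschitz \emph{minorant} of $t^2\mathbbm{1}(|t|\leq\kappa/2)$ (equal to $t^2$ on $|t|\leq\kappa/4$ and vanishing for $|t|\geq\kappa/2$, which still fits the hypothesis $\kappa\geq 4\eta_{0.25}r$) rather than the majorant $\min(t^2,(\kappa/2)^2)$ plus a defect term, which avoids having to control the extra indicator process.
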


Consider the gradient $\nabla \wh \cL_\tau(\cdot)$ evaluated at $\beta^*$, namely,
$$
	\nabla \wh  \cL_\tau(\beta^*) =  - \frac{1}{N} \sum_{i=1}^N  \psi_\tau(\varepsilon_i) x_i, 
$$
where $\psi_\tau(u)=\ell_\tau'(u)$. The following lemma provides high probability bounds on both $\ell_2$- and $\ell_\infty$-norms of $\nabla \wh \cL_\tau(\beta^*) $. Recall that $\Omega = \Sigma^{-1}$.

\begin{lemma} \label{lem:global.score}
Let $u>0$ and write $\cL_\tau(\cdot) = \EE  \wh \cL_\tau (\cdot)$.
\begin{itemize}
\item[(i)]  Condition~(C1) ensures that, with probability at least $1-e^{-u}$,
\#
	 \|   \nabla \wh  \cL_\tau(\beta^* ) - \nabla   \cL_\tau(\beta^* ) \|_{\Sigma^{-1}} \leq C_0 \Bigl\{   \sigma \sqrt{(p+u)/N}   + \tau(p+u)/N \Bigr\}   , \label{global.score.bound}
\#
where  $C_0>0$ is a constant  depending only on $ \upsilon_1$. Moreover, $\| \nabla   \cL_\tau(\beta^* ) \|_{\Omega} \leq \sigma^2/\tau$.

\item[(ii)]    Condition~(C2) ensures that, with probability at least $1- e^{-u}$,
\#
	 \|   \nabla \wh  \cL_\tau(\beta^* )  - \nabla   \cL_\tau(\beta^* ) \|_\infty  \leq  \sigma\sigma_u \sqrt{\frac{2 \{\log(2p) + u \}}{N}} + \frac{B \tau}{3} \frac{\log(2p)+u}{N} . \label{global.score.maxbound}
\#
 
\end{itemize}
\end{lemma}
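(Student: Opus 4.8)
The plan is to reduce both bounds to concentration of the random vector $\frac{1}{N}\sum_{i=1}^N \psi_\tau(\varepsilon_i) x_i$ around its mean, which is exactly what Lemmas~C.5 and C.6 of \cite{SZF2020} establish (specialized to $\delta=1$); below I outline the self-contained argument. First I would pass to standardized covariates $z_i = \Sigma^{-1/2} x_i$, so that by the isometry $\|v\|_{\Sigma^{-1}} = \|\Sigma^{-1/2} v\|_2$ the quantity in (i) equals the $\ell_2$-norm of $S := -\frac{1}{N}\sum_{i=1}^N\{\psi_\tau(\varepsilon_i) z_i - \EE\,\psi_\tau(\varepsilon_i) z_i\}$, while (ii) is an $\ell_\infty$-norm of the analogous object in the original coordinates. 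Two elementary facts drive everything: $|\psi_\tau(u)| \le \min(|u|,\tau)$, so that $\psi_\tau(\varepsilon_i)^2 \le \varepsilon_i^2$ and $|\psi_\tau(\varepsilon_i)| \le \tau$; and $\EE(\varepsilon^2 \mid x) \le \sigma^2$, so that each summand's conditional second moment is at most $\sigma^2$ times a covariate factor.

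For part (i), I would fix $u \in \mathbb{S}^{p-1}$ and bound the scalar average $\frac{1}{N}\sum_{i=1}^N\{\psi_\tau(\varepsilon_i)\langle z_i, u\rangle - \EE(\cdot)\}$ via Bernstein's inequality: the summands are products of the bounded variable $\psi_\tau(\varepsilon_i)$ and the sub-Gaussian variable $\langle z_i, u\rangle$ (parameter $\upsilon_1$ under (C1)), hence sub-exponential, with $\EE\{\psi_\tau(\varepsilon_i)\langle z_i,u\rangle\}^2 \le \sigma^2$ (conditioning on $x_i$) and higher moments $\EE|\psi_\tau(\varepsilon_i)\langle z_i,u\rangle|^k \le \frac{k!}{2} V (\upsilon_1\tau)^{k-2}$ with $V$ of order $\sigma^2$ up to a constant depending only on $\upsilon_1$. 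This gives $\PP\{|\cdot|\ge t\}\le 2\exp(-c_{\upsilon_1} N\min(t^2/\sigma^2, t/\tau))$. I would then take a $\tfrac12$-net $\mathcal N$ of $\mathbb S^{p-1}$ of cardinality at most $5^p$, use $\|S\|_2 \le 2\max_{u\in\mathcal N}\langle S, u\rangle$, apply a union bound, and choose $t \asymp \sigma\sqrt{(p+u)/N} + \tau(p+u)/N$ so the exceedance probability is at most $e^{-u}$. For the ``moreover'' bound, since $\EE(\varepsilon\mid x)=0$ one has $\EE\{\varepsilon z\}=0$, hence $\|\nabla\cL_\tau(\beta^*)\|_\Omega = \|\EE\{(\psi_\tau(\varepsilon) - \varepsilon)z\}\|_2$; using the pointwise inequality $|\psi_\tau(\varepsilon) - \varepsilon| = (|\varepsilon| - \tau)_+ \le \varepsilon^2/\tau$ and writing the norm variationally, $\|\nabla\cL_\tau(\beta^*)\|_\Omega = \sup_{u\in\mathbb S^{p-1}}\EE\{(\psi_\tau(\varepsilon)-\varepsilon)\langle z,u\rangle\} \le \tau^{-1}\sup_u \EE\{\EE(\varepsilon^2\mid x)\,|\langle z,u\rangle|\} \le \sigma^2\tau^{-1}\sup_u (\EE\langle z,u\rangle^2)^{1/2} = \sigma^2/\tau$, by Cauchy--Schwarz and $\EE\langle z,u\rangle^2 = 1$.

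For part (ii), I would argue coordinatewise: for each $j$ the summands $\psi_\tau(\varepsilon_i) x_{ij}$ are bounded by $B\tau$ (by (C2)) with variance at most $\EE\{\varepsilon_i^2 x_{ij}^2\} \le \sigma^2\sigma_{jj} \le \sigma^2\sigma_u^2$, so Bernstein's inequality for bounded summands yields, with probability at least $1-2e^{-s}$, a deviation of at most $\sigma\sigma_u\sqrt{2s/N} + B\tau s/(3N)$. Taking $s = \log(2p)+u$ and a union bound over $j=1,\dots,p$ --- which costs a factor $p$, absorbed into the $e^{-\log(2p)}$ --- gives the stated $\ell_\infty$ bound with probability at least $1 - 2p\,e^{-\log(2p)-u} = 1 - e^{-u}$. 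The hardest step is the uniform control over $\mathbb S^{p-1}$ in part (i): because the empirical process there has only sub-exponential (not bounded) increments, the discretization must go through the factor-$2$ net estimate above rather than a naive chaining bound, and one has to keep track of how $\sigma$ and $\tau$ enter the two regimes of the Bernstein tail; everything else is routine bookkeeping with Bernstein's inequality.
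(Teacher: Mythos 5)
Your proposal is correct and takes essentially the same route as the paper: part (ii) reproduces the paper's argument verbatim (coordinatewise Bernstein with $|\psi_\tau(\varepsilon_i)x_{ij}|\leq B\tau$, conditional variance bound $\sigma^2\sigma_{jj}$, then $z=\log(2p)+u$ and a union bound over $j$), while for part (i) the paper simply cites Lemma~C.3 of \cite{SZF2020}, whose content is exactly the covering-net plus sub-exponential Bernstein bound you reconstruct. Your short direct derivation of $\|\nabla\cL_\tau(\beta^*)\|_{\Omega}\leq\sigma^2/\tau$ via $|\psi_\tau(\varepsilon)-\varepsilon|=(|\varepsilon|-\tau)_+\leq\varepsilon^2/\tau$, conditioning, and Cauchy--Schwarz is also sound and fills in a step the paper leaves to the citation.
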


\begin{proof}
The bound \eqref{global.score.bound} is an immediate consequence of Lemma~C.3 in \cite{SZF2020}.
It suffices to prove \eqref{global.score.maxbound} under Condition~(C2). Note that 
\$
 \|   \nabla \wh  \cL_\tau(\beta^* )  - \nabla   \cL_\tau(\beta^* ) \|_\infty  = \max_{1\leq j\leq p} \Bigg|  \frac{1}{N} \sum_{i=1}^N (1-\EE )  \xi_i x_{ij} \Bigg|,
\$
where $\xi_i := \psi_\tau(\varepsilon_i)$ satisfy $|\xi_i|\leq \tau$ and $\EE ( \xi_i^2 |x_i ) \leq \EE  (\varepsilon_i^2 | x_i) \leq \sigma^2$. For any $1\leq j\leq p$ and $z\geq 0$, applying Bernstein's inequality yields that with probability at least $1-2 e^{-z}$,
\$
\Bigg|  \frac{1}{N} \sum_{i=1}^N (1-\EE )  \xi_i x_{ij} \Bigg| \leq \sigma_{jj}^{1/2} \sigma \sqrt{\frac{2 z}{N}} + \frac{B \tau  }{3 } \frac{z}{N} .
\$
Taking $z=\log(2p) +u$, the claimed bound \eqref{global.score.maxbound} then follows from the union bound.
\end{proof}

\section{Proof of Main Results}

\subsection{Proof of Theorem~\ref{thm:one-step}}

\noindent
{\sc Proof of \eqref{one-step.error}}.
For simplicity, we write $\wt \beta = \wt \beta^{(1)}$, which minimizes the shifted Huber loss $\wt \cL(\cdot)$ and thus satisfies the first-order condition $\nabla \wt\cL(\wt \beta) = 0$. Throughout the proof we assume the event $\cE_0(r_0) \cap \cE_*(r_*)$ occurs. In view of Lemma~\ref{lem:local.RSC}, we consider a local region $\Theta(r_{\loc})$ with $r_{\loc}= \kappa/(4 \eta_{0.25})$, and define an intermediate estimator $\wt \beta_c = (1-c) \beta^* +c  \wt \beta$, where
\$
	 c := \sup\big\{ u\in [0,1] : (1-u) \beta^* +u  \wt \beta \in \Theta(r_{\loc}) \big\} \begin{cases}
	 =1 & {\rm if }~ \wt \beta \in \Theta(r_{\loc}) , \\
	 \in (0,1)  & {\rm otherwise}.
	 \end{cases}
\$
By construction, $\wt \beta_c \in \Theta(r_{\loc})$. In particular, if $ \wt \beta \notin \Theta(r_{\loc})$, we must have $\wt \beta_c$ lying on the boundary of $\Theta(r_{\loc})$, i.e. $\| \wt \beta_c - \beta^* \|_{\Sigma} =r_{\loc}$.


Applying Lemma~C.1 in \cite{SZF2020}, we see that the three  points $\wt \beta, \wt \beta_c$ and $\beta^*$ satisfy
$\overbar D_{\wt \cL}(\wt \beta_c , \beta^*) \leq c \overbar D_{\wt \cL}(\wt \beta,\beta^*)$, where  $\overbar D_{\wt \cL}( \beta , \beta^*) =  \langle \nabla \wt \cL(\beta) - \nabla  \wt \cL(\beta^*),    \beta -\beta^*  \rangle  = \langle  \nabla \wh \cL_{1,\kappa}(\beta) - \nabla \wh  \cL_{1,\kappa}(\beta^*) ,  \beta - \beta^*   \rangle$.
Together with the first-order condition $\nabla \wt \cL(\wt \beta) = 0$, this implies
\#
	\overbar D_{\wt \cL }(\wt \beta_c , \beta^* ) \leq - c  \big\langle \nabla \wt \cL(\beta^* ) , \wt \beta - \beta^* \big\rangle \leq  \|  \nabla\wt \cL(\beta^* )  \|_{\Omega} \cdot  \| \wt \beta_c  - \beta^* \|_{\Sigma} . \label{FOC}
\#
For the left-hand side of \eqref{FOC}, applying Lemma~\ref{lem:local.RSC} with $r=r_{\loc}$ and the fact  $\wt \beta_c \in \Theta(r_{\loc})$ yields that with probability at least $1-e^{-u}$,
\#
\overbar D_{\wt \cL}(\wt \beta_c , \beta^* ) \geq \frac{1}{4}     \| \wt \beta_c  - \beta^* \|_{\Sigma}^2 \label{local.rsc.bound}
\#
 as long as  $n\gtrsim  p+u $.

To bound the right-hand side of \eqref{FOC}, we  define vector-valued random processes
\#
\left\{\begin{array}{ll}
	\Delta_1(\beta)  = \Sigma^{-1/2}  \big\{ \nabla \wh \cL_{1,\kappa}(\beta) - \nabla \wh \cL_{1,\kappa}( \beta^*) \big\} - \Sigma^{1/2} ( \beta - \beta^* ) ,  \vspace{0.2cm}\\
~\,\Delta(\beta)  = \Sigma^{-1/2}  \big\{ \nabla \wh \cL_\tau(\beta) - \nabla \wh \cL_\tau( \beta^*) \big\} - \Sigma^{1/2} ( \beta - \beta^* ),
\end{array}\right. \label{def:Delta}
\#
Let $0<r_0\leq  \sigma$. Following the proof of Theorem~B.1 in the supplement of \cite{SZF2020}, it can be similarly shown that, with probability at least $1-2e^{-u}$,
\#
 \sup_{\beta \in \Theta(r_0)}  \| \Delta_1(\beta)  \|_2  & \leq C_1  \left(   \sqrt{\frac{p+u}{n}} + \frac{\sigma^2}{\kappa^2} \right) r_0   ~\mbox{ and }~ \sup_{\beta \in \Theta(r_0)}  \| \Delta(\beta)   \|_2    \leq     C_1 \left(   \sqrt{\frac{p+u}{N}} + \frac{\sigma^2}{\tau^2} \right) r_0   \label{diff.grad.unif.bound}
\#
as long as $n\gtrsim p+u$, where $C_1>0$ is a constant depending only on $\upsilon_1$. Recall that $\tau \geq \kappa \asymp \sigma\sqrt{n/(p+u)}$. Conditioned on event $\cE_0(r_0) \cap  \cE_*(r_*)$, it follows that
\#
 \|  \nabla \wt \cL (\beta^* )  \|_{\Omega}  &  = \|     \Delta(\wt \beta^{(0)}) - \Delta_1(\wt \beta^{(0)})  + \Sigma^{-1/2} \nabla \wh \cL_\tau(\beta^* )   \|_2 \nn \\
& \leq \|     \Delta(\wt \beta^{(0)}) - \Delta_1(\wt \beta^{(0)})   \|_2 +   \|  \nabla  \wh \cL_\tau(\beta^* ) \|_{\Omega} \nn\\
& \leq C_2     r_0 \sqrt{\frac{p+u}{n}}       +   r_*  . \label{surrogate.score.bound}
\#

Together, the bounds \eqref{FOC}, \eqref{local.rsc.bound} and \eqref{surrogate.score.bound} imply that, conditioning on $\cE_0(r_0) \cap \cE_*(r_*)$,
\#
	    \| \wt \beta_c - \beta^*  \|_{\Sigma}   \leq 4   \|   \nabla \wt \cL (\beta^* )  \|_{\Omega} \leq  4\left( C_2 r_0 \sqrt{\frac{p+u}{n}}     +   r^*\right)  \label{intermediate.bound}
\#
with probability at least $1-3e^{-u}$. Provided that the sample size is sufficiently large---$n\gtrsim p+u$, the right-hand side of the above inequality is strictly less than  $r_{\loc}= \kappa/(4\eta_{0.25})$ with $\kappa \asymp \sigma \sqrt{n/(p+u)}$. As a result, the intermediate estimator $\wt \beta_c$ falls into the interior of $\Theta(r_{\loc})$ with high probability conditioned on $\cE_0(r_0) \cap \cE_*(r_*)$. Via proof by contradiction, we must have $\wt \beta \in \Theta(r_{\loc})$ and hence $\wt \beta = \wt \beta_c$; otherwise if $\wt \beta\notin \Theta(r_{\loc})$, we have demonstrated that $\wt \beta_c$ must lie on the boundary of $\Theta(r_{\loc})$, which is a contradiction.
Consequently, the bound \eqref{intermediate.bound} also applies to  $\wt \beta$, as claimed.

\medskip
\noindent
{\sc Proof of \eqref{one-step.bahadur}}. To establish the Bahadur representation, note that the random process $\Delta_1(\cdot)$ defined in \eqref{def:Delta} can be written as $\Delta_1(\beta)   = \Sigma^{-1/2}  \{ \nabla \wt \cL (\beta) - \nabla \wt \cL ( \beta^*)   \} - \Sigma^{1/2} ( \beta - \beta^* )$. Moreover, note that
\#
 \nabla \wt \cL ( \beta^*)  = \nabla \wh \cL_{1,\kappa}(\beta^*) - \nabla \wh \cL_{1,\kappa}(\wt \beta^{(0)}) + \nabla  \wh \cL_\tau(\wt \beta^{(0)}) -  \nabla  \wh \cL_\tau(\beta^*)  +  \nabla  \wh \cL_\tau( \beta^*)  , \nn
\#
which in turn implies
\#
 \|  \nabla \wt \cL ( \beta^*) -  \nabla \wh \cL_\tau (\beta^* )   \|_{\Omega} \leq  \| \Delta_1(\wt \beta^{(0)})  \|_2 + \| \Delta (\wt \beta^{(0)})  \|_2  .	\nn
\#
Recall that $\nabla \wt \cL(\wt \beta)= 0$, and by \eqref{intermediate.bound}, $\| \wt \beta - \beta^* \|_{\Sigma} \leq r_1 := 4 C_2 r_0 \sqrt{(p+u)/n} +4  r_*$ with high probability conditioned on $\cE_0(r_0) \cap \cE_*(r_*)$. For $r_0 \geq 8 r_*$, we have $r_1 \leq r_0/2 + r_0/2 = r_0$ as long as $n\gtrsim p+u$, and hence $\wt \beta \in \Theta(r_0)$. Applying the bounds in \eqref{diff.grad.unif.bound} again, we obtain that conditioned on $\cE_0(r_0)\cap \cE_*(r_*)$,
\#
	&  \| \Sigma^{1/2} ( \wt \beta -\beta^* )  + \Sigma^{-1/2} \nabla \wh \cL_\tau(\beta^*) \|_2  \nn \\
& = \|  \Delta_1(\wt \beta ) + \Sigma^{-1/2} \nabla \wt \cL(\beta^*)  -  \Sigma^{-1/2}  \nabla \wh \cL_\tau(\beta^* )   \|_2 \nn \\
& \leq    \| \Delta_1(\wt \beta )  \|_2 +  \| \Delta_1(\wt \beta^{(0)})   \|_2+  \| \Delta (\wt \beta^{(0)})   \|_2  \nn \\
& \leq 2 \sup_{\beta \in \Theta(r_0)}    \| \Delta_1(\beta)   \|_2 + \sup_{\beta \in \Theta(r_0)}   \| \Delta (\beta)  \|_2 \nn \\
& \lesssim \sqrt{\frac{p+u}{n}} \cdot r_0 \nn
\# 
with probability at least $1-3e^{-u}$.  This completes the proof. \qed

\subsection{Proof of Theorem~\ref{thm:multi-step}}

Given a sequence of iterates $\{\wt \beta^{(t)} \}_{t = 0, 1, \ldots , T}$, we define   ``good" events 
\#
	\cE_t( r_t ) = \big\{  \wt \beta^{(t)} \in \Theta(r_t) \big\} , \ \  t= 0 ,\ldots, T,  \label{good.events}
\#
for some sequence of radii $r_0 \geq  r_1 \geq \cdots \geq  r_T >0$ to be determined. Examine the proof of Theorem~\ref{thm:one-step}, we see that the statistical properties of $\wt \beta^{(t)}$ depends on both first-order and second-order information of the loss function $\wt \cL^{(t)}(\cdot)$, namely, the $\ell_2$-norm of the gradient $\nabla \wt \cL^{(t)} (\beta^*)$ and the (symmetrized) Bregman divergence of $\wt \cL^{(t)}(\cdot)$. For the former,  we have
\#
 \nabla \wt \cL^{(t)}(\beta^*) = \nabla \wh \cL_{1,\kappa}(\beta^*) - \nabla \wh \cL_{1,\kappa}(\wt \beta^{(t-1)} )  +  \nabla \wh \cL_\tau(\wt \beta^{ ( t- 1 ) }  )  . \label{grad.itert}
\#
Let $\Delta_1(\cdot) $ and  $\Delta(\cdot)$ be the random processes defined in \eqref{def:Delta}, and observe that $\Sigma^{-1/2}\nabla \wt \cL^{(t)}(\beta^*)  =   \Delta(\wt \beta^{(t-1)} ) - \Delta_1(\wt \beta^{(t-1)})   +  \Sigma^{-1/2} \nabla \wh \cL_\tau(\beta^*)$.   By the triangle inequality,
\#
   \| \nabla \wt \cL^{(t)}(\beta^*)    \|_{\Omega} \leq      \| \Delta(\wt \beta^{(t-1)} )   \|_2 +     \| \Delta_1(\wt \beta^{(t-1)})  \|_2   +   \|  \nabla \wh \cL_\tau(\beta^*)   \|_{\Omega} . \label{general.gradient.bound}
\#
On the other hand,  note that the shifted Huber losses $\wt \cL^{(t)}(\cdot)$ have the same Bregman divergence, denoted by 
\#
	\overbar D(\beta_1 , \beta_2) = \langle \nabla \wt \cL^{(t)} (\beta_1) -  \nabla \wt \cL^{(t)}(\beta_2) , \beta_1 - \beta_2 \rangle = \langle  \nabla \wt \cL_{1,\kappa}  (\beta_1) -  \nabla \wt \cL_{1,\kappa}  (\beta_2)  , \beta_1 - \beta_2 \rangle.   \nn
\#
Define the local radius $r_{\loc}= \kappa / (4 \eta_{0.25})$. Then, applying Lemma~\ref{lem:local.RSC} with $r=r_{\loc}$ yields that, with probability at least $1-e^{-u}$,
\#
\overbar D(\beta  , \beta^*) \geq \frac{1}{4} \| \beta - \beta^* \|_{\Sigma}^2  \label{local.strong.convexity}
\#
holds uniformly over $\beta \in \Theta(r_{\loc})$. Let $\cE_{\lsc}$ be the event that the local strong convexity \eqref{local.strong.convexity} holds.

With the above preparations, we are ready to extend the argument in the proof of Theorem~\ref{thm:one-step} to deal with $\wt \beta^{(t)}$ sequentially. At each iteration, we construct an intermediate estimator $\wt \beta^{(t)}_{\imd}$---a convex combination of $\wt \beta^{(t)}$ and $\beta^*$---which falls in $\Theta(r_{\loc})$ and satisfies
\$
 \overbar D(\wt \beta_{\imd}^{(t)} , \beta^* ) \leq \| \nabla \wt \cL^{(t)} (\beta^*) \|_{\Omega} \cdot \| \wt \beta^{(t)}_{\imd} - \beta^* \|_{\Sigma} .
\$
If event $\cE_*(r_*)\cap \cE_{\lsc}$ occurs, the bounds \eqref{general.gradient.bound} and \eqref{local.strong.convexity} imply
\#
  \| \wt \beta^{(t)}_{\imd} - \beta^* \|_{\Sigma} \leq  4 \big\{  \| \Delta_1 (\wt \beta^{(t-1)} ) \|_2  + \| \Delta (\wt \beta^{(t-1)} )  \|_2 \big\}   +   4 r_* . \label{general.rate}
\# 
Moreover, it follows from \eqref{grad.itert} and the first-order condition $\nabla \wt \cL^{(t)} (\wt \beta^{(t)}) = 0$ that
\#
      & \| \Sigma^{1/2 } ( \wt \beta^{(t)} - \beta^* )  + \Sigma^{-1/2} \nabla \wh \cL_\tau(\beta^* ) \|_2  \nn \\
      & = \| \Sigma^{-1/2}  \{ \nabla \wt \cL^{(t)} (\wt \beta^{(t)})  -  \nabla \wt \cL^{(t)} (\beta^* ) \} - \Sigma^{1/2 } ( \wt \beta^{(t)} - \beta^* )     +  \Sigma^{-1/2} \{  \nabla \wt \cL^{(t)}(\beta^*) - \nabla \wh  \cL_\tau(\beta^* ) \} \|_2  \nn \\
      & \leq \| \Delta_1(\wt \beta^{(t)} ) \|_2 + \| \Delta_1(\wt \beta^{(t-1)}) \|_2 + \| \Delta(\wt \beta^{(t-1)} ) \|_2 . \label{general.bahadur}
\#
In view of the bounds in \eqref{diff.grad.unif.bound}, for every $0< r\leq \sigma$ we define the event
\#
	\cF(r) = \left\{ \sup_{\beta \in \Theta(r)} \big\{  \| \Delta_1 ( \beta) \|_2  + \| \Delta ( \beta)  \|_2 \big\}  \leq \gamma(u) \cdot r \right\}  \label{eventF}
\#
with $\gamma(u) = C \sqrt{(p+u)/n}$ for some $C>0$, which satisfies $\pr \{ \cF(r) \} \geq 1-2e^{-u}$.

Let $ 8r^* \leq  r_0 \leq \sigma$. In the following, we assume the event $\cE_0(r_0) \cap \cE_*(r_*) \cap \cE_{\lsc}$ occurs, and deal with $\{(\wt \beta^{(t)}_{\imd}, \wt \beta^{(t)} ), t= 1, 2, \ldots  , T\}$ sequentially. At iteration 1, it follows from \eqref{general.rate} that, conditioned on $\cF(r_0)$,
\#
 \| \wt \beta^{(1)}_{\imd} - \beta^* \|_{\Sigma} \leq  r_1 := 4 \gamma(u) \cdot r_0 + 4 r_* . \nn
\#
Provided that $n\gtrsim p+u$, we have $4\gamma(u)\leq 1/2 < 1$ and $r_1 \leq r_0 < r_{\loc} = \kappa / (4\eta_{0.25})$, so that $\wt \beta^{(1)}_{\imd} \in \Theta(r_1) \subseteq {\rm int}( \Theta(r_{\loc}))$. Via proof by contradiction, we must have $\wt \beta^{(1)} = \wt \beta^{(1)}_{\imd} \in \Theta(r_{\loc})$, which in turns certifies event $\cE(r_1)$.
Combining this with \eqref{general.bahadur}, we see that conditioned on $\cF(r_0)$, the event $\cE_1(r_1)$ mush happen and hence
\#
\left\{\begin{array}{ll}
 \| \, \wt \beta^{(1)}  - \beta^*\,  \|_{\Sigma}  \leq r_1 =4 \gamma(u) \cdot r_0 + 4 r_* \leq r_0  ,  \vspace{0.2cm}\\
  \| \,  \wt \beta^{(1)} - \beta^*   + \Sigma^{-1} \nabla \wh \cL_\tau(\beta^* ) \, \|_{\Sigma} \leq  2 \gamma(u)\cdot r_0 .
\end{array}\right.  \nn
\#

Now assume that for some $t\geq 1$, $\wt \beta^{(t)} \in \Theta(r_t)$ with $r_t = 4 \gamma(u) \cdot r_{t-1} + 4 r_* \leq r_{t-1} < r_{\loc}$. At $(t+1)$-th iteration, applying \eqref{general.rate} again yields that, conditioned on event $\cE_t(r_t) \cap \cF(r_t)$,
\$
 \| \wt \beta^{(t+1)}_{\imd} - \beta^* \|_{\Sigma} \leq r_{t+1} :=  4 \gamma(u) \cdot r_t + 4 r_* .
\$
By induction, $r_t\leq r_{t-1} < r_{\loc}$ so that $r_{t+1} \leq  4 \gamma(u) \cdot r_{t-1}+ 4 r_* = r_t < r_{\loc}$. This implies that $\wt \beta^{(t+1)}_{\imd}$ falls into the interior of $\Theta(r_{\loc})$, which enforces $\wt \beta^{(t+1)} = \wt \beta^{(t+1)}_{\imd} \in \Theta(r_{t+1}) $ and thus certifies event $\cE_{t+1}(r_{t+1})$. Combining this with the bound \eqref{general.bahadur}, we find that
\#
\left\{\begin{array}{ll}
 \|  \, \wt \beta^{(t+1)}  - \beta^* \, \|_{\Sigma}  \leq r_{t+1} =4 \gamma(u) \cdot r_t + 4 r_* \leq r_t  ,  \vspace{0.2cm}\\
  \| \, \wt \beta^{(t+1)} - \beta^*   + \Sigma^{-1} \nabla \wh \cL_\tau(\beta^* ) \, \|_{\Sigma} \leq  2 \gamma(u)\cdot r_t .
\end{array}\right.  \nn
\#

Repeat the above argument until we obtain $\wt \beta^{(T)}$. We have shown that conditioned on $ \cE_*(r_*) \cap \cE_{\lsc} \cap \cE_{t-1}(r_{t-1}) \cap \cF(r_{t-1})$ for every $0\leq t\leq T-1$, the event $\cE_t(r_t)$ must occur. Therefore, conditioned on $ \cE_*(r_*) \cap \cE_{\lsc} \cap \cE_0(r_0) \cap \{ \cap_{t=0}^{T-1} \cF(r_t) \}$, $\wt \beta^{(T)}$ satisfies the bounds
\#
\left\{\begin{array}{ll}
 \|  \, \wt \beta^{(T)}  - \beta^* \, \|_{\Sigma}  \leq r_T =4 \gamma(u) \cdot r_{T-1} + 4 r_*   ,  \vspace{0.2cm}\\
  \| \, \wt \beta^{(T)} - \beta^*   + \Sigma^{-1} \nabla \wh \cL_\tau(\beta^* ) \, \|_{\Sigma} \leq  2 \gamma(u)\cdot r_{T-1} .
\end{array}\right. \label{bound.T}
\#
Observe that $r_t = \{ 4\gamma(u) \}^t r_0 + \frac{1-\{4\gamma(u) \}^t}{1-4\gamma(u) } 4 r_*$ for $t=1,\ldots, T$. We choose $T$ to be the smallest integer such that $ \{ 4\gamma(u) \}^{T-1} r_0 \leq r_*$, that is, $T= \lceil \log(r_0/r_*) / \log(1/\{ 4\gamma(u)\}) \rceil + 1$. Consequently, the bounds in \eqref{bound.T} become
\#
\left\{\begin{array}{ll}
 \|  \, \wt \beta^{(T)}  - \beta^* \, \|_{\Sigma}  \leq    \big\{  \gamma(u)  + \frac{1}{1-4\gamma(u) } \big\} 4 r_*  \leq \{ 4 \gamma(u) + 8\}  r_*  ,  \vspace{0.2cm}\\
  \| \, \wt \beta^{(T)} - \beta^*   + \Sigma^{-1} \nabla \wh \cL_\tau(\beta^* ) \, \|_{\Sigma} \leq  18 \gamma(u)\cdot r_* .
\end{array}\right. \label{bound.T2}
\#

Finally, it suffices to show that the event $\cE_{\lsc}  \cap \{ \cap_{t=0}^{T-1} \cF(r_t) \}$ occurs with high probability. Recall from \eqref{local.strong.convexity} and \eqref{eventF}  that $\pr(\cE_{\lsc}) \geq 1-e^{-u}$ and $\pr\{\cF(r_t)\} \geq 1- 2e^{-u}$ for every $t=0,1,\ldots, T-1$. The claimed result then follows from \eqref{bound.T2} and the union bound.   \qed

\subsection{Proof of Theorem~\ref{thm:final.rate}}

Let $u>0$. Applying Theorem~B.1 in \cite{SZF2020} with a robustification parameter $\kappa \asymp \sigma \sqrt{n/(p+u)}$ yields that with probability at least $1-2e^{-u}$, $\| \wt \beta^{(0)} - \beta^*   \|_{\Sigma} \leq r_0 \asymp  \sigma \sqrt{(p+u)/n}$ as long as $n\gtrsim p + u$.
For event $\cE^*( r^*)$ defined in \eqref{event0},  we take $r^* \asymp \sigma\sqrt{(p+u)/N} + \tau (p+u)/N + \sigma^2/\tau $ in Lemma~\ref{lem:global.score} and obtain that   $\pr\{ \cE^*( r^*) \} \geq 1-e^{-u}$. Putting together the pieces, we conclude that event $\cE_0(r_0) \cap \cE_*(r_*)$ occurs with probability at least $1-3 e^{-u}$, provided that $n\gtrsim p+u$.

Set $u = \log n + \log_2 m$. Since $\tau \asymp \sigma \sqrt{N/(p+\log n + \log_2 m)}$, we see that
$$
	r_0 \asymp \sigma\sqrt{\frac{p+\log n + \log_2 m}{n}} ~~\mbox{ and }~~ r_* \asymp \sigma \sqrt{\frac{p+\log n + \log_2 m}{N}},
$$
and hence $r_0 / r_* \asymp \sqrt{m}$. 
Finally, applying Theorem~\ref{thm:multi-step}  yields the claimed bounds \eqref{final.rate} and \eqref{final.bahadur}. \qed

\subsection{Proof of Theorem~\ref{thm:CLT}}

For simplicity, we write $q= p + \log n + \log_2 m$ throughout the proof. For every vector $a\in \RR^p$,  define $S_{a} = N^{-1/2} \sum_{i=1}^N  \xi_i w_i$ and $S_{a}^0 = S_{a} - \EE  S_{a}$, where $\xi_i = \psi_\tau(\varepsilon_i)$ and $w_i =  a^\T \Sigma^{-1} x_i$. Under the moment condition $\EE (|\varepsilon|^{2+\delta} | x) \leq v_{2+\delta}$, using Markov's inequality yields $| \EE (\xi_i | x_i ) | \leq  \tau^{-1-\delta} \EE  (|\varepsilon_i|^{1+\delta}|x_i) \leq v_{2+\delta} \tau^{-1-\delta}$. Hence, $|\EE (\xi_i w_i)| \leq  v_{2+\delta} \|   a \|_{\Omega} \cdot \tau^{-1-\delta}$ and $|\EE  S_{a}| \leq v_{2+\delta} \| a \|_{\Omega} \cdot  N^{1/2}\tau^{-1-\delta}$.
 
With the above preparations, we are ready to prove the normal approximation for $\wt \beta$. Note that
\$
   & | N^{1/2} a^\T (\wt \beta - \beta^* ) - S^0_{a} |    \\ 
   & \leq   N^{1/2}   \left|  \left\langle  \Sigma^{-1/2} a, \Sigma^{1/2} (\wt \beta - \beta^* ) - \Sigma^{-1/2} \frac{1}{N} \sum_{i=1}^N \psi_\tau(\varepsilon_i) x_i \right\rangle  \right|  +  | \EE  S_{a} |  \\ 
   & \leq N^{1/2} \| a \|_{\Omega} \cdot \left\| \wt \beta - \beta^*  - \Sigma^{-1} \frac{1}{N} \sum_{i=1}^N \psi_\tau(\varepsilon_i) x_i \right\|_{\Sigma} + v_{2+\delta} \| a \|_{\Omega} \cdot  N^{1/2}\tau^{-1-\delta}.
\$
Applying \eqref{final.bahadur} in Theorem~\ref{thm:final.rate}, we find that with probability at least $1-C n^{-1}$,
\#
 | N^{1/2} a^\T (\wt \beta - \beta^* ) - S^0_{a} | \leq C_1 \| a \|_{\Omega} \cdot  \big(  \sigma  q n^{-1/2} + N^{1/2}  v_{2+\delta}\tau^{-1-\delta} \big) , \label{GAR1}
\#
where $C_1>0$ is a constant independent of $(N,n,p)$. 

For the centered partial sum $S^0_{a}$, it follows from the Berry-Esseen inequality (see, e.g. Theorem 2.1 in \cite{CS2001}) that
\#
 \sup_{t\in \RR } \big| \pr\big\{ S_{a}^0 \leq \var(S_{a}^0)^{1/2} t \big\} - \Phi(t) \big| \leq 4.1 \frac{  \EE  | \xi w - \EE (\xi w) |^{2+\delta} }{\var(\xi w )^{  1+\delta /2} N^{ \delta /2} },  \label{GAR2}
\#
where $\xi = \psi_\tau(\varepsilon)$ and $w = a^\T \Sigma^{-1} x$. Recall that $\tau \asymp \sigma \sqrt{N/q}$, and write $\sigma_{\tau, a}^2 = \EE (\xi w)^2 $.
By Proposition~A.2 in \cite{ZBFL2018}, $| \EE (\xi^2 | x)   -\sigma^2 | \leq 2\delta^{-1} v_{2+\delta}   \tau^{-\delta}  \asymp \delta^{-1} v_{2+\delta} \sigma^{-\delta} (q/N)^{\delta/2}$, and hence
\#
	\big|   \sigma_{\tau, a}^2/  (\sigma  \| a\|_{\Omega} )^2   - 1 \big| \lesssim \frac{v_{2+\delta}}{\delta \sigma^{2+\delta}} \bigg(\frac{q}{N} \bigg)^{\delta/2}. \label{var.compare.1}
\#
 Moreover, $\EE  |\xi w|^{2+\delta} \leq   \EE   |\varepsilon  w|^{2+\delta}   \leq  \mu_{2+\delta} \| a \|_{\Omega}^{2+\delta} v_{2+\delta}$, where $\mu_{2+\delta} := \sup_{u \in \mathbb{S}^{p-1}} \EE  |z^\T u |^{2+\delta}$ depends only on $\upsilon_1$ under Condition~(C1). Substituting these bounds into \eqref{GAR2} yields
\#
 \sup_{t\in \RR } \big| \pr\big\{ S_{a}^0 \leq \var(S_{a}^0)^{1/2} t \big\} - \Phi(t) \big| \leq  C_2     \frac{v_{2+\delta  }  }{\sigma^{2+\delta} N^{\delta/2}}, \label{GAR3}
\#
provided that $N\gtrsim q$. For the variance term, the bound $|\EE (\xi|x)| \leq \sigma^2 \tau^{-1}$ guarantees that
\$
	  \EE  (\xi w)^2 \geq \var(S_{a}^0) =    \EE  (\xi w)^2 - ( \EE \xi w )^2   \geq   \EE  (\xi w)^2 - ( \sigma \| a \|_{\Omega})^2 \cdot  \sigma^2\tau^{-2 } .
\$
Combined with \eqref{var.compare.1}, this implies $|  \var(S_{a}^0)  / \sigma_{\tau, a}^2 - 1   | \lesssim  \sigma^2 \tau^{-2}$,  from which it follows that
\#
 \sup_{t\in \RR } \big|  \Phi(t/\var(S_{a}^0)^{1/2}) - \Phi(t/ \sigma_{\tau, a} ) \big| \leq C_3  \frac{ \sigma^2 }{  \tau^2 }  . \label{GAR4}
\#

Let $G\sim \cN(0,1)$ and $t\in \RR$. Combining the bounds \eqref{GAR1}, \eqref{GAR3} and \eqref{GAR4}, we obtain
\$
 & \pr\big\{ N^{1/2} a^\T (\wt \beta - \beta^* ) \leq t\big\}  \\ 
& \leq \pr\big\{   S_{a}^0 \leq x + C_1   \| a \|_{\Omega} \cdot  \big(  \sigma q n^{-1/2} + N^{1/2} v_{2+\delta}\tau^{-1-\delta} \big) \big\} + C n^{-1} \\
& \leq \pr\big\{  \var(S^0_{a})^{1/2} G \leq t + C_1   \| a \|_{\Omega} \cdot  \big(  \sigma q n^{-1/2} + N^{1/2}  v_{2+\delta}\tau^{-1-\delta} \big) \big\} + C n ^{-1} +  C_2       \frac{v_{2+\delta  }  }{\sigma^{2+\delta} N^{\delta/2}} \\
& \leq \pr\big\{  \sigma_{\tau, a} G \leq  t + C_1  \| a \|_{\Omega} \cdot  \big( \sigma q n^{-1/2} + N^{1/2}  v_{2+\delta}\tau^{-1-\delta} \big) \big\}  +  C_2      \frac{v_{2+\delta  }  }{\sigma^{2+\delta} N^{\delta/2}} +  C_3   \frac{ \sigma^2 }{  \tau^2 }  \\
& \leq  \pr\big(  \sigma_{\tau, a} G \leq   t \big) + C n ^{-1} +  C_1 (2\pi)^{-1/2}   \big(   q n^{-1/2} + N^{1/2}  v_{2+\delta}\sigma^{-1}\tau^{-1-\delta} \big)    +  C_2       \frac{v_{2+\delta  }  }{\sigma^{2+\delta} N^{\delta/2}}+  C_3 \frac{ \sigma^2 }{  \tau^2 } .
\$
A similar argument leads to a series of reverse inequalities, and thus completes the proof. \qed

\subsection{Proof of Theorem~\ref{thm:hd.one-step}}

As before, we assume without loss of generality that $\cI_1 = \{1,\ldots, n\}$. Write $\wt \beta = \wt \beta^{(1)}$ for simplicity, and let $g = \wt \beta - \beta^*$ be the error vector. By the first-order optimality condition, there exists a subgradient $g \in \partial \| \wt \beta \|_1$ such that $g^\T \wt \beta = \| \wt \beta \|_1$ and $\nabla \wt \cL(\wt \beta) + \lambda \cdot g = 0$.
Moreover, the convexity of $\wt \cL(\cdot)$ implies
\$
	0 \leq  \overbar D_{\wt \cL}( \wt \beta ,\beta^*) = g^\T \big\{ \nabla \wt \cL(\wt \beta ) - \nabla \wt \cL(\beta^*) \big\} = - \lambda \cdot  g^\T g - g^\T \, \nabla \wt \cL(\beta^*) .
\$
Recall the true active set  $\cS = {\rm supp} (\beta^* ) \subseteq \{1,\ldots, p\}$, we have
\$
   -  g^\T g \leq \| \beta^* \|_1 - \| \wt \beta \|_1  = \| \beta^*_{\cS} \|_1 - \| g_{\cS^{{\rm c}}}  \|_1 - \| g_{\cS} + \beta_{\cS} \|_1 \leq \| g_{\cS} \|_1 - \| g_{\cS^{{\rm c}}} \|_1 .
\$
Together, the above two displays yield
\#
0 \leq  \overbar D_{\wt \cL}( \wt \beta ,\beta^*) \leq \lambda \bigl( \| g_{\cS} \|_1 - \| g_{\cS^{{\rm c}}} \|_1 \bigr) - g^\T \,\nabla \wt \cL(\beta^*). \label{hd1}
\#

To deal with $\nabla \wt \cL(\beta^*) = \nabla \wh  \cL_{1,\kappa}(\beta^*) - \nabla \wh \cL_{1,\kappa}(\wt \beta^{(0)} ) + \nabla \wh \cL_\tau( \wt \beta^{(0)})$, we define random processes
\$
\wh D_1( \beta )  = \nabla \wh \cL_{1,\kappa}(\beta) - \nabla \wh \cL_{1,\kappa}(\beta^*) ,  ~~~  \wh   D(\beta) = \nabla \wh \cL_\tau (\beta) - \nabla \wh \cL_\tau(\beta^*)   , 
\$
and write $D_1(\beta) = \EE  \wh D_1(\beta)$ and $D(\beta)=  \EE  \wh D(\beta) $. The gradient $\nabla \wt \cL(\beta^*)$ can thus be written as
\$
 \big\{ \wh D(\beta) - D(\beta) \big\} \Big|_{\beta = \wt \beta^{(0)}} +  \big\{ D_1(\beta) & - \wh D_1(\beta) \big\} \Big|_{\beta = \wt \beta^{(0)}}
 + \nabla \wh \cL_\tau(\beta^*) - \nabla \cL_\tau(\beta^*) \\
 & + \big\{  D(\beta) - D_1(\beta) \big\} \Big|_{\beta = \wt \beta^{(0)}} + \nabla \cL_\tau(\beta^*) . 
\$
For any $r>0$, define
\#
 \Delta_1(r) = \sup_{\beta \in \Theta(r) \cap \Lambda} \| \wh D_1(\beta) - D_1(\beta) \|_\infty, ~~ \Delta(r) = \sup_{\beta \in \Theta(r) \cap \Lambda} \| \wh D(\beta) - D(\beta) \|_\infty ,  \label{def:Deltar}  \\ 
 \delta(r) = \sup_{\beta \in \Theta(r)  } \|  D_1(\beta) - D (\beta) \|_{\Omega} ~\mbox{ and }~ b^* = \| \nabla \cL_\tau(\beta^* ) \|_{\Omega} . \label{def:deltar}
\#
The quantity $b^*$ can be viewed as the robustification bias and by Lemma~\ref{lem:global.score}, $b^*  \leq \sigma^2 \tau^{-1}$.

Back to the right-hand of \eqref{hd1}, conditioning on the event $\cE_0(r_0) \cap \cE_*(\lambda_*)$, it follows from H\"older's inequality that
\#
   |  g^\T \,\nabla \wt \cL(\beta^*)   |  \leq  \big\{  \Delta(r_0) +  \Delta_1(r_0) + \lambda_* \big\} \| g \|_1 + \{ \delta(r_0) + b^* \big\} \| g \|_{\Sigma} . \label{hd2}
\#
Let $\lambda = 2.5(\lambda_*+ \rho)$ for some $\rho>0$. Provided that
\#
 \rho \geq \max\big[   \Delta(r_0) +  \Delta_1(r_0)  ,  s^{-1/2}\big\{ \delta(r_0) + b^*  \big\} \big],  \label{rho.constraint}
\#
we have  $|  g^\T \,\nabla \wt \cL(\beta^*)   | \leq 0.4 \lambda \| g \|_1 + 0.4 s^{1/2} \lambda \| g \|_{\Sigma}$. Combined with \eqref{hd1}, this yields $
 0 \leq  1.4  \| g_{\cS} \|_1 - 0.6  \| g_{\cS^{{\rm c}}} \|_1   + 0.4 s^{1/2}  \| g \|_{\Sigma}$. Consequently, $ \| g \|_1 \leq  (10/3)  \| g_{\cS} \|_1 + (2/3) s^{1/2} \| g \|_{\Sigma}  \leq 4 s^{1/2} \| g \|_{\Sigma}$,
and hence $\wt \beta \in \Lambda$. Throughout the rest of the proof, we assume that the constraint \eqref{rho.constraint} holds.

Next, we apply Lemma~\ref{lem:local.RSC} to bound the left-hand side of \eqref{hd1} from below. As in the proof of Theorem~\ref{thm:one-step}, we set $r_{\loc}= \kappa/(4 \eta_{0.25})$ and define $\wt \beta_c = (1-c) \beta^* + c \wt \beta$, where $c= \sup\{u\in[0,1]: (1-u)\beta^* + u \wt \beta \in \Theta(r_{\loc})\}$. The same argument therein implies $\overbar D_{\wt \cL}(\wt \beta_c , \beta^*) \leq c \overbar D_{\wt \cL}(\wt \beta, \beta^*)$.
Recall that conditioned on $\cE_0(r_0) \cap \cE_*(\lambda_*)$, $\wt \beta$ falls in the $\ell_1$-cone $\Lambda$ and thus so does $\wt \beta_c$. Moreover, $\wt \beta_c \in \Theta(r_{\loc})$ by construction. Then it follows from Lemma~\ref{lem:local.RSC} that, with probability at least $1-e^{-u}$,
\#
  \overbar D_{\wt \cL}(\wt \beta_c , \beta^*)  \geq \frac{1}{4} \| \wt \beta_c - \beta^* \|_{\Sigma}^2  \nn
\#
as long as $n\gtrsim s \log p + u$. Combining this with \eqref{hd1}, \eqref{hd2} and \eqref{rho.constraint}, we obtain that
\#
\frac{1}{4} \| \wt \beta_c - \beta^* \|_{\Sigma}^2 \leq  c \lambda \big( 1.4 \| g_{\cS} \|_1 + 0.4 s^{1/2} \| g \|_{\Sigma} \big) \leq 1.8 s^{1/2} \lambda \| \wt \beta_c - \beta^* \|_{\Sigma} .  \nn
\#
Canceling $ \| \wt \beta_c - \beta^* \|_{\Sigma}$ on both sides yields
\# 
 \| \wt \beta_c - \beta^* \|_{\Sigma} \leq 7.2 s^{1/2} \lambda  . \label{hd3}
\#
Provided that $\kappa > 28.8 \eta_{0.25} s^{1/2} \lambda$, the right-hand side is strictly less than $r_{\loc}$. Via proof by contradiction, we must have $\wt \beta = \wt \beta_c \in \Theta(r_{\loc})$, and hence the bound \eqref{hd3} also applies to $\wt \beta$.

It remains to choose $\rho$ properly so that the constraint \eqref{rho.constraint} holds with high probability. Recall from Lemma~\ref{lem:global.score} that $b^* \leq \sigma^2 \tau^{-1}$.
 The following two lemmas provide upper bounds on the suprema $\Delta(r_0)$, $\Delta_1(r_0)$ and $\delta(r_0)$ defined in \eqref{def:Deltar} and \eqref{def:deltar}.

\begin{lemma} \label{lem:hd.grad}
Assume Condition~(C2) holds. Then, for any $r, u>0$,
\#
	 \Delta(r)  \leq   C_1  B^2  r  \Bigg\{   \sqrt{\frac{s\log(2p)}{N}} +   s^{1/2}\frac{\log(2p) + u}{N} \Bigg\} + C_2( \sigma_u \mu_4)^{1/2} r \sqrt{\frac{\log(2p)+u}{N}}
\#
with probability at least $1-e^{-u}$, where $C_1, C_2>0$ are absolute constants. The same bound, with $N$ replaced by $n$, holds for $\Delta_1(r)$.
\end{lemma}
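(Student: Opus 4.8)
The plan is to recognize $\Delta(r)$ as the supremum of an empirical process indexed by $v = \beta - \beta^*$ ranging over $\Theta(r)\cap\Lambda$, and to control it by the standard two–step recipe: bound the expected supremum by symmetrization and a contraction argument, add a Talagrand-type deviation term, and finally take a union bound over the $p$ coordinates. Concretely, since $\nabla\wh\cL_\tau(\beta) = -N^{-1}\sum_{i=1}^N \psi_\tau(y_i - x_i^\T\beta)x_i$, the $j$th coordinate of $\wh D(\beta) - D(\beta)$ equals $Z_j(v) := N^{-1}\sum_{i=1}^N (1-\EE)\{\delta_i(v)\,x_{ij}\}$ with $\delta_i(v) := \psi_\tau(\varepsilon_i) - \psi_\tau(\varepsilon_i - x_i^\T v)$, so that $\Delta(r) = \sup_{v\in\Theta(r)\cap\Lambda}\max_{1\le j\le p} |Z_j(v)|$.

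Two deterministic facts drive everything. First, $\psi_\tau$ is $1$-Lipschitz and $\delta_i(v)$ vanishes at $v=0$ as a function of the shift $x_i^\T v$, so $|\delta_i(v)| \le |x_i^\T v|$; combined with $\|x_i\|_\infty \le B$ and the cone constraint $\|v\|_1 \le 4 s^{1/2}\|v\|_{\Sigma} \le 4 s^{1/2} r$, this gives the uniform bound $|\delta_i(v)x_{ij}| \le B\,|x_i^\T v| \le 4B^2 s^{1/2} r =: M$. Second, the variance proxy obeys $\EE\{\delta_i(v)^2 x_{ij}^2\} \le \EE\{(x_i^\T v)^2 x_{ij}^2\} \lesssim \mu_4\sigma_u^2 r^2 =: V$ by the Cauchy–Schwarz inequality and the bounded-kurtosis part of Condition~(C2), using $\EE(z^\T u)^4 \le \mu_4$ for unit $u$ together with $\sigma_{jj}\le\sigma_u^2$. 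For the expected supremum, I would fix $j$, symmetrize with Rademacher variables $e_i$, and apply the Ledoux–Talagrand contraction inequality (absorbing $|x_{ij}| \le B$ into the Lipschitz constant) to replace $\delta_i(v)$ by $x_i^\T v$ up to a factor of order $B$; this reduces matters to the linear process $v\mapsto N^{-1}\sum_i e_i(x_i^\T v)$, and then $\EE\sup_v|Z_j(v)| \lesssim B\cdot s^{1/2} r\cdot \EE\|N^{-1}\sum_i e_i x_i\|_\infty \lesssim B^2 r\sqrt{s\log(2p)/N}$ by the sub-Gaussian maximal inequality for the bounded coordinates $x_{ij}\in[-B,B]$. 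Next, for each fixed $j$ I would apply Bousquet's form of Talagrand's inequality to the bounded-summand process $v\mapsto Z_j(v)$ with parameters $M$ and $V$: with probability at least $1-e^{-z}$, $\sup_v|Z_j(v)| \le 2\,\EE\sup_v|Z_j(v)| + C\sqrt{Vz/N} + CMz/N$. Taking $z = \log(2p)+u$ and a union bound over $j=1,\ldots,p$ yields, with probability at least $1-e^{-u}$,
\[
\Delta(r) \lesssim B^2 r\sqrt{\frac{s\log(2p)}{N}} + (\sigma_u\mu_4)^{1/2} r\sqrt{\frac{\log(2p)+u}{N}} + B^2 s^{1/2} r\,\frac{\log(2p)+u}{N},
\]
which is the claimed bound; the statement for $\Delta_1(r)$ is identical with $N$ replaced by the machine-$1$ sample size $n$ and the expectations taken over $\{(y_i,x_i)\}_{i\in\cI_1}$.

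The main obstacle is not any single inequality but keeping the three parameters $M$, $V$ and the expected supremum correct simultaneously: the contraction step must absorb the weight $x_{ij}$ so that the leading term carries only $B^2$ rather than a higher power of $B$; the variance proxy $V$ must be extracted uniformly over $v$ using only $\mu_4$ and $\sigma_u$, where the cone membership $\|v\|_1 \le 4 s^{1/2}\|v\|_{\Sigma}$ is essential (it is exactly what turns a $\sqrt{p}$ into a $\sqrt{s}$); and the union bound over coordinates must be folded into the deviation level $z$ without degrading the rate. These are precisely the points at which the computation runs parallel to the proof of Lemma~C.4 and its supporting lemmas in \cite{SZF2020}, so the remaining bookkeeping can be imported from there, with the covariate sub-Gaussianity of Condition~(C1) replaced by the bounded-coordinate, bounded-kurtosis Condition~(C2).
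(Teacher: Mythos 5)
Your proposal is correct and follows essentially the same route as the paper's proof: the same reduction to coordinatewise suprema over the cone-constrained ball, the same envelope and variance bounds ($4B^2 s^{1/2}r$ and $\sigma_{jj}\mu_4 r^2$), symmetrization plus Talagrand's contraction and Hoeffding's maximal inequality for the expected supremum, and Bousquet's inequality with $z=\log(2p)+u$ followed by a union bound over the $p$ coordinates. The only differences are cosmetic (a simplified form of Bousquet's bound and constant bookkeeping), so nothing further is needed.
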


\begin{lemma} \label{lem:deltar}
Condition~(C2) guarantees  $\delta(r ) \leq \kappa^{-2} r (\sigma^2 + \mu_4 r^2/3)$  for any $r>0$.
\end{lemma}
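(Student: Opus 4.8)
The plan is to pass to the population level, isolate the difference of the two score maps, and then run a first-order (integral) expansion in which the only surviving robustification parameter is $\kappa$. Since the data are i.i.d., $\EE\wh\cL_\tau=\cL_\tau$ and $\EE\wh\cL_{1,\kappa}=\cL_\kappa$ with $\cL_\gamma(\beta)=\EE\ell_\gamma(y-x^\T\beta)$, so $D_1(\beta)=\nabla\cL_\kappa(\beta)-\nabla\cL_\kappa(\beta^*)$ and $D(\beta)=\nabla\cL_\tau(\beta)-\nabla\cL_\tau(\beta^*)$. Differentiating, $\nabla\cL_\gamma(\beta)-\nabla\cL_\gamma(\beta^*)=-\EE[\{\psi_\gamma(\varepsilon-x^\T(\beta-\beta^*))-\psi_\gamma(\varepsilon)\}x]$; writing $\psi_\gamma(t)=t-\phi_\gamma(t)$ with $\phi_\gamma(t)=\sign(t)(|t|-\gamma)_+$, the linear contribution equals $\EE[\{x^\T(\beta-\beta^*)\}x]=\Sigma(\beta-\beta^*)$, which is the \emph{same} for $\gamma=\kappa$ and $\gamma=\tau$ and therefore cancels in the difference. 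This leaves the clean identity
\[
 D_1(\beta)-D(\beta)=\EE\bigl[\{g(\varepsilon-x^\T(\beta-\beta^*))-g(\varepsilon)\}\,x\bigr],\qquad g:=\psi_\tau-\psi_\kappa .
\]

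Next I would exploit the shape of $g$. Because $\tau\ge\kappa$, $g$ is nondecreasing and $1$-Lipschitz with $g'(t)=\mathbbm{1}(\kappa\le|t|<\tau)\in[0,1]$ almost everywhere, so in particular $g'\equiv0$ on $[-\kappa,\kappa]$. Using $\|w\|_\Omega=\sup_{\|v\|_\Sigma=1}v^\T w$ together with the representation $g(\varepsilon-x^\T(\beta-\beta^*))-g(\varepsilon)=-\{x^\T(\beta-\beta^*)\}\int_0^1 g'(\varepsilon-s\,x^\T(\beta-\beta^*))\,ds$, then conditioning on $x$ and using $g'(t)\le\mathbbm{1}(|t|\ge\kappa)$ and Markov's inequality,
\[
 \EE\bigl\{g'(\varepsilon-s\,x^\T(\beta-\beta^*))\mid x\bigr\}\le\PP\bigl(|\varepsilon-s\,x^\T(\beta-\beta^*)|\ge\kappa\mid x\bigr)\le\kappa^{-2}\,\EE\bigl[(\varepsilon-s\,x^\T(\beta-\beta^*))^2\mid x\bigr].
\]
Here the conditions $\EE(\varepsilon\mid x)=0$ and $\EE(\varepsilon^2\mid x)\le\sigma^2$ enter: the cross term vanishes, so $\EE[(\varepsilon-s\,x^\T(\beta-\beta^*))^2\mid x]=\EE(\varepsilon^2\mid x)+s^2\{x^\T(\beta-\beta^*)\}^2\le\sigma^2+s^2\{x^\T(\beta-\beta^*)\}^2$, and $\int_0^1(\sigma^2+s^2t^2)\,ds=\sigma^2+t^2/3$ produces the constant $1/3$.

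Putting these together, for any $v$ with $\|v\|_\Sigma=1$,
\[
 \bigl|v^\T\{D_1(\beta)-D(\beta)\}\bigr|\le\kappa^{-2}\Bigl\{\sigma^2\,\EE|x^\T(\beta-\beta^*)|\,|x^\T v|+\tfrac13\,\EE|x^\T(\beta-\beta^*)|^3\,|x^\T v|\Bigr\}.
\]
Cauchy--Schwarz bounds the first expectation by $\|\beta-\beta^*\|_\Sigma\|v\|_\Sigma\le r$; H\"older with exponents $(4/3,4)$ and the kurtosis bound $\EE(z^\T u)^4\le\mu_4$ (applied to $u=\Sigma^{1/2}v$ and to $u\propto\Sigma^{1/2}(\beta-\beta^*)$, $z=\Sigma^{-1/2}x$) bounds the second by $(\mu_4\|\beta-\beta^*\|_\Sigma^4)^{3/4}(\mu_4\|v\|_\Sigma^4)^{1/4}\le\mu_4 r^3$. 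Taking the supremum over unit $v$ and then over $\beta\in\Theta(r)$ gives $\delta(r)\le\kappa^{-2}r(\sigma^2+\mu_4 r^2/3)$, as claimed. I do not anticipate a genuine analytic obstacle; the only care needed is organizational: recognizing that the $\tau$-dependence disappears because $\{g'\neq0\}\subseteq\{|t|\ge\kappa\}$, performing the mean-zero cancellation in $\EE[(\varepsilon-s\,x^\T(\beta-\beta^*))^2\mid x]$ \emph{before} applying Markov's inequality, and arranging the conditioning on $x$ so that $\beta-\beta^*$ and $v$ enter only through moments of the form $\EE|x^\T(\beta-\beta^*)|^k|x^\T v|$ — the exact constant $1/3$ (and the cancellation of the leading $\tau$-terms) is what forces the integral expansion rather than a cruder bound $|g(\varepsilon-x^\T(\beta-\beta^*))|+|g(\varepsilon)|$.
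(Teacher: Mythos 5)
Your proposal is correct and follows essentially the same route as the paper: both reduce to the population score difference, express it as an integral of an indicator/derivative along the segment from $\beta^*$ to $\beta$ (your $\EE\{g'(\varepsilon-s\,x^\T(\beta-\beta^*))\mid x\}$ is exactly the paper's $\PP(\kappa<|\varepsilon-t z^\T v|\leq\tau\mid x)$), and then apply Markov's inequality with $\EE(\varepsilon\mid x)=0$, $\EE(\varepsilon^2\mid x)\leq\sigma^2$, followed by Cauchy--Schwarz and H\"older with the kurtosis bound $\mu_4$ to get $\kappa^{-2}r(\sigma^2+\mu_4 r^2/3)$. The only difference is presentational (working with $g=\psi_\tau-\psi_\kappa$ and the dual characterization of $\|\cdot\|_{\Omega}$ rather than with the Hessian integral of the population losses), so no gap remains.
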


Let $0< r_0 \lesssim \sigma$ and set $\delta = 2e^{-u}$, so that $\log p + u \asymp \log(p/\delta)$. Suppose the sample size per machine satisfies $n\gtrsim s\log(p/\delta)$. Then, in view of Lemmas~\ref{lem:hd.grad} and \ref{lem:deltar}, a sufficiently large $\rho$, which is of order 
\$
 \rho \asymp  \max \Bigg\{   r_0 \sqrt{\frac{s \log(p/\delta)}{n}} , s^{-1/2}  \sigma^2  (  \kappa^{-2}  r_0  + \tau^{-1} ) \Bigg\},
\$
guarantees that \eqref{rho.constraint} holds  with probability at least $1-\delta /2$. With this choice of $\rho$, we see that the right-hand of \eqref{hd3} is strictly less than $r_{\loc}$ as long as $\kappa \gtrsim s^{1/2} \{ \lambda^* + r_0 \sqrt{ s\log(p/\delta)/n}  \} + \sigma^2 ( \kappa^{-2} r_0 +   \tau^{-1} )$. 
Since $\kappa \asymp  \sigma\sqrt{n/\log(p/\delta)}$, this holds trivially under the assumed sample size scaling, and thus completes the proof. \qed

 We end this subsection with the proofs of Lemmas~\ref{lem:hd.grad} and \ref{lem:deltar}.

\subsubsection{Proof of Lemma~\ref{lem:hd.grad}}

For any $r_1, r_2$, define the $\ell_1$/$\ell_2$-ball $\mathbb{B}(r_1,r_2)$. Consider the change of variable $v = \beta-\beta^*$, so that $v \in \mathbb B( 4s^{1/2}r , r )$ for $\beta \in \Theta(r) \cap \Lambda$. It follows that
\#
& \sup_{\beta \in \Theta(r) \cap \Lambda} \| \wh D(\beta) - D(\beta) \|_\infty \nn \\
&  \leq \max_{1\leq j\leq p} \sup_{v \in \mathbb B(4s^{1/2} r, r)}  \Bigg| \frac{1}{N}\sum_{i=1}^N (1-\EE )  \underbrace{ \big\{ \psi_\tau(\varepsilon_i - x_i^\T v ) - \psi_\tau(\varepsilon_i) \big\} x_{ij}  }_{=: \phi_{ij}(v)}\Bigg|  =  \max_{1\leq j\leq p} \Phi_j,
\#
where $\Phi_j := \sup_{v \in \mathbb B(4s^{1/2} r, r)}    | (1/N)\sum_{i=1}^N (1-\EE )\phi_{ij}(v)  |$ and $\psi_\tau(u) = \sign(u)\min(|u|,\tau)$. 
By the Lipschitz continuity of $\psi_\tau(\cdot)$, $ \sup_{v \in \mathbb B(4s^{1/2} r, r)}  |\phi_{ij}(v)| \leq  \sup_{v \in \mathbb B(4s^{1/2} r, r)} |x_{i}^\T v | \cdot  |x_{ij}| \leq 4 B^2 s^{1/2} r $ and, for each $v \in \mathbb B(4s^{1/2} r, r)$,
\$
 \EE  \phi_{ij}^2(v)   = \EE  \{  x_{ij}^2 (x_i^\T v)^2  \} \leq \big( \EE  x_{ij}^4 \big)^{1/2} \big\{ \EE  (x_i^\T v )^4 \big\}^{1/2} \leq \sigma_{jj} \mu_4\cdot r^2 .
\$
We then apply Bousquet's version of Talagrand's inequality \citep{B2003} and obtain that, for any $z>0$,
\# \label{Phij.concentration} 
\Phi_j  & \leq \EE  \Phi_j + \sup_{v \in \mathbb B(4s^{1/2} r, r)} \bigl\{ \EE  \phi_{ij}^2(v) \bigr\}^{1/2} \sqrt{\frac{2z}{N}} +4 \sqrt{  \EE  \Phi_j \cdot   B^2 s^{1/2} r \frac{z}{N} }      +  (4/3) B^2 s^{1/2} r   \frac{z}{  N}  \nn \\
& \leq   \EE  \Phi_j + (2 \sigma_{jj} \mu_4)^{1/2}  r \sqrt{\frac{ z}{N}} +4 \sqrt{  \EE  \Phi_j \cdot   B^2 s^{1/2} r \frac{z}{N} }      +  (4/3) B^2 s^{1/2} r   \frac{z}{  N} 
\#
with probability at least $1-2e^{-z}$. For  the expected value $ \EE  \Phi_j$, by Rademacher symmetrization we have
\#
 \EE  \Phi_j \leq 2  \EE   \sup_{v \in \mathbb B(4s^{1/2} r, r)}  \Bigg| \frac{1}{N}  \sum_{i=1}^N e_i \phi_{ij}(v) \Bigg| = 2 \EE \Bigg\{  \EE _e  \sup_{v \in \mathbb B(4s^{1/2} r, r)}  \Bigg| \frac{1}{N}  \sum_{i=1}^N e_i \phi_{ij}(v) \Bigg| \Bigg\} , \nn
\#
where $e_1, \ldots, e_N$ are independent Rademacher random variables. For each $i$, write $\phi_{ij}(v) = \phi_j(x_i^\T v)$, where $\phi_j(\cdot)$ is such that $\phi_j(0)=0$ and $|\phi_j(u) - \phi_j(v)|\leq |x_{ij}| \cdot | u-v| \leq B |u-v|$. It thus follows from Talagrand's contraction principle that 
\$
\EE _e  \sup_{v \in \mathbb B(4s^{1/2} r, r)}  \Bigg| \frac{1}{N}  \sum_{i=1}^N e_i \phi_{ij}(v) \Bigg| \leq 2 B\cdot  \EE _e  \sup_{v \in \mathbb B(4s^{1/2} r, r)}  \Bigg| \frac{1}{N}  \sum_{i=1}^N e_i  x_i^\T v  \Bigg| \leq 8 B s^{1/2} r \cdot \EE _e \Bigg\| \frac{1}{N}  \sum_{i=1}^N e_i x_i \Bigg\|_\infty. 
\$
Again, applying Hoeffding's moment inequality yields $\EE _e  \| (1/N)  \sum_{i=1}^N e_i x_i  \|_\infty \leq B \sqrt{2\log(2p)/N}$.  Putting together the pieces, we conclude that, for $j=1,\ldots, p$,
\#
\EE  \Phi_j \leq 16 B^2  r \sqrt{\frac{2s \log(2p)}{N}} . \nn
\#

Finally, taking $z= \log(2p)+u$ in \eqref{Phij.concentration}, the claimed bound follows from the union bound. \qed

\subsubsection{Proof of Lemma~\ref{lem:deltar}}

Let $\cL_\tau(\beta) = \EE  \wh \cL_\tau(\beta)$ be the population loss, so that 
\$
D_1(\beta) = \nabla \cL_\kappa(\beta) -\nabla \cL_\kappa(\beta^*) ~~\mbox{ and }~~ D(\beta) = \nabla \cL_\tau(\beta) - \nabla \cL_\tau(\beta^*). 
\$
Starting with $D_1(\beta)$, consider the change of variable $v = \Sigma^{1/2}(\beta -\beta^*)$. Then, by the mean value theorem for vector-valued functions,
\$
& \Sigma^{-1/2}D_1(\beta) - \Sigma^{ 1/2} (\beta - \beta^*)  \\
&= \Sigma^{-1/2} \EE  \int_0^1 \nabla^2 \cL_\tau\big( (1-t) \beta^* +  t \beta  \big)  {\rm d} t  \,\Sigma^{-1/2} \cdot v  - v \\
& = - \int_0^1 \EE  \big\{   \pr\big( | \varepsilon - t z^\T v | > \kappa | x \big)    z z^\T  \big\} {\rm d} t  \cdot v.
\$
Similarly, it can be obtained that
\$
 \Sigma^{-1/2}D (\beta) - \Sigma^{ 1/2} (\beta - \beta^*)   = - \int_0^1 \EE  \big\{   \pr\big( | \varepsilon - t z^\T v | > \tau  | x \big)    z z^\T  \big\} {\rm d} t  \cdot v.
\$
Recall that $\tau \geq \kappa >0$. We have 
\$
\Sigma^{-1/2}  \{ D_1(\beta)  -   D (\beta)  \} =  -\int_0^1 \EE  \big\{   \pr\big( \kappa< | \varepsilon - t z^\T v | \leq \tau   | x \big)    z z^\T  \big\} {\rm d} t  \cdot v
\$
By Markov's inequality and the fact that $\EE (\varepsilon | x)=0$, $\pr  ( | \varepsilon - t z^\T v | > \kappa  | x   )  \leq \kappa^{-2}  \{ \EE  (\varepsilon^2 | x) + t^2 (z^\T v)^2    \}\leq \kappa^{-2}  \{ \sigma^2+ t^2 (z^\T v)^2    \}$.
Substituting this into the above bound yields 
\$
 \sup_{\beta \in \Theta(r)}  \|   D_1(\beta) - D(\beta)  \|_{\Omega}    \leq  \kappa^{-2}   r  \bigl(  \sigma^2   + \mu_4  r^2 /3 \big) ,
\$
as desired. \qed

\subsection{Proof of Theorem~\ref{thm:hd.final}}

The proof will be carried out conditioning on the ``good event" $\cE_0(r_0) \cap \cE_*(\lambda_*)$ for some predetermined $0 < r_0, \lambda_* \lesssim \sigma$. 
Given $\delta \in (0,1)$, let the robustification parameters satisfy $\tau \geq  \kappa \asymp \sigma \sqrt{n/\log(p/\delta)}$.  Theorem~\ref{thm:hd.one-step} implies that the first iterate $\wt \beta^{(1)} \in \argmin_{\beta \in \RR^p} \{ \wt \cL^{(1)}(\beta) + \lambda_1 \| \beta \|_1 \}$ with 
\$
 \lambda_1 = 2.5(\lambda_* +  \rho_1) ~~\mbox{ and }~~ \rho_1 \asymp \max \Biggl\{ r_0 \sqrt{\frac{s \log(p/\delta)}{n}} ,  s^{-1/2}  \sigma^2  \tau^{-1} \Biggr\}
\$
satisfies the cone property $\wt \beta^{(1)} \in \Lambda $ and the error bound
\#
 \| \wt \beta^{(1)} - \beta^* \|_{\Sigma}  \leq   C_1  s \sqrt{\log(p/\delta) /n} \cdot r_0  + C_2 (\sigma^2 \tau^{-1} + s^{1/2} \lambda_* )  =: r_1  \label{iter1.ubd}
\#
with probability at least $1-\delta$. In \eqref{iter1.ubd}, we set $\alpha = \alpha(s,p,n,\delta) =  C_1  s \sqrt{\log(p/\delta) /n}$ and $r_* = C_2 (\sigma^2 \tau^{-1} + s^{1/2} \lambda_* )$, so that $r_1 = \alpha r_0 + r_*$. 
Provided the sample size per machine is sufficiently large, namely, $n\gtrsim s^2 \log(p/\delta)$, the contraction factor $\alpha$ is strictly less than 1, and hence the initial estimation error $r_0$ is reduced by a factor of $\alpha$ after one round of communication.

For $t=2,3,\ldots, T$, define the events $\cE_t(r_t) = \{ \wt \beta^{(t)} \in \Theta(r_t)  \cap \Lambda \}$ and radius parameters 
$$
	r_t = \alpha r_{t-1}+ r_* = \alpha^2 r_{t-2}+  (1+\alpha) r_*  = \cdots  =  \alpha^t  r_0 + \frac{1-\alpha^t}{1-\alpha } r_* .
$$
In the $t$-th iteration, we choose the regularization parameter $\lambda_t= 2.5 (\lambda_* +  \rho_t)$ with 
$$
\rho_t \asymp \max \Biggl\{ r_{t-1} \sqrt{\frac{s \log(p/\delta)}{n}} ,  s^{-1/2}  \sigma^2  \tau^{-1} \Biggr\} \asymp   s^{-1/2} \max \bigl\{  \alpha^t  r_0  ,    \sigma^2  \tau^{-1} \bigr\}  .
$$
Commenced with $\wt \beta^{(t-1)}$ at iteration $t\geq 2$, we apply Theorem~\ref{thm:hd.one-step} to obtain that conditioned on event $\cE_{t-1}(r_{t-1}) \cap \cE_*(\lambda_*)$, 
\#
  \wt \beta^{(t)} \in \Lambda  ~~\mbox{ and }~~ \| \wt \beta^{(t)} - \beta^* \|_{\Sigma} \leq  \alpha r_{t-1} + r_* = r_{t}  \label{itert.ubd}
\#
with probability at least $1-\delta$. In other words, event $\cE_t(r_t)$ occurs with probability at least $1-\delta$ conditioned on $\cE_{t-1}(r_{t-1}) \cap \cE_*(\lambda_*)$.

Finally, we choose $T =\lceil \log(r_0 / r_*) / \log(1/\alpha)  \rceil$ so that  $\alpha^T r_0 \leq r_*$. Then, applying \eqref{iter1.ubd}, \eqref{itert.ubd} and the union bound over $t=1,\ldots, T$ yields that, conditioned on $\cE_0(r_0)\cap \cE_*(r_*)$, the $T$-th iterate $\wt \beta^{(T)}$ falls into the cone $\Lambda$ and satisfies the error bound
\#
	\| \wt \beta^{(T)} - \beta^* \|_{\Sigma} \leq  r_T \asymp  r_* \nn
\#
with probability at least $1-T \delta$. This completes the proof of the theorem. \qed

\end{document}